\newtheorem{theorem}{Theorem}[section]
\newtheorem*{namedtheorem}{\theoremname}
\newcommand{\theoremname}{testing}
\newtheorem{lemma}[theorem]{Lemma}
\theoremstyle{definition}
\DeclarePairedDelimiterX{\infdivx}[2]{(}{)}{%
  #1\;\delimsize\|\;#2%
}
\newcommand{\bA}{\boldsymbol{A}}
\newcommand{\bG}{\boldsymbol{G}}
\newcommand{\bH}{\boldsymbol{H}}
\newcommand{\bM}{\boldsymbol{M}}
\newcommand{\bU}{\boldsymbol{U}}
\newcommand{\bV}{\boldsymbol{V}}
\newcommand{\bW}{\boldsymbol{W}}
\newcommand{\bX}{\boldsymbol{X}}
\newcommand{\bY}{\boldsymbol{Y}}
\newcommand{\bZ}{\boldsymbol{Z}}
\newcommand{\ba}{\boldsymbol{a}}
\newcommand{\bg}{\boldsymbol{g}}
\newcommand{\bh}{\boldsymbol{h}}
\newcommand{\br}{\boldsymbol{r}}
\newcommand{\bu}{\boldsymbol{u}}
\newcommand{\bv}{\boldsymbol{v}}
\newcommand{\bx}{\boldsymbol{x}}
\newcommand{\bsigma}{\boldsymbol{\sigma}}
\newcommand{\bgamma}{\boldsymbol{\gamma}}
\newcommand{\btheta}{\boldsymbol{\theta}}
\newcommand{\calA}{\mathcal{A}}
\newcommand{\calC}{\mathcal{C}}
\newcommand{\calE}{\mathcal{E}}
\newcommand{\calL}{\mathcal{L}}
\newcommand{\calM}{\mathcal{M}}
\newcommand{\calP}{\mathcal{P}}
\newcommand{\calX}{\mathcal{X}}
\newcommand{\R}{\mathbb{R}}
\newcommand{\Z}{\mathbb{Z}}
\newcommand{\la}{\langle}
\newcommand{\ra}{\rangle}
\newcommand{\poly}{\operatorname{poly}}
\newcommand{\Alg}{\textsf{Alg}}
\newcommand{\T}[1]{#1^{\mathsf{T}}}
\newcommand{\opnorm}[1]{\|#1\|_2}
\newcommand{\frnorm}[1]{\|#1\|_{\mathsf{F}}}
\newcommand{\E}{\operatorname{E}}
\newtheorem{claim}[theorem]{Claim}
\newcommand{\TV}{\textnormal{TV}}
\newcommand{\KL}{\textnormal{KL}}
\newcommand{\Good}{\textnormal{Good}}
\newcommand{\GoodH}{\textnormal{GoodH}}
\newcommand{\Bayes}{\textnormal{Bayes}}
\newcommand{\Loss}{\textnormal{Loss}}
\renewcommand{\vec}{\operatorname{vec}}
\renewcommand{\epsilon}{\varepsilon}
\title{Lower Bounds on Adaptive Sensing for Matrix Recovery}
\author{%
  Praneeth Kacham \\
  Computer Science Department\\
  Carnegie Mellon University\\
  \texttt{pkacham@cs.cmu.edu} \\
\and
  David P. Woodruff\\
  Computer Science Department \\
  Carnegie Mellon University\\
  \texttt{dwoodruf@cs.cmu.edu} \\
}
\date{}
\begin{document}
\maketitle
\begin{abstract}
We study lower bounds on adaptive sensing algorithms for recovering low rank matrices using linear measurements. Given an $n \times n$ matrix $A$, a general linear measurement $S(A)$, for an $n \times n$ matrix $S$, is just the inner product of $S$ and $A$, each treated as $n^2$-dimensional vectors. By performing as few linear measurements as possible on a rank-$r$ matrix $A$, we hope to construct a matrix $\hat{A}$ that satisfies
    \begin{align}
        \frnorm{A - \hat{A}}^2 \le c\frnorm{A}^2,
        \label{eqn:abstract}
    \end{align}
for a small constant $c$. Here $\frnorm{A}$ denotes the Frobenius norm $(\sum_{i,j} A_{i,j}^2)^{1/2}$. It is commonly assumed that when measuring $A$ with $S$, the response is corrupted with an independent Gaussian random variable of mean $0$ and variance $\sigma^2$. Cand\'es and Plan (IEEE Trans. Inform. Theory 2011) study non-adaptive algorithms for low rank matrix recovery using random linear measurements. They use the restricted isometry property (RIP) of Random Gaussian Matrices to give tractable algorithms to estimate $A$ from the measurements.
    
At the edge of the noise level where recovery is information-theoretically feasible, it is known that their non-adaptive algorithms need to perform $\Omega(n^2)$ measurements, which amounts to reading the entire matrix. An important question is whether adaptivity helps in decreasing the overall number of measurements. While for the related problem of sparse recovery, adaptive algorithms have been extensively studied, as far as we are aware adaptive algorithms and lower bounds on them seem largely unexplored for matrix recovery. We show that any adaptive algorithm that uses $k$ linear measurements in each round and outputs an approximation as in \eqref{eqn:abstract} with probability $\ge 9/10$ must run for $t = \Omega(\log(n^2/k)/\log\log n)$ rounds. Our lower bound shows that \emph{any} adaptive algorithm which uses $n^{2-\beta}$ (for any constant $\beta > 0$) linear measurements in each round must run for $\Omega(\log n/\log\log n)$ rounds to compute a good reconstruction with probability $\ge 9/10$. Hence any adaptive algorithm that has $o(\log n/\log\log n)$ rounds must use an overall $\Omega(n^2)$ linear measurements. Our techniques also readily extend to obtain lower bounds on adaptive algorithms for tensor recovery. 
    
Our hard distribution also allows us to give a measurement-vs-rounds trade-off for many sensing problems in numerical linear algebra, such as spectral norm low rank approximation, Frobenius norm low rank approximation, singular vector approximation, and more.
\end{abstract}

\setcounter{page}{0}
\thispagestyle{empty}
\clearpage

\section{Introduction}
Sparse recovery, also known as compressed sensing, is the study of under-determined systems of equations subject to a sparsity constraint. Suppose we know that an unknown vector $x \in \R^{n}$ has at most $r \ll n$ non-zero entries. We would like to use a measurement matrix $M \in \R^{k \times n}$ to recover the vector $x$ given measurements $y = Mx$. The number $k$ of measurements is an important parameter we would like to optimize as it models the equipment cost in physical settings and running times in computational settings. Ideally one would like for the number $k$ of measurements to scale proportionally to the sparsity of the unknown vector $x$.

One way of modeling sparse recovery is as ``stable sparse recovery''. Here we want a distribution $\calM$ over $k \times n$ matrices such that for any $x \in \R^n$, with probability $\ge 1-\delta$ over $\bM \sim \calM$, we can construct a vector $\hat{x}$ from $\bM x$ such that
\begin{align*}
    \|{x - \hat{x}}\|_p \le (1 + \varepsilon)\min_{r\text{-sparse}\, x'}\|x - x'\|_p. 
\end{align*}
Note that the above formulation is more robust as it does not require the underlying vector $x$ to be exactly $r$-sparse and instead asks to recover the top $r$ coordinates of $x$. 

For $p=2$, it is known that $k = \Theta(\varepsilon^{-1}r\log(n/r))$ measurements is both necessary and sufficient \cite{candes2006stable, gilbert2012approximate, pw11, candes2013sparse}. See \cite{pw12adaptive} and references therein for upper and lower bounds for other values of $p$.

In the same vein, the problem of low rank matrix recovery has been studied. See \cite{candes2011tight, z15} and references therein for earlier work and numerous applications. In this problem, the aim is to recover a \emph{low rank} matrix $A$ using linear measurements. Here we want a distribution $\calM$ over linear operators $M : \R^{n \times n} \rightarrow \R^k$, such that for any \emph{low} rank matrix $A$, with probability $\ge 1 - \delta$ over $\bM \sim \calM$, we can construct a matrix $\hat A$ from $\calM(A)$ such that the reconstruction error $\frnorm{A - \hat{A}}^2$ is small. Note that a linear operator $M : \R^{n \times n} \rightarrow \R^k$ can be equivalently represented as a matrix $M \in \R^{k \times n^2}$ such that $M \cdot \text{vec}(A) = M(A)$ where $\text{vec}(A)$ denotes the appropriate flattening of the matrix $A$ into a vector. We call the rows of $M$ \emph{linear measurements}. Without loss of generality, we can assume that the rows of the matrix $M$ are orthonormal, as the responses for non-orthonormal queries can be obtained via a simple change of basis.

In addition, to model the measurement error that occurs in practice, it is a standard assumption that when querying with $M$, we receive $M \cdot \vec(A) + \bg$, where $\bg$ is a $k$-dimensional vector with independent Gaussian random variables of mean $0$ and variance $\sigma^2$, and we hope to reconstruct $A$ with small error from $M \cdot \vec(A) + \bg$. Clearly, when $A$ has rank $r$, we need to perform $\Omega(nr)$ linear measurements, as the matrix $A$ has $\Theta(nr)$ independent parameters. Hence, the aim is to perform not many more linear measurements than $nr$ while being able to obtain an estimate $\hat{A}$ for $A$ with low estimation error.

Given a rank-$r$ matrix $A$, \cite{candes2011tight} show that if the $k \times n^2$ matrix $\bM$ is constructed with independent standard Gaussian entries, then with probability $\ge 1 - \exp(-cn)$, an estimate $\hat{A}$ can be constructed from $\bM \cdot \vec(A) + \bg$ such that 
    $\frnorm{A - \hat{A}}^2 \le O(nr\sigma^2/k).$ They use the restricted isometry property (RIP) of the Gaussian matrix $\bM$ to obtain algorithms that give an estimate $\hat{A}$. The error bound is intuitive since the reconstruction error increases with increasing noise and proportionally goes down when the number of measurements $k$ increases. 

While we formulated the sparse recovery and matrix recovery problems in a non-adaptive way, there have been works which study adaptive algorithms for sparse recovery. Here we can produce matrices $\bM^{(i)}$ adaptively based on the responses received $\bM^{(1)}x, \bM^{(2)}x, \ldots, \bM^{(i-1)}x$ and the hope is that allowing for adaptive algorithms with a small number of adaptive rounds, we obtain algorithms that overall perform fewer linear measurements than non-adaptive algorithms with the same reconstruction error. It is additionally assumed that the noise across different rounds is independent. For sparse recovery, in the case of $p=2$, it is known that over $O(\log\log n)$ rounds adaptivity,  a total of $O(\varepsilon^{-1}r\log\log(n\log(n/r)))$ linear measurements suffices \cite{IPW11,nakos2018adaptiverecovery},  improving over the requirement of $\Theta(\varepsilon^{-1}r\log(n/r))$ linear measurements for non-adaptive algorithms. For a constant $\varepsilon$, for any number of rounds, it is known that $\Omega(r + \log\log n)$ linear measurements are required \cite{pw12adaptive}. Recently, \cite{kamath-price-limited-adaptivity} gave a lower bound on the total number of measurements if the number of adaptive rounds is constant. 

While there has been a lot of interest in adaptive sparse recovery, both from the algorithms and the lower bounds perspective, the adaptive matrix recovery problem surprisingly does not seem to have any lower bounds. In adaptive matrix recovery, similar to adaptive sparse recovery, one is allowed to query a matrix $M^{(i)}$ in round $i$ based on the responses received in the previous rounds,  and again the hope is that with more rounds of adaptivity, the total number of linear measurements that is to be performed decreases. There is some work that studies adaptive matrix recovery with $2$ rounds of adaptivity (see \cite{zhang2019leveraging} and references therein) but the full landscape of adaptive algorithms for matrix recovery seems unexplored.

We address this from the lower bounds side in this work. We show lower bounds on adaptive algorithms that recover a rank-$r$ matrix of the form $(\alpha/\sqrt{n})\sum_{i=1}^r \bu_i \T{\bv_i}$, where the coordinates of $\bu_i$ and $\bv_i$ are sampled independently from the standard Gaussian distribution. Without loss of generality\footnote{In general, in the sparse recovery and matrix recovery models, the algorithms may make the same query and obtain responses with independent noise added which can be used to say obtain more accurate result using the median/mean estimator. But all the algorithms we are aware of for sparse recovery and matrix recovery do not explore independence of noise across queries in this way.}, we assume that the measurement matrix $M^{(i)}$ in each round $i$ has orthonormal rows. We also assume that for $i \ne j$, the measurement matrices $M^{(i)}\T{(M^{(j)})} = 0$, i.e., measurements across adaptive rounds are orthonormal, since non-orthonormal measurements can be reconstructed from orthonormal measurement matrices by a change of basis.

We now give an alternate way of looking at the adaptive sparse recovery problem: Fix a set of vectors $u_1,\ldots, u_r$ and $v_1,\ldots,v_r$ and let the underlying matrix we want to reconstruct be $(\alpha/\sqrt{n})\sum_{i=1}^r u_i \T{v_i}$ for a large enough constant $\alpha$. In the first round, we query a matrix $\bM^{(1)} \in \R^{k \times n^2}$ drawn from an appropriate distribution and receive the response vector \[\br^{(1)} = \bM^{(1)}\vec((\alpha/\sqrt{n})\sum_{i=1}^{r} u_i \T{v_i}) + \bg^{(1)}\] where $\bg^{(1)}$ is a vector of independent Gaussian random variables with mean $0$ and variance $\sigma^2$. In round $2$, as a function of $\bM^{(1)}$ and $\br^{(1)}$ and our randomness, we query a random matrix $\bM^{(2)}[\br^{(1)}] \in \R^{k \times n^2}$ and receive a response vector
$
    \br^{(2)} = (\bM^{(2)}[\br^{(1)}]) \vec((\alpha/\sqrt{n})\sum_{i=1}^{\br} u_i  \T{v_i}) + \bg^{(2)}
$
where $\bg^{(2)}$ is again a vector with independent Gaussian entries and independent of $\bg^{(1)}$, and so on. Crucially, using the rotational invariance of Gaussian random vectors, if $\bG$ is an $n \times n$ matrix with independent Gaussian random variables with mean $0$ and variance $\sigma^2$, the response $\br^{(1)}$ has the same distribution as
$
    (\bM^{(1)}) \cdot \vec((\alpha/\sqrt{n})\sum_{i=1}^r u_i \T{v_i} + \bG)
$
and as $\bM^{(2)}[\br^{(1)}]$ is chosen to be orthonormal to $\bM^{(1)}$, the distribution of $\br^{(2)}$ conditioned on $\br^{(1)}$ is the same as that of
$
    (\bM^{(2)}[\br^{(1)}]) \cdot (\vec((\alpha/\sqrt{n})\sum_{i=1}^r u_i \T{v_i} + \bG)).
$

Thus, any adaptive matrix recovery algorithm can be seen as performing \emph{non-noisy} adaptive measurements on the matrix $(\alpha/\sqrt{n})\sum_{i=1}^r u_i \T{v_i} + \bG$ where the Gaussian matrix $\bG$ is sampled independently of the measurement algorithm. From the responses the algorithm receives, it then tries to reconstruct the matrix $(\alpha/\sqrt{n})\sum_{i=1}^{r} u_i\T{v_i}$. This way of looking at adaptive sparse recovery immediately yields an adaptive algorithm: when the smallest singular value of $(\alpha/\sqrt{n})\sum_{i=1}^r u_i\T{v_i}$ is a constant times larger than $\opnorm{\bG}$, then the power method with a block size of $r$ outputs an approximation of $(\alpha/\sqrt{n})\sum_{i=1}^r u_i\T{v_i}$ in $O(\log n)$ rounds. Note that $r$ matrix-vector products can be implemented using $nr$ linear measurements. More generally, \citet{gu2015subspace} showed that using power iteration with a block size of $k/n$ (for $k \ge nr$), we can obtain an approximation in $O(\log(n^2/k))$ adaptive rounds. Thus 
the already existing algorithms exhibit a no. of measurements vs no. of rounds trade-off.

From results in random matrix theory, we have that $\opnorm{\bG} \approx \sigma\sqrt{n}$ with high probability. And as we are interested in reconstruction when the vectors $u_1,\ldots,u_r$ and $v_1,\ldots,v_r$ follow the Gaussian distribution we also have that $\opnorm{u_i} \approx \opnorm{v_i} \approx \sqrt{n}$ simultaneously with large probability. Thus, to make the extraction of $(\alpha/\sqrt{n})\sum_{i=1}^r u_i \T{v_i}$ information-theoretically possible, we need to assume that $\alpha \gtrsim \sigma$. Hence, in this work we assume that $\sigma = 1$ and that $\alpha$ is a large enough constant, and we study lower bounds on adaptive matrix recovery algorithms. 

If the vectors $u_1,\ldots,u_r$ and $v_1,\ldots,v_r$ are sampled from the standard $n$ dimensional Gaussian distribution and $r \le n/2$, we also have that with high probability 
$
 \frnorm{(\alpha/\sqrt{n}) \sum_{i=1}^r u_i \T{v_i}}^2 \approx \alpha^2 nr.
$
Now the algorithms of \cite{candes2011tight}, which use a uniform random Gaussian matrix $\bM$ with $m$ rows as the measurement matrix, for $(\alpha/\sqrt{n})\sum_{i=1} u_i \T{v_i}$ reconstruct a matrix $\hat{A}$ such that
$
    \frnorm{\hat A - (\alpha/\sqrt{n})\sum_{i=1}^{r} u_i \T{v_i}}^2 \le C\frac{nr\sigma_{m}^2}{m}$ where $\sigma_m$ is the measurement error.
As we assumed above that $\sigma = 1$ when measuring with a matrix with orthonormal rows, we assume that the measurement error $\sigma_m$ when measuring with a Gaussian matrix is $n$, as each row of $\bM$ has $n^2$ independent Gaussian coordinates and therefore has a norm $\approx n$. Thus, using reconstruction algorithms from \cite{candes2011tight}, we obtain a matrix $\hat{A}$ satisfying
$
    \frnorm{\hat{A} - (\alpha/\sqrt{n})\sum_{i=1}^r u_i \T{v_i}}^2 \le C \frac{n^3 r}{m}.
$
Now, to make $\frnorm{\hat A - (\alpha/\sqrt{n})\sum_{i=1}^r u_i\T{v_i}}^2 \le (1/10)\frnorm{(\alpha/\sqrt{n})\sum_{i=1}^r u_i\T{v_i}}^2$, we need to set $m = \Theta(n^2/\alpha^2)$. Hence, in the parameter regimes we study, the algorithms of \cite{candes2011tight} need to perform $\Omega(n^2)$ non-adaptive queries, which essentially means that they have to read a constant fraction of the $n^2$ entries in the matrix. While the power method performs $O(nr)$ linear measurements in each round over $O(\log n)$ adaptive rounds to output an approximation $\hat{A}$. The question we ask is:
\begin{center}
    ``Is the power method optimal? Are there algorithms that have $o(\log n)$ adaptive rounds and use $n^{2-\beta}$ measurements in each round?''
\end{center}
We answer this question by showing that any algorithm that has $o(\log n/\log\log n)$ adaptive rounds must use $ \ge n^{2-\beta}$ linear measurements, for any constant $\beta > 0$, in each round. The lower bound shows that power method is essentially optimal if we want to use $n^{2-\beta}$ linear measurements in each round and that any algorithm with $o(\log n/\log \log n)$ adaptive rounds essentially reads the whole matrix.

We further obtain a rounds vs measurements trade-off for many numerical linear algebra problems in the sensing model. In this model, there is an $n \times n$ matrix $A$ with which we can interact only using general linear measurements and we want to solve problems such as spectral norm low rank approximation of $A$, Frobenius norm low rank approximation of $A$, etc. In general, numerical linear algebra algorithms assume that they either have access to the entire matrix or that the matrix is accessible in the matrix-vector product model where one can query a vector $v$ and obtain the result $A \cdot v$. Recently, the vector-matrix-vector product model has received significant attention as well. Linear measurements are more powerful than both the matrix-vector product model and the vector-matrix-vector product model. Any matrix vector product $A \cdot v$ can be computed using $n$ linear measurements of $A$ and any vector-matrix-vector product $\T{u}A v$ can be computed using a single linear measurement of $A$. Thus, the model of general linear measurements may lead to faster algorithms. 

For certain problems in numerical linear algebra, general linear measurements are significantly more powerful than the matrix-vector product model. Indeed, for computing the trace of an $n \times n$ matrix $A$, one can do this exactly with a single deterministic general linear measurement, just by adding up the diagonal entries of $A$. However, in the matrix-vector product model, it is known that $\Omega(n)$ matrix-vector products are needed to compute the trace exactly \cite{SWYZ21}, even if one uses randomization. A number of problems were studied in the vector-matrix-vector product model in \cite{RWZ20}, and in \cite{WWZ14} it was shown that to approximate the trace of $A$ up to a $(1+\epsilon)$-factor with probability $1-\delta$, one needs $\Omega(\epsilon^{-2} \log(1/\delta))$ queries. This contrasts sharply with the single deterministic general linear measurement for computing the trace exactly. Thus, there are good reasons to conjecture that general linear measurements may lead to algorithms requiring fewer rounds of adaptivity compared to algorithms in the matrix-vector product query model. Surprisingly, our lower bounds show that for many numerical linear algebra problems, linear measurements do not give much advantage over matrix-vector products.

\subsection{Our Results}
We assume that there is an unknown rank-$r$ matrix $A \in \R^{n \times n}$ to be recovered. Given any linear measurement $q \in \R^{n^2}$, we receive a response $\la q, \vec(A)\ra + \bg$, where $\bg \sim N(0, \opnorm{q}^2)$. We further assume that the noise for two different measurements is independent. Without loss of generality, we also assume throughout that all the queries an algorithm makes across all rounds form a matrix with orthonormal rows. Our main result for adaptive matrix sensing is as follows:
\begin{theorem}
There exists a constant $c$ such that any randomized algorithm which makes $k \ge nr$ noisy linear measurements of an arbitrary rank-$r$ matrix $A$ with $\frnorm{A}^2 = \Theta(nr)$ in each of $t$ rounds, and outputs an estimate $\hat{A}$ satisfying
$
    \frnorm{A - \hat{A}}^2 \le c\frnorm{A}^2
$
with probability $\ge 9/10$ over the randomness of the algorithm and the Gaussian noise, requires $t = \Omega(\log(n^2/k)/(\log\log n))$. 
\label{thm:main-recovery-theorem}
\end{theorem}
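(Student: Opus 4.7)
The plan is to exhibit as the hard instance the distribution $\bA = (\alpha/\sqrt n)\sum_{i=1}^r \bu_i\bv_i^{\mathsf T}$ with independent $\bu_i,\bv_i \sim N(0,I_n)$, and use the reduction outlined in the introduction: by rotational invariance of the Gaussian noise together with cross-round orthonormality, adaptive noisy measurements of $\bA$ are equivalent to adaptive noiseless measurements of $\bA + \bG$ with $\bG$ having i.i.d.\ $N(0,1)$ entries. After conditioning on the standard good event ($\|\bu_i\|,\|\bv_i\| = \Theta(\sqrt n)$, near-orthogonality of the factors, $\opnorm{\bG} = O(\sqrt n)$, and $\frnorm{\bA}^2 = \Theta(\alpha^2 nr)$), I would track the potential $\Phi_i = \frnorm{P_{V_i}\bA}^2$, where $V_i$ is the subspace of $\R^{n^2}$ spanned by the flattened queries through round $i$ (so $\dim V_i = ik$). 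A short noise-masking argument then shows that successful reconstruction, $\frnorm{\bA - \hat A}^2 \le c\frnorm{\bA}^2$, forces $\Phi_t = \Omega(\frnorm{\bA}^2)$: anything in $V_t^\perp$ is drowned out by the independent $P_{V_t^\perp}\bG$ living in $\Omega(n^2)$ free dimensions.

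The base case is immediate: since $\bM^{(1)}$ is independent of $\bA$, rotational invariance and standard Johnson--Lindenstrauss-type concentration give $\Phi_1 = O((k/n^2)\,\frnorm{\bA}^2)$ with high probability. The core of the argument will be a per-round inductive lemma of the form
\[
    \Phi_{i+1} \;\le\; (\log n)^{C}\,\Phi_i \;+\; C\,(k/n^2)\,\frnorm{\bA}^2,
\]
for some universal constant $C$, holding with probability $1 - 1/\poly(n)$ conditional on the analogous event in earlier rounds. Unrolling this recursion $t$ times and combining with the base case yields $\Phi_t \le C'(\log n)^{Ct}(k/n^2)\frnorm{\bA}^2$, and imposing $\Phi_t = \Omega(\frnorm{\bA}^2)$ forces $(\log n)^{Ct} \gtrsim n^2/k$, that is, $t = \Omega(\log(n^2/k)/\log\log n)$, which is the claimed bound.

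The main obstacle is proving this per-round lemma. The intuition is that, given the responses so far, the algorithm's Bayesian posterior about the uncaptured component $P_{V_i^\perp}\bA$ is controlled by the low-rank coupling of captured and uncaptured components through the Gaussian factors. Concretely I plan to: (i) compute the conditional distribution of $\bA$ given $P_{V_i}(\bA+\bG)$ by Gaussian conditioning on the factors, exploiting that conditional on $\{\bv_i\}$ the matrix $\bA$ is linear Gaussian in $\{\bu_i\}$ and vice versa; (ii) bound the squared Frobenius mass of the posterior mean on $V_i^\perp$ by $(\log n)^C\,\Phi_i$ using the well-conditioning of the factor Gram matrices; and (iii) show that the posterior covariance contributes only the baseline $O((k/n^2)\,\frnorm{\bA}^2)$ of ``free'' random-direction alignment in any $k$-dimensional subspace of $V_i^\perp$. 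Together these control the signal harvestable by any \emph{fixed} $k$-dimensional $W \subseteq V_i^\perp$, and an $\epsilon$-net/chaining argument over the adversary's adaptive choice of $W$ absorbs the supremum. It is precisely this supremum over exponentially many query choices, via a union bound, that introduces the $\poly\log n$ multiplicative loss per round instead of a constant factor, and thereby produces the $\log\log n$ in the final bound. Once the per-round lemma is in hand, the rest of the proof is routine telescoping.
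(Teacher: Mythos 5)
Your high-level plan agrees with the paper's: the same hard instance $\bA + \bG$ with $\bA = (\alpha/\sqrt n)\sum_i \bu_i\bv_i^{\mathsf T}$, the same rotational-invariance reduction from noisy to noiseless adaptive measurements, the same potential (squared projected mass of the plant onto the query space) tracked round by round, and the same telescoping that converts per-round multiplicative growth of $\poly\log n$ into the $\log\log n$ in the denominator. The place where you and the paper diverge is precisely the crux: how to prove the per-round lemma in the presence of adaptivity.

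Your proposed mechanism — explicit Gaussian conditioning on the factors to compute a posterior, followed by an $\epsilon$-net/chaining argument over the adversary's adaptive choice of the next query subspace $W$ — has two concrete problems. First, the union-bound cost of the $\epsilon$-net is far larger than you estimate. The space of $k$-dimensional subspaces of $V_i^\perp \subseteq \R^{n^2}$ is a Grassmannian of dimension $\Theta(k(n^2-ik)) = \Theta(kn^2)$, so any covering to constant resolution has $\exp(\Theta(kn^2))$ points. The tail bound you have available for a \emph{fixed} $Q$ with $k$ orthonormal rows is of the form $\Pr[\|Q(\bu\otimes\bv)\|_2^2 \ge Ck] \le \exp(-\beta Ck/n)$ (this is the Vershynin random-tensor concentration the paper uses, which you don't invoke and which is essential — JL-type bounds do not give the $\exp(-t^2/n)$ scale). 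To survive the union bound you would need $Ck/n \gtrsim kn^2$, i.e.\ $C \gtrsim n^3$, which is catastrophically larger than the $\poly\log n$ you claim. Chaining does not rescue this: the metric entropy of the Grassmannian is too large at every scale. Second, the posterior of $\bA$ given $P_{V_i}(\bA+\bG)$ in the bilinear model is \emph{not} Gaussian; conditioning on $\{\bv_i\}$ (or $\{\bu_i\}$) linearizes only one factor and does not yield the joint posterior mean or covariance in a form where step (ii) and step (iii) can be carried out cleanly, and in particular the claim that the posterior-mean mass on $V_i^\perp$ is bounded by $(\log n)^C \Phi_i$ is left unsupported.

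The paper avoids both obstacles. It does not compute a posterior and does not do any $\epsilon$-net. Instead it casts the adversary's choice of the next query matrix as a \emph{decision rule} in a Bayesian estimation problem, bounds the $f$-informativity $I(\calP^{(j)}, w^{(j)})$ (i.e., the average KL divergence of the response distribution from a product of independent $N(0,I_k)$'s) round by round using the chain rule plus the Vershynin tail bound, and then applies the generalized Fano inequality (\cite{bayes-risk}) to conclude that with high probability no decision rule — equivalently, no adaptive choice of $Q^{(j+1)}[\bh^{(j)}]$ — can find a query with large $\|Q(\bu\otimes\bv)\|^2$. Fano handles arbitrary deterministic rules at no combinatorial cost, so there is no exponential union bound to pay for. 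The $\poly\log n$ factor per round arises elsewhere: one needs the per-round failure probability to be $O(1/\gamma)$ with $\gamma = \Theta(\log n)$ so that it unions over $\Theta(\log n)$ rounds, and the threshold one can certify by Fano grows like $\gamma^2$, which is where the multiplicative $\poly\log n$ growth comes from. Finally, the paper does not run the argument directly for rank $r$; it proves the round lower bound for $r=1$ (Lemma~\ref{lma:rank-1-statement}) and then reduces rank-$r$ plant estimation to rank-$1$ by planting the hidden rank-$1$ matrix together with $r-1$ fresh simulated rank-$1$ spikes; handling $r>1$ directly, as you propose, would force you to reason about the conditioning of the random factor Gram matrices inside your per-round lemma, which adds another layer of difficulty.

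So: the architecture of your proof is correct and matches the paper, but the per-round lemma as you have sketched it would not go through, both because of the size of the $\epsilon$-net and because the posterior is not Gaussian. Replacing the posterior-plus-net strategy with a KL/Fano information bound, and importing the random-tensor tail inequality, is what is actually needed.
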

\paragraph{Dependence on noise} In our results, we assumed that given a linear measurement $q \in \R^{n^2}$, the response is distributed as $N(\la q, \vec(A)\ra, \opnorm{q}^2)$. Our lower bounds also hold when the response is distributed as $N(\la q, \vec(A)\ra, \sigma^2\opnorm{q}^2)$ for any parameter $\sigma$ such that $c' \le \sigma \le 1$, where $c'$ is a constant. This can be seen by simply scaling the matrix $A$ in the theorem above and adjusting the constants while proving the theorem.

\paragraph{Tensor Recovery} The problem of tensor recovery with linear measurements has also been studied (see \cite{tensor-recovery, grotheer2021iterative} and references therein) where given a low rank tensor, the task is to recover an approximation to the tensor with few linear measurements. Our techniques can also be used to obtain lower bounds on adaptive algorithms for tensor recovery. Our main tool, Lemma~\ref{lma:rank-1-statement}, readily extends to tensors of higher orders by using the corresponding tail bounds from \cite{vershynin-random-tensors}.


\paragraph{Numerical Linear Algebra} We also derive lower bounds for many numerical linear algebra problems in the linear measurements model. Table~\ref{table:applications} shows our lower bounds on the number of adaptive rounds required for any randomized algorithm using $k$ general linear measurements in each round. See Section~\ref{apx:applications} for precise statements and proofs.
\begin{table}
\centering
\begin{tabular}{l c l}\hline
     Application & Failure Probability & Lower Bound\\ \hline 
     $2$-approximate spectral LRA & $0.1$ & $c\log(n^2/k)/\log\log n$\\
     $2$-approximate spectral LRA & $1/\poly(n)$ & $c\log(n^2/k)$\\
     $2$-approximate Schatten $p$ LRA & $0.1$ & $ \frac{c\log(n^2/k)}{(1/p)\log(n) + \log\log n}$\\
     $2$-approximate Ky-Fan $p$ LRA & $0.1$ & $\frac{c\log(n^2/k)}{\log(p) + \log\log n}$\\
     $1 + 1/n$-approximate Frobenius LRA & $0.1$ & $c\log(n^2/k)/\log\log n$\\
     $0.1$-approximate $i$-th singular vectors & $0.1$ &  $c\log(n^2/k)/\log\log n$ \\ 
     $2$-approximate spectral reduced rank regression & $0.1$ & $c\log(n^2/k)/\log\log n$\\\hline
\end{tabular}        
\caption{Number of rounds lower bound for algorithms using $k$ general linear measurements in each round. Our bound for  $2$-approximate spectral LRA algorithm with constant failure probability is optimal up to a $\log \log n$ factor, and our $2$-approximate spectral low rank approximation (LRA) lower bound for algorithms with failure probability $1/\poly(n)$ is optimal up to a constant factor. }
\label{table:applications}
\end{table}

\subsection{Notation}
Throughout the paper, we use uppercase letters such as $G,M,Q$ to denote matrices and lowercase letters such as $u,v$ to denote vectors. For vectors $u,v \in \R^{n}$, $u \otimes v \in \R^{n^2}$ denotes the tensor product of $u$ and $v$. For an arbitrary matrix $M \in \R^{m \times n}$, the vector $\vec(M) \in \R^{mn}$ denotes a flattening of the matrix $M$ with the convention that $\vec(u\T{v}) = u \otimes v$. We use boldface symbols such as $\bM, \bG, \bu, \bv$ to denote that these objects are explicitly sampled from an appropriate distribution. We use $\bg_k$ to denote a multivariate Gaussian in $k$ dimensions where each coordinate is independently sampled from $N(0,1)$. We also use $G^{(j)}$ to denote a collection of $j$ independent multivariate Gaussian random variables of appropriate dimensions.
    
\subsection{Our Techniques}
Using the rotational invariance of the Gaussian distribution, we argued that any adaptive randomized low rank matrix recovery algorithm with access to a hidden matrix $(\alpha/\sqrt{n})\sum_{i=1}^r u_i \T{v_i}$ using noisy linear measurements can be seen as a randomized algorithm that has access to a random matrix $(\alpha/\sqrt{n})\sum_{i=1}^r u_i\T{v_i} + \bG$ using \emph{perfect} linear measurements where each coordinate of $\bG$ is independently sampled from a Gaussian distribution.

Let $\calA$ be any algorithm that satisfies the matrix recovery guarantees with, say, a success probability $\ge 9/10$. Let $\calA(X, \bsigma, \bgamma)$ be the output of the randomized algorithm $\calA$, where the hidden matrix is $X$, the random seed of $\calA$ is denoted by $\bsigma$, and $\bgamma$ denotes the measurement randomness. We have that if $X$ has rank $r$ and satisfies $\frnorm{X}^2 = \Theta(nr)$, then
\begin{align*}
    \Pr_{\bsigma, \bgamma}[\calA(X, \bsigma, \bgamma)\, \text{is correct}] \ge 9/10.
\end{align*}
We say $\calA(X, \bsigma, \bgamma)$ is correct when the output $\hat{X}$ satisfies $\frnorm{\hat X - X}^2 \le (1/100)\frnorm{X}^2$.
By the above reduction, from $\calA$ we have a randomized algorithm $\calA'$ that runs on the random matrix $X + \bG$ with access to exact linear measurements and outputs a correct reconstruction $\hat{X}$ with probability $\ge 9/10$ if $X$ has rank $r$ and $\frnorm{X}^2 = \Theta(nr)$. Thus, for all such $X$,
\begin{align*}
    \Pr_{\bG, \bsigma}[\calA'(X + \bG, \bsigma)\, \text{is correct}] \ge 9/10.
\end{align*}
Here $\bsigma$ denotes the randomness used by the algorithm $\calA'$.
Now, if $\bX$ is a random matrix constructed as $(\alpha/\sqrt{n})\sum_{i=1}^r \bu_i \T{\bv_i}$ with $\bu_i, \bv_i$ being random vectors with independent Gaussian entries of mean $0$ and variance $1$, then with probability $\ge 99/100$, $\frnorm{\bX}^2 = \Theta(nr)$. Thus,
$
    \Pr_{\bX, \bG, \sigma}[\calA'(\bX + \bG, \bsigma)\, \text{is correct}] \ge 8/10.
$
Hence, there exists some fixed $\sigma$ such that
\begin{align*}
    \Pr_{\bX, \bG}[\calA'(\bX + \bG, \sigma)\, \text{is correct}] \ge 8/10.
\end{align*}
Thus, the existence of a randomized algorithm that solves low rank matrix recovery as in Theorem~\ref{thm:main-recovery-theorem} implies the existence of a deterministic algorithm which given access to perfect linear measurements of random matrix $(\alpha/\sqrt{n})\sum_{i=1}^r \bu_i \T{\bv_i} + \bG$ outputs a reconstruction of $(\alpha/\sqrt{n})\sum_{i=1}^r \bu_i\T{\bv_i}$ with probability $\ge 8/10$. This is essentially a reduction from randomized algorithms to deterministic algorithms using Yao's lemma. From here on, we prove lower bounds on such deterministic algorithms and conclude the lower bounds in Theorem~\ref{thm:main-recovery-theorem}. For simplicity, we explain the proof of our lower bound for $r = 1$ here and extend to general $r$ later. We consider the random matrix $\bG + (\alpha/\sqrt{n})\bu\T{\bv}$ and show how the lower bound proof proceeds.

Note that the matrix $\bA \coloneqq \bG + (\alpha/\sqrt{n})\bu\T{\bv} \in \R^{n \times n}$ can be flattened to the vector $\ba = \bg + (\alpha/\sqrt{n})(\bu \otimes \bv) \in \R^{n^2}$. Also, a general linear measurement, which we call a query $Q \in \R^{n \times n}$,  can be vectorized to $q = \vec(Q) \in \R^{n^2}$ with $Q(\bA) = \la q, \ba\ra$. Fix any deterministic algorithm. In the first round, the algorithm starts with a fixed matrix $Q^{(1)} \in \R^{k \times n^2}$ that corresponds to the $k$ queries and receives the response $\br^{(1)} \coloneqq Q^{(1)}\ba$. Then, as a function of the response $\br^{(1)}$ in the first round, the algorithm picks a matrix $Q^{(2)}[\br^{(1)}]$ in the second round and receives the response $\br^{(2)} \coloneqq Q^{(2)}[\br^{(1)}] \cdot \ba$. Similarly, in the $i$-th round, the deterministic algorithm picks a matrix $Q^{(i)}[\br^{(1)},\ldots,\br^{(i-1)}] \in \R^{k \times n^2}$ as a function of $\br^{(1)},\ldots,\br^{(i-1)}$ and receives the response $\br^{(i)} \coloneqq Q^{(i)}[\br^{(1)},\ldots,\br^{(i-1)}] \cdot \ba$. Note that we assumed that the query matrices $Q^{(i)}$ chosen by the algorithm have orthonormal rows and also that $Q^{(i)}\T{(Q^{(j)})} = 0$, i.e., the queries across rounds are also orthonormal. 

For a fixed $u,v \in \R^n$, we see that the response $\br^{(1)} = Q^{(1)}\bg + (\alpha/\sqrt{n}) Q^{(1)} (u \otimes v)$. As the matrix $Q^{(1)}$ has orthonormal rows, the random variable $Q^{(1)}\bg \equiv \bg_k$, where $\bg_k \sim N(0, I_k)$ is drawn from a mean-zero normal distribution with identity covariance. Thus, for fixed $u,v$, the distribution of the first round responses to the algorithm is $N((\alpha/\sqrt{n})Q^{(1)}(u \otimes v), I_k)$. Now the key observation is that $\opnorm{(\alpha/\sqrt{n})Q^{(1)}(\bu \otimes \bv)}^2 = \Theta(\alpha^2 k/n)$ with high probability over the inputs $(\bu,\bv)$. This uses a recent concentration result for random tensors due to \cite{vershynin-random-tensors}, and critically uses the fact that $Q^{(1)}$ has operator norm $1$ and Frobenius norm $\sqrt{k}$. This means that for a large fraction of $(u,v)$ pairs, the distribution of the responses seen by the algorithm in the first round is close to $N(0, I_k)$, and therefore just looking at the response $\br^{(1)}$, the algorithm cannot have a lot of ``information'' about which $(u,v)$ pair is involved in the matrix that is unknown to the algorithm. So the query chosen in the second round $Q^{(2)}[\br^{(1)}]$ cannot have a ``large'' value of $Q^{(2)}[\br^{(1)}] (u \otimes v)$ for most inputs $(u,v)$, with a high probability over the Gaussian component of the matrix.

Suppose the value of $Q^{(2)}[\br^{(1)}](u \otimes v)$ is small. Again, $\br^{(2)} = Q^{(2)}[\br^{(1)}]\bg + (\alpha/\sqrt{n}) Q^{(2)}[\br^{(1)}](u \otimes v)$. Crucially, as the queries in round $2$ are orthogonal to the queries in round $1$, we have by the rotational invariance of the Gaussian distribution that $Q^{(2)}[\br^{(1)}]\bg$ is independent of $Q^{(1)}\bg$, and that $Q^{(2)}[\br^{(1)}]\bg$ is distributed as $N(0, I_k)$. So, for a fixed $u,v$, conditioned on the first round response $\br^{(1)}$, the distribution of the second round response $\br^{(2)}$ is given by $N((\alpha/\sqrt{n})Q^{(2)}[\br^{(1)}] (u \otimes v), I_k) \approx N(0, I_k)$ using the assumption that $Q^{(2)}[\br^{(1)}](u \otimes v)$ is small. We again have that for a large fraction of pairs $(u,v)$ for which $Q^{(2)}[\br^{(1)}](u \otimes v)$ is small, the distribution of the second round response is also close to $N(0,I_k)$. Therefore, the algorithm again does not gain a lot of information about which $(u,v)$ pair is involved in the matrix, and the third round query $Q^{(3)}[\br^{(1)}, \br^{(2)}]$ cannot have a large value of $\opnorm{Q^{(3)}[\br^{(1)}, \br^{(2)}](u \otimes v)}$. The proof proceeds similarly for further rounds.

To formalize the above intuitive idea, we use Bayes risk lower bounds \cite{bayes-risk}. We show that with a large probability over the input matrix, the squared projection of $\bu \otimes \bv$ onto the query space of the algorithm is upper bounded and we use an iterative argument to show that an upper bound on the information up until round $i$ can in turn be used to upper bound the information up until round $i+1$. Bayes risk bounds are very general and let us obtain upper bounds on the information learned by a deterministic learner. Concretely, let $\Theta$ be a parameter space and $\calP = \set{P_{\theta}\, : \theta \in \Theta}$ be a set of distributions, one for each $\theta \in \Theta$. Let $w$ be a distribution over $\Theta$. We sample $\boldsymbol{\theta} \sim w$ and then $\bx \sim P_{\btheta}$ and provide the learner with $\bx$. Given an action space $\calA$, the learner uses a deterministic decision rule $\mathfrak{d} : \calX \rightarrow \calA$ to minimize a $0$-$1$ loss function $L : \Theta \times \calA \rightarrow \set{0,1}$ in expectation, i.e.,
 $   \E_{\btheta \sim w}[\E_{\bx \sim P_{\btheta}}L(\btheta, \mathfrak{d}(\bx))].$ 
Let $R_{\Bayes}(L,\Theta,w) = \inf_{\mathfrak{d}} E_{\btheta \sim w}[E_{\bx \sim P_{\btheta}} L(\btheta, \mathfrak{d}(\bx))]$ be the loss achievable by the best deterministic decision rule $\mathfrak{d}$. Bayes risk lower bounds let us obtain lower bounds on $R_{\Bayes}$.

In our setting after round $1$, we have $\Theta = \set{(u,v) : u,v\in \R^n}$, $w$ is the joint distribution of two independent standard Gaussian random variables in $\R^{n}$  and for each $(u,v) \in \Theta$ we let $P_{uv}$ be the distribution of $\br^{(1)} = \bg_k + (\alpha/\sqrt{n}) Q^{(1)} (u \otimes v)$, an action space $\calA$ of all $k \times n^2$ matrices with orthonormal rows (corresponding to the queries in the next round), and define a $0$-$1$ loss function
\begin{equation*}
    L((u,v), Q) = \begin{cases}
    0 & \text{if } \opnorm{Q(u \otimes v)}^2 \ge T\\
    1 & \text{if } \opnorm{Q(u \otimes v)}^2 < T
    \end{cases}
\end{equation*}
for an appropriate threshold parameter $T$. By setting $T$ appropriately as a function of $t$, we obtain a Bayes risk lower bound of $R_{\Bayes} \ge 1 - 1/(100t^2)$. Thus, we obtain that for any deterministic decision rule $\mathfrak{d}$, with probability $\ge 1 - 1/10t$ over $(\bu, \bv) \sim w$, we have
\begin{align*}
    \E_{\br^{(1)} \sim P_{\bu\bv}}[L((\bu, \bv), \mathfrak{d}(\br^{(1)}))] \ge 1 - 1/10t
\end{align*}
and in particular, we have that with probability $\ge 1 - 1/10t$ over $(\bu, \bv) \sim w$,
\begin{align}
    \Pr_{\br^{(1)} \sim P_{\bu\bv}}[\opnorm{Q^{(2)}[\br^{(1)}] (\bu \otimes \bv)}^2 < T] \ge 1 - 1/10t.
    \label{eqn:good-condition}
\end{align}
The above statement essentially says that with probability $\ge 1 - 1/10t$ over the inputs $(\bu, \bv)$, the second query $Q^{(2)}[\br^{(1)}]$ chosen by the deterministic algorithm has the property that $\opnorm{Q^{(2)}[\br^{(1)}] (\bu \otimes \bv)}^2 < T$ with probability $\ge 1 - 1/10t$ over the Gaussian $\bG$. In the second round, we restrict our analysis of the algorithm to only those $(u,v)$ which satisfy \eqref{eqn:good-condition}. We again define a distribution $w'$ over the inputs and for each such $(u,v)$ we define a distribution $P_{uv}$ over the round $1$ and round $2$ responses received by the algorithm. We define a new loss function with parameter $T' = \Delta \cdot T$ for a multiplicative factor $\Delta$ and again obtain a statement similar to \eqref{eqn:good-condition} for a large fraction of inputs $(u,v)$ and repeat a similar argument for $t$ rounds and show that there is an $\Omega(1)$ fraction of the inputs for which the squared norm of the projection of $u \otimes v$ onto the query space after $t$ rounds is bounded by $T(t)$ with high probability over the Gaussian part of the input. This gives the result in Theorem~\ref{thm:main-theorem}. Note that $\opnorm{\bu \otimes \bv}^2 = \Omega(n^2)$ with high probability and for any fixed  matrix $Q$ with $k$ orthonormal rows, $\E[\opnorm{Q(\bu \otimes \bv)}^2] = k$ which corresponds to the amount of ``information'' the algorithm starts with. As we show that the ``information'' in each round grows by some multiplicative factor $\Delta$, a number  $\Omega(\log_\Delta(n^2/k))$ of rounds is required to obtain an ``information'' of $\Theta(n^2)$, which is how we obtain our lower bounds. Here information is measured as the squared projection of $\bu \otimes \bv$ onto the query space of the algorithm.

We also extend our results to identifying a rank $r$ spike (sum of $r$ random outer products) corrupted by Gaussian noise. Specifically, we consider the random matrix $\bM = \bG + (\alpha/\sqrt{n})\sum_{i=1}^r \bu_i\T{\bv_i}$ where all the coordinates of $\bG,\bu_i,\bv_i$ are independently sampled from $N(0,1)$. We show that if there is an algorithm that uses $k$ iterations in each round and identifies the spike with probability $\ge 9/10$, then it must run for $\Omega(\log(n^2/k)/\log\log n)$ rounds by appealing to the lower bound we described above for the case of $r=1$. We show that if there is an algorithm for $r > 1$ that requires $t < O(\log(n^2/k)/\log\log n)$ adaptive rounds, then it can be used to solve the rank $1$ spike estimation problem in $t$ rounds as well which contradicts the lower bound.


We then provide lower bounds on approximate algorithms for a host of problems such as spectral norm low rank approximation (LRA), Schatten norm LRA, Ky-Fan norm LRA and reduced rank regression, by showing that algorithms to solve these problems can be used to estimate the spike $\bu\T{\bv}$ in the random matrix $\bG + (\alpha/\sqrt{n})\bu\T{\bv}$,  and then use the aforementioned lower bounds on algorithms that can estimate the spike. Although our hard distribution is supported on non-symmetric matrices, we are still able to obtain lower bounds for algorithms for spectral norm LRA (rank $r \ge 2$) for symmetric matrices as well by considering a suitably defined symmetric random matrix using our hard distribution. Let $\bA \coloneqq \bG + (\alpha/\sqrt{n})\bu\T{\bv}$ be the hard distribution in the case of rank $r = 1$. We symmetrize the matrix by considering, $\bA_{\text{sym}} = \begin{bmatrix}0 & \bA \\ \T{\bA} & 0\end{bmatrix}$. This symmetrization, as opposed to $\bA \T{\bA}$ or $\T{\bA}\bA$, has the advantage that a linear measurement of $\bA_{\text{sym}}$ can be obtained from an appropriately transformed linear measurement of $\bA$, thereby letting us obtain lower bounds for symmetric instances as well in the linear measurements model. However, we cannot obtain lower bounds for rank $1$ spectral norm LRA for symmetric matrices using this symmetrization as the top two singular values of $\bA_{\text{sym}}$ are equal and hence even a zero matrix is a perfect rank $1$ spectral norm LRA for $\bA_{\text{sym}}$ which does not give any information about the plant $\bu \otimes \bv$. 
    
\subsection{Related Work}
As discussed, low rank matrix recovery has been extensively studied (see \cite{candes2011tight} and references therein for earlier work). Relatedly, \cite{ha2015robust} study the Robust PCA problem where the aim is to estimate the sum of a low rank matrix and a sparse matrix from noisy vector products with the hidden matrix. \cite{tanner2023compressed} study the Robust PCA problem when given access to linear measurements with the hidden matrix.

For non-adaptive algorithms for low rank matrix recovery with Gaussian errors, \cite{candes2011tight} show that their selectors based on the restricted isometry property of measurement matrices are optimal up to constant factors in the minimax error sense when the noise follows a Gaussian distribution. Our Theorem~\ref{thm:main-recovery-theorem} extends their lower bounds and shows that if there is any randomized measurement matrix\footnote{Note that we assume a measurement matrix has orthonormal rows and that each measurement is corrupted by Gaussian noise of variance $1$.} $\bM$ with $k$ rows coupled with a recovery algorithm that outputs a reconstruction for any rank $r$ matrix with $\frnorm{A}^2 = \Theta(nr)$, then it must have $k = \Omega(n^{2-o(1)})$. We again note that we give lower bounds even for algorithms with multiple adaptive rounds.

Our technique to show lower bounds is to plant a low rank matrix $(\alpha/\sqrt{n})\sum_{i=1}^r u_i\T{v_i}$ in an $n \times n$ Gaussian matrix so that any ``orthonormal'' access to the plant is corrupted by independent Gaussian noise. Notably this technique has been employed to obtain lower bounds on adaptive algorithms for sparse recovery in \cite{pw12adaptive}. Even in the non-adaptive setting, Li and Woodruff \cite{lw22} use the same hard distribution as we do to obtain sketching lower bounds for approximating Schatten $p$ norms, the operator norm, and the Ky Fan norms. The technique to show their lower bounds is that if a sketching matrix has $k \le c/(r^2s^4)$ rows\footnote{Their lower bound is a bit more general but we state this formulation for simplicity.}, it cannot distinguish between the random matrix $\bG$ and the random matrix $\bG + s\sum_{i=1}^{r}\bu_i \T{\bv_i}$ where all the coordinates of $\bG, \bu_i, \bv_i$ are drawn uniformly at random. They prove this by showing that $d_{\TV}(\calL_1, \calL_2)$ is small if $\calL_1$ is the distribution of $M \cdot \vec(\bG)$ and $\calL_2$ is the distribution of $M \vec(\bG + s\sum_{i=1}^n \bu_i \T{\bv_i})$ for any fixed measurement matrix $M$ with $k \le c/(r^2s^4)$ rows. Their techniques do not extend to the plant estimation in the distribution $\bG + s\sum_{i=1}^r \bu_i \T{\bv_i}$ as the statement they prove only says that over the randomness of $\bG, \bu_i, \bv_i$, the response distribution for $\bG + s\sum_{i=1}^r \bu_i \T{\bv_i}$ is close to the response distribution for $\bG$, but in our case, we want the distributions $\calL_{u,v}$ and $\calL_{u',v'}$ to be indistinguishable,  where $\calL_{u,v}$ is the response distribution for $\bG + (\alpha/\sqrt{n})\sum_{i=1}u_i\T{v_i}$.

Later, \cite{simchowitz2018tight} considered the distribution of the symmetric random matrix $\bW + \lambda \bU\T{\bU}$, where $\bW$ is the $n \times n$ symmetric Wigner matrix $(\bG + \T{\bG})/\sqrt{2}$ and $\bU$ is a uniformly random $n \times r$ matrix with orthonormal columns. They focus on obtaining lower bounds on adaptive algorithms that estimate the spike $\bU$ in the matrix-vector product model. In particular, they show that if $Q$ is a basis for the query space spanned by any deterministic algorithm after querying $k$ adaptive matrix-vector queries, then $\lambda_r(\T{Q}\bU\T{\bU}Q)$ grows as $\sim \lambda^{k/r}$. Using this, they show that any algorithm which, given access to an arbitrary symmetric matrix $A$ in the matrix-vector product query model, must use $\Omega(r\log(n)/\sqrt{\text{gap}})$ adaptive queries to output an $n \times r$ orthonormal matrix $V$ satisfying, for a small enough constant $c$,
\begin{align}
    \la V, AV\ra \ge (1 - c \cdot \text{gap})\sum_{i=1}^r \lambda_i(A),
    \label{eqn:simchowitz-instance}
\end{align}
where $\text{gap} = (\lambda_r(A) - \lambda_{r+1}(A))/\lambda_r(A)$. We note the above guarantee is non-standard in the low rank approximation literature, which instead focuses more on approximation algorithms for Frobenius norm and spectral norm LRA. While in the matrix-vector product model, their hard distribution helps in getting lower bounds for numerical linear algebra problems on symmetric matrices, it seems that our hard distribution $\bG + (\alpha/\sqrt{n})\sum_{i=1}^r \bu_i\T{\bv_i}$ is  easier to understand in the linear measurement model and gives the important property that the noise between rounds is independent, which is what lets us reduce the matrix-recovery problem with noisy measurements to spike estimation in $\bG + (\alpha/\sqrt{n})\sum_{i=1}^r \bu_i\T{\bv_i}$.

Recently, \citet{bakshi2023krylov} obtained a tight lower bound for rank-$1$ spectral norm low rank approximation problem in the matrix-vector product model. They show that $\Omega(\log n/\sqrt{\varepsilon})$ matrix vector products are required to obtain a $1+\varepsilon$ spectral norm low rank approximation. We stress that while our results are not for $1+\varepsilon$ approximations, they hold in the stronger linear measurements model.

\section{Notation and Preliminaries}\label{sec:prelims}
Given an integer $n \ge 1$, we use $[n]$ to denote the set $\set{1,\ldots,n}$. For a vector $v \in \R^n$, $\opnorm{v} \coloneqq (\sum_{i \in [n]} v_i^2)^{1/2}$ denotes the Euclidean norm. For a matrix $A \in \R^{n \times d}$, $\frnorm{A}$ denotes the Frobenius norm $(\sum_{i,j} A_{ij}^2)^{1/2}$ and $\opnorm{A}$ denotes the spectral norm $\sup_{x\ne 0} \opnorm{Ax}/\opnorm{x}$. For $p \ge 2$, we use $\|A\|_{\mathsf{S}_p}$ to denote the Schatten-$p$ norm $(\sum_i \sigma_i^{p})^{1/p}$ and for $p \in [\min(n,d)]$, we use $\|A\|_{\mathsf{F}_p}$ to denote the Ky-Fan $p$ norm $\sum_{i=1}^p \sigma_i$, where $\sigma_1,\ldots,\sigma_{\min(n,d)}$ denote the singular values of the matrix $A$. Given a parameter $k$, the matrix $[A]_k$ denotes the best rank $k$ approximation for matrix $A$ in Frobenius norm. By the Eckart-Young-Mirsky theorem, the best rank $k$ approximation under any unitarily invariant norm, which includes the  spectral norm, Frobenius norm, Schatten norms and Ky-Fan norms, is given by truncating the Singular Value Decomposition of the matrix $A$ to top $k$ singular values. 
\subsection{Bayes Risk Lower Bounds}
	Let $\Theta$ be a set of parameters and $\calP = \set{P_\theta\, : \, \theta \in \Theta}$ be a set of distributions over $\calX$. Let $w$ be the prior distribution on $\Theta$ known to a learner. Suppose $\theta \sim w$, and $x \sim P_\theta$ and that the learner is given $x$. Now the learner wants to learn some information about $\theta$. Let $\mathfrak{d} : \calX \rightarrow \calA$ be a mapping from sample $x$ to action space $\calA$ and $L : \Theta \times \calA \rightarrow \set{0,1}$ be a zero-one loss function. We define
	\begin{equation*}
		R_{\text{Bayes}}(L, \Theta, w) = \inf_{\mathfrak{d}} \int_{\Theta} \E_{x \sim P_\theta}[L(\theta , \mathfrak{d}(x))] w(d\theta) = \inf_{\mathfrak{d}} E_{\theta \sim w}[\E_{x\sim P_{\theta}} L(\theta, \mathfrak{d}(x))],
	\end{equation*}
	and
	\begin{equation*}
		R_0(L, \Theta, w) = \inf_{a \in \calA}\int_{\Theta} L(\theta, a) w(d\theta) = \inf_{a \in \calA}\E_{\theta \sim w}[L(\theta, a)].
	\end{equation*}
	We drop $\Theta$ from the notation when it is clear. 
	\paragraph{$f$-divergence} Let $\calC$ denote the set of all convex functions $f$ with $f(1) = 0$. Let $P$ and $Q$ be two distributions with densities $p$ and $q$ over a common measure $\mu$. Then we have the $f$-divergence $D_{f}(P\|Q)$ defined as
	\begin{equation*}
		D_{f}(P \| Q) = \int f\left(\frac{p}{q}\right)q d\mu + f'(\infty)P\set{q = 0}
	\end{equation*}
	using the convention $0 \cdot \infty = \infty$. For $f = x\log(x)$, the $f$-divergence $D_{f}(P \| Q) = d_{\KL}(P\|Q)$, the Kullback–Leibler (KL) divergence. We frequently use $d_{\KL}(\bX\|\bY)$ for some random variables $\bX$ and $\bY$, which means the KL divergence between the distributions of $\bX$ and $\bY$.
	
	Now we can define the $f$-informativity of a set of distributions $\calP$ with respect to a distribution $w$ as follows:
	\begin{equation*}
		I_f(\calP, w) = \inf_Q \int D_f(P_\theta \| Q) w(d\theta) = \inf_Q \E_{\theta \sim w}[D_f(P_{\theta} \| Q)].
	\end{equation*}
	We use $I(\calP, w)$ to denote the $f$-informativity for $f = x\log x$.
	We finally have the following theorem from \cite{bayes-risk}.
	\begin{theorem}
		For any 0-1 loss function $L$,
		\begin{equation*}
			I_f(\calP, w) \ge \phi_f(R_{\Bayes}, R_0).
		\end{equation*}
		where for $0 \le a,b \le 1$, $\phi_f(a, b)$ denotes the $f$-divergence between distributions $P,Q$ over $\set{0,1}$ with $P\set{1} = a$ and $Q\set{1} = b$.
	\end{theorem}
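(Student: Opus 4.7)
The plan is to combine the data processing inequality for $f$-divergences with the joint convexity of $D_f$, reducing the multi-point comparison to a two-point Bernoulli comparison. Fix an arbitrary decision rule $\mathfrak{d}$ and an arbitrary auxiliary distribution $Q$ on $\calX$. For each $\theta \in \Theta$, view $x \mapsto L(\theta, \mathfrak{d}(x))$ as a deterministic $\{0,1\}$-valued statistic. Under this push-forward, $P_\theta$ becomes $\mathrm{Bern}(p_\theta)$ with $p_\theta := \E_{\bx \sim P_\theta}[L(\theta, \mathfrak{d}(\bx))]$, and $Q$ becomes $\mathrm{Bern}(q_\theta)$ with $q_\theta := \E_{\bx \sim Q}[L(\theta, \mathfrak{d}(\bx))]$.

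Applying data processing pointwise in $\theta$ yields $D_f(P_\theta \,\|\, Q) \ge \phi_f(p_\theta, q_\theta)$. Integrating against $w$ and invoking the joint convexity of $\phi_f$ (which it inherits from the joint convexity of the $f$-divergence together with Jensen) gives
\begin{equation*}
\int D_f(P_\theta \,\|\, Q)\, w(d\theta) \;\ge\; \int \phi_f(p_\theta, q_\theta)\, w(d\theta) \;\ge\; \phi_f(\bar p, \bar q),
\end{equation*}
where $\bar p := \int p_\theta\, w(d\theta)$ and $\bar q := \int q_\theta\, w(d\theta)$. Now $\bar p \ge R_{\Bayes}$ by definition; and by Fubini, $\bar q = \E_{\bx \sim Q}\bigl[\int L(\theta, \mathfrak{d}(\bx))\, w(d\theta)\bigr] \ge R_0$, since for every $\bx$ the action $\mathfrak{d}(\bx)$ is feasible in the infimum defining $R_0$. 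Taking the infimum over $Q$ on the left converts the left-hand side into $I_f(\calP, w)$, and taking a minimizing sequence of rules $\mathfrak{d}$ drives $\bar p \downarrow R_{\Bayes}$.

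It remains to replace $(\bar p, \bar q)$ by $(R_{\Bayes}, R_0)$ inside $\phi_f$ without increasing the value. The key observation is that $R_{\Bayes} \le R_0$ because any constant action is a valid decision rule, so in the relevant regime $\bar p \in [R_{\Bayes}, R_0]$ and $\bar q \in [R_0, 1]$. On this rectangle, $\phi_f(\bar p, \cdot)$ is non-decreasing (it has its unique minimum at the diagonal and is non-decreasing to its right), while $\phi_f(\cdot, R_0)$ is non-increasing, so $\phi_f(\bar p, \bar q) \ge \phi_f(\bar p, R_0) \ge \phi_f(R_{\Bayes}, R_0)$. Chaining the inequalities delivers the theorem. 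The main obstacle I anticipate is executing these two monotonicities together with the limit in $\mathfrak{d}$ and the infimum in $Q$ cleanly: $\phi_f$ can be infinite at the boundary when a Bernoulli parameter is $0$ or $1$, so the extended-real convention $0 \cdot \infty = \infty$ built into the definition of $D_f$ must be carried through, and the joint convexity step needs Jensen applied in a form that handles possibly unbounded integrands. Once these technicalities are discharged, the chain above yields $I_f(\calP, w) \ge \phi_f(R_{\Bayes}, R_0)$ as required.
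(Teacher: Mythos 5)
The paper does not prove this theorem; it is cited from \cite{bayes-risk}, so there is no in-text argument to compare against. Your skeleton is the right one and almost certainly matches the reference: push each $P_\theta$ and the auxiliary $Q$ through the binary statistic $x\mapsto L(\theta,\mathfrak{d}(x))$, apply data processing pointwise in $\theta$, average over $w$, and invoke joint convexity of $\phi_f$ via Jensen to land at $\phi_f(\bar p,\bar q)$. The Fubini step giving $\bar q\ge R_0$ (because $\mathfrak{d}(x)$ is a feasible action for each fixed $x$) and the trivial observation $\bar p\ge R_{\Bayes}$ are both correct.

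The final monotonicity chain, however, has a sign error. You assert $\phi_f(\bar p,R_0)\ge\phi_f(R_{\Bayes},R_0)$ on the grounds that $\phi_f(\cdot,R_0)$ is non-increasing. But on $[0,R_0]$ we have $\partial_a\phi_f(a,R_0)=f'(a/R_0)-f'\bigl((1-a)/(1-R_0)\bigr)\le 0$ since $f'$ is non-decreasing and $a/R_0\le 1\le (1-a)/(1-R_0)$; so moving $a$ \emph{up} from $R_{\Bayes}$ to $\bar p$ can only \emph{decrease} $\phi_f$, giving $\phi_f(\bar p,R_0)\le\phi_f(R_{\Bayes},R_0)$, the reverse of what you wrote. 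Concretely, a non-optimal $\mathfrak{d}$ with $\bar p=R_0$ yields $\phi_f(\bar p,\bar q)\ge \phi_f(R_0,R_0)=0$, which is vacuous. The fix is to not treat $\mathfrak{d}$ as fixed: take a minimizing sequence $\mathfrak{d}_n$ with $\bar p_n\downarrow R_{\Bayes}$. If $R_{\Bayes}=R_0$ then $\phi_f(R_{\Bayes},R_0)=0$ and the bound is trivial; otherwise $\bar p_n\le R_0$ for large $n$, and for every such $n$ and every $Q$,
\begin{equation*}
\int D_f(P_\theta\|Q)\,w(d\theta)\ \ge\ \phi_f(\bar p_n,\bar q_n)\ \ge\ \phi_f(\bar p_n,R_0),
\end{equation*}
where the second inequality is the valid monotonicity in the second argument past the diagonal. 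The left side is independent of $n$, while $\phi_f(\bar p_n,R_0)\uparrow\phi_f(R_{\Bayes},R_0)$ as $\bar p_n\downarrow R_{\Bayes}$; passing to the limit and then taking the infimum over $Q$ completes the proof. Your own caveat about ``the limit in $\mathfrak{d}$'' was pointing at exactly this issue — the argument must go through that limit rather than the (reversed) monotonicity you stated.
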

	
As a corollary of the above theorem, \cite{bayes-risk} prove the following result which is the generalized Fano inequality. 
\begin{theorem}
For any prior $w$ and 0-1 loss function $L$,
    \begin{align*}
    R_{\Bayes} \ge 1 + \frac{I(\calP, w) + \log(1+R_0)}{\log(1- R_0)}.
\end{align*}
\label{thm:generalized-fanos-inequality}
\end{theorem}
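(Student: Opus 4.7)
The plan is to derive the inequality as an immediate consequence of the preceding theorem specialized to $f(x) = x\log x$, followed by a one-variable minimization of the binary KL divergence appearing on its right-hand side.

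First I would invoke the previous theorem with $f(x) = x\log x$. Since $\phi_f(R_{\Bayes}, R_0)$ is then by definition the KL divergence between the Bernoulli distributions of parameters $R_{\Bayes}$ and $R_0$, the theorem immediately yields
\[
    I(\calP, w) \;\ge\; R_{\Bayes}\log\frac{R_{\Bayes}}{R_0} + (1-R_{\Bayes})\log\frac{1-R_{\Bayes}}{1-R_0}.
\]
I would split the second summand as $(1-R_{\Bayes})\log(1-R_{\Bayes}) + (1-R_{\Bayes})\log\frac{1}{1-R_0}$, rewriting the right-hand side as $g(R_{\Bayes}) + (1-R_{\Bayes})\log\frac{1}{1-R_0}$, where $g(a) \coloneqq a\log(a/R_0) + (1-a)\log(1-a)$.

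Second, I would minimize $g$ over $a \in [0,1]$. A direct differentiation gives $g'(a) = \log\frac{a}{R_0(1-a)}$, which vanishes at the unique interior point $a^\star = R_0/(1+R_0)$; since $g''(a) = 1/(a(1-a)) > 0$, the function is convex on $(0,1)$ and $a^\star$ is its global minimum. Plugging $a^\star$ back in, the two $\log\frac{1}{1+R_0}$ pieces combine with coefficients summing to $1$, so $g(a^\star) = -\log(1+R_0)$. Hence $g(R_{\Bayes}) \ge -\log(1+R_0)$ regardless of the actual value of $R_{\Bayes}$.

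Combining the two steps gives
\[
    I(\calP, w) \;\ge\; -\log(1+R_0) + (1-R_{\Bayes})\log\frac{1}{1-R_0},
\]
and solving for $R_{\Bayes}$ rearranges to exactly the stated lower bound. I do not anticipate any real obstacle: everything reduces to the variational characterization of binary KL inherited from the preceding theorem, plus a textbook single-variable minimization. The only step demanding minor care is the final rearrangement, where one must track that $\log(1-R_0) < 0$ for $R_0 \in (0,1)$, so that dividing by it flips the inequality direction and turns the minus sign into the ``$1 +$'' in the theorem's statement.
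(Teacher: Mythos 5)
Your proposal is correct and takes exactly the route the paper indicates: the paper states this result as a corollary of the preceding $f$-divergence theorem from \cite{bayes-risk} without reproducing the derivation, and your proof simply supplies the omitted details by specializing to $f(x) = x\log x$ and carrying out the one-variable convex minimization of $g(a) = a\log(a/R_0) + (1-a)\log(1-a)$ to obtain $g(a) \ge -\log(1+R_0)$. The algebra checks out, including the sign flip when dividing by $\log(1-R_0) < 0$.
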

\subsection{KL Divergence}
We state some properties of the KL-divergence that we use throughout the paper.
\begin{lemma}
Let $P$ and $Q$ be two random variables with probability measures $p$ and $q$ such that $p$ is absolutely continuous with respect to $q$. Let $\tilde{P}$, with probability measure $\tilde{p}$, be the restriction of $P$ to an event $\calE$, i.e., for all $\calE'$, $\Pr[\tilde{P} \in \calE'] = \Pr[P \in \calE \cap \calE']/\Pr(P \in \calE)$ . Then
\begin{align*}
    d_{\KL}(\tilde{p}\|q) \le \frac{1}{\Pr(P \in \calE)}(d_{\KL}(p\|q) + 2).
\end{align*}
\label{lma:kl-conditioning}
\end{lemma}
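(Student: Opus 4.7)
My plan is to unpack the definition of the restricted KL-divergence, split the integral between $\calE$ and $\calE^c$, and then bound the $\calE^c$ piece via Jensen's inequality. Throughout, write $\alpha = \Pr(P \in \calE)$ and note that $\tilde p(x) = p(x)\mathbb{1}[x \in \calE]/\alpha$, so that
\begin{equation*}
  d_{\KL}(\tilde p \| q) \;=\; \int_{\calE}\frac{p(x)}{\alpha}\log\frac{p(x)}{\alpha\, q(x)}\,d\mu(x) \;=\; \frac{1}{\alpha}\int_{\calE} p(x)\log\frac{p(x)}{q(x)}\,d\mu(x) \;-\; \log \alpha.
\end{equation*}
The first step is therefore to control $\int_{\calE} p\log(p/q)\,d\mu$ in terms of $d_{\KL}(p\|q)$, which writes as $\int_{\calE} p\log(p/q)\,d\mu + \int_{\calE^c} p\log(p/q)\,d\mu$.

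Next, I would show a lower bound on $\int_{\calE^c} p\log(p/q)\,d\mu$, or equivalently an upper bound on $\int_{\calE^c} p\log(q/p)\,d\mu$. Renormalizing $p$ over $\calE^c$ to get a probability density and applying Jensen's inequality to the concave function $\log$ yields
\begin{equation*}
  \int_{\calE^c}\frac{p(x)}{1-\alpha}\log\frac{q(x)}{p(x)}\,d\mu(x)\;\le\;\log\!\int_{\calE^c}\frac{q(x)}{1-\alpha}\,d\mu(x)\;\le\;\log\frac{1}{1-\alpha},
\end{equation*}
so that $\int_{\calE^c} p\log(p/q)\,d\mu \ge (1-\alpha)\log(1-\alpha)$. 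Substituting this into the previous display gives $\int_{\calE} p\log(p/q)\,d\mu \le d_{\KL}(p\|q) - (1-\alpha)\log(1-\alpha)$.

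Plugging back into the identity for $d_{\KL}(\tilde p\|q)$ produces
\begin{equation*}
  d_{\KL}(\tilde p \| q) \;\le\; \frac{1}{\alpha}\bigl(d_{\KL}(p\|q) - (1-\alpha)\log(1-\alpha)\bigr) - \log \alpha \;=\; \frac{1}{\alpha}\Bigl(d_{\KL}(p\|q) + H(\alpha)\Bigr),
\end{equation*}
where $H(\alpha) = -\alpha\log\alpha - (1-\alpha)\log(1-\alpha)$ is the binary entropy. The conclusion then follows by crudely bounding $H(\alpha) \le \log 2 \le 2$ for all $\alpha \in (0,1]$, giving $d_{\KL}(\tilde p\|q) \le (d_{\KL}(p\|q) + 2)/\alpha$, as claimed.

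The only mildly delicate step is the Jensen bound, since $\log(p/q)$ is not sign-definite on $\calE^c$, so one has to be careful to apply Jensen to the normalized conditional density $p\mathbb{1}_{\calE^c}/(1-\alpha)$ rather than to $p$ directly; after that, the rest is bookkeeping, using $\int_{\calE^c} q\,d\mu \le 1$ and the elementary bound $H(\alpha)\le 2$. The absolute continuity hypothesis guarantees all the logarithms are well defined in the usual $0\log 0 = 0$ sense.
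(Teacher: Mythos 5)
Your proof is correct and follows essentially the same route as the paper's: split $d_{\KL}(p\|q)$ over $\calE$ and $\calE^c$, apply Jensen's inequality (equivalently, convexity of $-\log$) to lower-bound the $\calE^c$ contribution, and do the bookkeeping. The paper retains the factor $\Pr(Q\in\calE^c)$ in the Jensen bound and absorbs it at the end via the elementary inequality $p\log(q/p)\le 1$, whereas you discard it immediately using $\int_{\calE^c}q\,d\mu\le 1$; this gives you the cleaner intermediate form $\frac{1}{\alpha}\bigl(d_{\KL}(p\|q)+H(\alpha)\bigr)$, which is in fact tighter (since $H(\alpha)\le\log 2<2$) than the stated bound, so your argument is a slight strengthening with identical ideas.
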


\begin{proof}
For $x \in \calE$, we have $\tilde{p}(x) = p(x)/\Pr[P \in \calE]$ and for $x \not\in \calE$, $\tilde{p}(x) = 0$. By definition
\begin{equation*}
	d_{\KL}(\tilde{p} \| q) = \int \tilde{p}(x)\log\left(\frac{\tilde{p}(x)}{q(x)}\right) \mathrm{d}x
\end{equation*}
using the convention $0 \cdot \log(0/1) = 0$. Now,
\begin{align*}
	d_{\KL}(\tilde{p} \| q) &= \int_{\calE} \frac{p(x)}{\Pr(P \in \calE)}\log\left(\frac{p(x)}{\Pr(P \in \calE)q(x)}\right) \mathrm{d}x\\
	&=  \frac{1}{\Pr(P \in \calE)}\int_{\calE} p(x)\log\left(\frac{p(x)}{q(x)}\right) \mathrm{d}x + \log(1/\Pr(P \in \calE)).
\end{align*}
We also have,
\begin{equation*}
	d_{\KL}(p \| q) = \int_{\calE} p(x)\log\left(\frac{p(x)}{q(x)}\right) \mathrm{d}x + \int_{\bar{\calE}} p(x)\log\left(\frac{p(x)}{q(x)}\right) \mathrm{d}x
\end{equation*}
which implies 
\begin{equation*}
	\int_{\calE} p(x)\log\left(\frac{p(x)}{q(x)}\right) \mathrm{d}x = d_{\KL}(p \| q) - \int_{\bar{\calE}} p(x)\log\left(\frac{p(x)}{q(x)}\right) \mathrm{d}x.
\end{equation*}
Thus, it is enough to lower bound $\int_{\bar{\calE}} p(x)\log\left({p(x)}/{q(x)}\right) \mathrm{d}x$ to obtain an upper bound on the left hand side. So,
\begin{align*}
	\int_{\bar\calE} p(x)\log\left(\frac{p(x)}{q(x)}\right)\mathrm{d}x &= - \Pr(P \in \bar{\calE})\int_{\bar\calE} \frac{p(x)}{\Pr(P \in \bar{\calE})}\log\left(\frac{q(x)}{p(x)}\right) \mathrm{d}x\\
	&\ge -\Pr( P \in \bar{\calE})\log\left(\frac{\Pr( Q \in \bar{\calE})}{\Pr( P \in \bar{\calE})}\right)
\end{align*}
where the last inequality follows from $-\log(x)$ being convex. Finally,
\begin{align*}
	&d_{\KL}(\tilde{p}, q) = \frac{1}{\Pr(P \in \calE)}\int_{x \in \calE} p(x)\log\left(\frac{p(x)}{q(x)}\right) \mathrm{d}x + \log(1/\Pr(P \in \calE))\\
	&\le \frac{1}{\Pr(P \in \calE)}d_{\KL}(p, q) + \frac{\Pr(P \in \bar{\calE})}{\Pr(P \in \calE)}\log(\Pr(Q \in \bar{\calE})/\Pr(P \in \bar{\calE})) + \log(1/\Pr(P \in \calE))\\
	&\le \frac{1}{\Pr(P \in \calE)}d_{\KL}(p, q) + \frac{1}{\Pr(P \in \calE)} + \log(1/\Pr(P \in \calE))
\end{align*}
where we used $p\log(q/p) = p\log(q) + p\log(1/p) \le 0 + 1 = 1$ for $0 \le p,q \le 1$.    
\end{proof}
The following lemma states the chain rule for KL-divergence.
\begin{lemma}[Chain Rule]
	Let $(\bX, \bY)$ be jointly distributed random variables and $\bZ, \bW$ be independent random variables. Then
	\begin{align*}
		d_{\KL}((\bX,\bY)\|(\bZ,\bW)) = d_{\KL}(\bX\|\bZ) + \E_{\bX}[d_{\KL}((\bY|\bX)\|\bW)].
	\end{align*}
	\label{lma:chaining}
\end{lemma}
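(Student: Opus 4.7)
The plan is to prove the chain rule by direct calculation from the definition of KL divergence, exploiting the two factorizations that make both sides collapse into the same expression. The only structural ingredient I need is the independence of $\bZ$ and $\bW$, which factors the product measure on the right-hand side; everything else is Bayes's rule for the joint density on the left-hand side and linearity of the integral.

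First I would set up notation: let $p(x,y)$ be the joint density of $(\bX,\bY)$ with respect to some reference product measure $\mu \otimes \nu$, and write its disintegration $p(x,y) = p_{\bX}(x)\, p_{\bY|\bX}(y \mid x)$. Let $q_{\bZ}(x)$ and $q_{\bW}(y)$ be the marginal densities of $\bZ$ and $\bW$. Because $\bZ$ and $\bW$ are independent, the joint density of $(\bZ,\bW)$ factors as $q(x,y) = q_{\bZ}(x)\, q_{\bW}(y)$; this is the only place independence is used, and it is the substantive hypothesis of the lemma.

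Next I would expand the log ratio using these two factorizations:
\begin{equation*}
\log\frac{p(x,y)}{q(x,y)} \;=\; \log\frac{p_{\bX}(x)}{q_{\bZ}(x)} \;+\; \log\frac{p_{\bY|\bX}(y\mid x)}{q_{\bW}(y)}.
\end{equation*}
Substituting into $d_{\KL}((\bX,\bY)\|(\bZ,\bW)) = \int p(x,y) \log(p(x,y)/q(x,y))\, d\mu(x)\, d\nu(y)$ and splitting the integral by linearity produces two pieces. In the first piece, integrating out $y$ against $p_{\bY|\bX}(y\mid x)$ gives $1$, leaving $\int p_{\bX}(x) \log(p_{\bX}(x)/q_{\bZ}(x))\, d\mu(x) = d_{\KL}(\bX\|\bZ)$. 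In the second piece, I would recognize the inner integral in $y$ (with $x$ held fixed) as $d_{\KL}((\bY\mid \bX{=}x)\|\bW)$, and the outer integral in $x$ against $p_{\bX}$ as the expectation $\E_{\bX}[d_{\KL}((\bY\mid \bX)\|\bW)]$.

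There is no real obstacle here: the computation is essentially one line of Bayes's rule plus one line of Fubini. The only sanity check I would flag is a measure-theoretic one, namely that $p$ be absolutely continuous with respect to $q$ so all terms are well defined (otherwise both sides are $+\infty$ and there is nothing to prove); under the convention $0\log 0 = 0$ and the absolute-continuity assumption implicit in writing $d_{\KL}((\bX,\bY)\|(\bZ,\bW))$ at all, the calculation goes through verbatim. Combining the two pieces yields the claimed identity.
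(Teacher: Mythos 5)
Your proof is correct, and it is the canonical argument: factor the joint density on the left via the disintegration $p(x,y)=p_{\bX}(x)\,p_{\bY|\bX}(y\mid x)$, factor the reference measure as $q_{\bZ}(x)\,q_{\bW}(y)$ using independence, split the logarithm, and integrate. The paper itself states this chain rule without proof, treating it as a standard fact about KL divergence, so there is no paper argument to compare against; your derivation is exactly what one would supply if a proof were to be included. One small remark: your phrasing that if absolute continuity fails ``both sides are $+\infty$ and there is nothing to prove'' is slightly glib -- one does need to check that whenever the joint $p$ fails to be absolutely continuous with respect to $q_{\bZ}\otimes q_{\bW}$, at least one of the two terms on the right is indeed $+\infty$ (which is true, since the failure must show up either in the $\bX$-marginal or in $p_{\bY|\bX}(\cdot\mid x)$ versus $q_{\bW}$ on a set of positive $p_{\bX}$-measure), but this is routine and does not affect the substance of the argument.
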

The following lemma states the KL-divergence between two multivariate Gaussians. 
\begin{lemma}[KL-divergence between Gaussians, folklore]
    \begin{equation*}
        d_{\KL}(N(\mu_1, \Sigma_1)\|N(\mu_2, \Sigma_2)) = \frac{1}{2}[\log\frac{\det(\Sigma_2)}{\det(\Sigma_1)} + \operatorname{tr}(\Sigma_2^{-1}\Sigma_1) + \T{(\mu_2 - \mu_1)}\Sigma_2^{-1}(\mu_2 - \mu_1) - n]
    \end{equation*}
    where $n$ is the dimension of the Gaussian. 
\end{lemma}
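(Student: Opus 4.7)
The plan is a direct computation from the definition $d_{\KL}(P \| Q) = \E_{\bX \sim P}[\log(p(\bX)/q(\bX))]$ applied to $P = N(\mu_1, \Sigma_1)$ and $Q = N(\mu_2, \Sigma_2)$. First I would write out the multivariate Gaussian density
\begin{equation*}
p_i(x) = \frac{1}{(2\pi)^{n/2}\det(\Sigma_i)^{1/2}}\exp\!\Bigl(-\tfrac{1}{2}\T{(x-\mu_i)}\Sigma_i^{-1}(x-\mu_i)\Bigr),
\end{equation*}
take logarithms, and form the log-ratio, giving
\begin{equation*}
\log\frac{p_1(x)}{p_2(x)} = \tfrac{1}{2}\log\frac{\det(\Sigma_2)}{\det(\Sigma_1)} - \tfrac{1}{2}\T{(x-\mu_1)}\Sigma_1^{-1}(x-\mu_1) + \tfrac{1}{2}\T{(x-\mu_2)}\Sigma_2^{-1}(x-\mu_2).
\end{equation*}

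Next I would take the expectation over $\bX \sim N(\mu_1, \Sigma_1)$ term by term. The log-determinant term is deterministic and passes through. For the first quadratic, the standard identity $\E[\T{\bY} A \bY] = \operatorname{tr}(A\,\operatorname{Cov}(\bY)) + \T{\E[\bY]}A\,\E[\bY]$ applied to $\bY = \bX - \mu_1$ (which has mean $0$ and covariance $\Sigma_1$) gives $\E\,\T{(\bX-\mu_1)}\Sigma_1^{-1}(\bX-\mu_1) = \operatorname{tr}(\Sigma_1^{-1}\Sigma_1) = n$. For the second quadratic, I write $\bX - \mu_2 = (\bX - \mu_1) + (\mu_1 - \mu_2)$, expand, and use the same identity to obtain $\operatorname{tr}(\Sigma_2^{-1}\Sigma_1) + \T{(\mu_1-\mu_2)}\Sigma_2^{-1}(\mu_1-\mu_2)$; the cross terms vanish because $\E[\bX - \mu_1] = 0$. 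Assembling the pieces and noting $\T{(\mu_1-\mu_2)}\Sigma_2^{-1}(\mu_1-\mu_2) = \T{(\mu_2-\mu_1)}\Sigma_2^{-1}(\mu_2-\mu_1)$ by symmetry of $\Sigma_2^{-1}$ yields the claimed formula.

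There is essentially no obstacle in this proof; it is a mechanical exercise whose only delicate ingredient is the trace identity $\E[\T{\bY}A\bY] = \operatorname{tr}(A\,\operatorname{Cov}(\bY)) + \T{\E[\bY]}A\,\E[\bY]$, which itself follows from linearity of trace and $\E[\bY_i \bY_j] = \operatorname{Cov}(\bY)_{ij} + \E[\bY_i]\E[\bY_j]$. Since the densities are strictly positive everywhere, there are no measure-theoretic subtleties, and the integrals implicit in the expectations are all finite because quadratic forms in a Gaussian have Gaussian tails. This is why the result is standard folklore and does not require any further machinery beyond the definition.
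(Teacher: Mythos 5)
Your derivation is correct and is the standard textbook computation; the paper states this lemma as folklore and gives no proof, so there is no alternative approach to compare against. The only nontrivial step, the trace identity $\E[\T{\bY}A\bY] = \operatorname{tr}(A\operatorname{Cov}(\bY)) + \T{\E[\bY]}A\E[\bY]$, is applied correctly in both places, and the cross-term vanishing and the symmetry of $\Sigma_2^{-1}$ are handled properly, so the proof is complete.
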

In this article, we use the above lemma only for $\Sigma_1 = \Sigma_2 = I$ in which case the above lemma implies that $d_{\KL}(N(\mu_1, I)\|N(\mu_2, I)) = (1/2)\opnorm{\mu_2 - \mu_1}^2$.
\subsection{Properties of Gaussian random matrices and vectors}
We use the following bounds on the norms of Gaussian matrices and vectors throughout the paper: (i) If $\bg$ is an $n$-dimensional Gaussian with all its entries independently drawn from $N(0,1)$, then with probability $\ge 1 - \exp(-\Theta(n))$, $(n/2)\le \opnorm{\bg}^2 \le (3n/2)$, and (ii) If $\bG$ is an $m \times n$ ($m \ge n$) Gaussian matrix with i.i.d. entries from $N(0,1)$, then $\Pr[\opnorm{\bG} \in [\sqrt{m} - \sqrt{n} -t, \sqrt{m} + \sqrt{n} + t]] \ge 1-\exp(-\Theta(t^2))$. See \cite{rudelson2009smallest,laurent2000adaptive} and references therein for proofs.

\paragraph{Rotational Invariance} We frequently use the rotational invariance property of multivariate Gaussian vectors $\bg \sim N(0, I_n)$, which implies that for any $n \times n$  orthogonal matrix $Q$ independent of $\bg$, the vector $Q\bg$ is also distributed as $N(0, I_n)$. Additionally, if $q_1,\ldots,q_n$ denote rows of the matrix $Q$ with $q_1$ chosen arbitrarily independent of $\bg$ and the subsequent vectors $q_i \coloneqq f_i(q_1,\ldots,q_{i-1},\la q_1, \bg\ra,\ldots,\la q_{i-1},\bg \ra)$ for arbitrary functions $f_i$ satisfying $q_i \perp q_1,\ldots,q_{i-1}$, then $\la q_i, \bg\ra$ is distributed as $N(0,1)$ and is independent of $\la q_1, \bg\ra, \ldots, \la q_{i-1}, \bg\ra$. We crucially use this property in the proof of Theorem~\ref{thm:main-theorem}.

\section{Number of Linear Measurements vs Number of Adaptive Rounds}
We now state the main theorem which shows a lower bound on the number of adaptive rounds required for any deterministic algorithm to estimate the \emph{plant} when the input is a random matrix $\bG + (\alpha/\sqrt{n})\sum_{i=1}^r \bu_i\T{\bv_i}$. We use this theorem to prove Theorem~\ref{thm:main-recovery-theorem} and lower bounds for other numerical linear algebra problems.

We prove the lower bound for the rank-$r$ plant estimation by first proving a lower bound on the rank-$1$ plant estimation and then reducing the rank-$1$ recovery problem to rank-$r$ recovery problem. For the rank-$1$ recovery problem, we prove the following lemma:

\begin{lemma}
Given an integer $n$, and parameters $\alpha \ge 10$ and $\gamma \ge 1$, define the $n \times n$ random matrix $\bM = \bG + s\bu \T{\bv}$ for $s \coloneqq \alpha/\sqrt{n}$, where the entries of $\bG$, $\bu$, $\bv$ are drawn independently from the distribution $N(0,1)$. Let $\Alg$ be any $t$-round deterministic algorithm that makes $k \ge n$ adaptive linear measurements in each round. Let $Q^{(j)} \in \R^{k \times n^2}$ denote the matrix of general linear queries made by the algorithm in round $j$ and $Q$ be a matrix with orthonormal rows such that $\text{rowspan}(Q) = \text{rowspan}(Q^{(1)}) + \ldots + \text{rowspan}(Q^{(t)})$. Then for all $t$ such that $O(k\log(n)) \le (K\alpha^2\gamma^2)^tk \le O(n^2)$ for a universal constant $K$,
\begin{align*}
	\Pr_{\bM = \bG + s \bu\T{\bv}}[\opnorm{Q (\bu \otimes \bv)}^2 \le (3K)k(K\alpha^2\gamma^2)^t] \ge (1 - 1/(10\gamma))^t - 1/\poly(n).
\end{align*}
\label{lma:rank-1-statement}
\end{lemma}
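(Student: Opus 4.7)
I would induct on the round index $i$, maintaining the invariant that with probability at least $(1-1/(10\gamma))^i - 1/\poly(n)$ over $(\bu,\bv,\bG)$, the combined orthonormal basis $Q^{(\le i)}$ of the query spans from rounds $1,\dots,i$ satisfies
\begin{equation*}
\opnorm{Q^{(\le i)}(\bu\otimes\bv)}^2 \;\le\; T_i \;:=\; 3Kk\,(K\alpha^2\gamma^2)^i.
\end{equation*}
The statement of the lemma is exactly this invariant at $i=t$.

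\textbf{Base case.} Because the algorithm is deterministic, $Q^{(1)}$ is fixed and has orthonormal rows, so $\frnorm{Q^{(1)}}^2=k$ and $\opnorm{Q^{(1)}}=1$. Writing $\opnorm{Q^{(1)}(\bu\otimes\bv)}^2 = \sum_j (\T{\bu} Q^{(1)}_j \bv)^2$ where $Q^{(1)}_j$ is the $n\times n$ reshape of the $j$-th row, this is a degree-four Gaussian polynomial of mean $k$; random-tensor concentration from \cite{vershynin-random-tensors} (applied with these very values of the Frobenius and spectral norms) yields $\opnorm{Q^{(1)}(\bu\otimes\bv)}^2 \le 3Kk$ with probability $1-1/\poly(n)$, settling the $i=1$ case.

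\textbf{Inductive step.} Assume the invariant at round $i$. I would pass to round $i+1$ through the Bayes-risk framework: let $\Theta_i$ be those $(u,v)$ for which the inductive bound $\opnorm{Q^{(\le i)}(u\otimes v)}^2\le T_i$ holds with conditional probability close to $1$ over $\bG$, let $w_i$ be the standard Gaussian prior restricted to $\Theta_i$, let $P_{uv}$ be the joint law of the first $i$ rounds' responses given $(u,v)$, let the action space be orthonormal $k\times n^2$ matrices with decision rule $\mathfrak{d}$ the algorithm's choice of $Q^{(i+1)}$, and set the loss $L((u,v),Q)=\mathbbm{1}[\opnorm{Q(u\otimes v)}^2 < T_{i+1}]$ with $T_{i+1}=K\alpha^2\gamma^2\,T_i$. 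By orthonormality of queries across rounds and rotational invariance of $\bG$, the response distribution conditioned on $(u,v)$ is Gaussian with identity covariance and mean $s\,Q^{(\le i)}(u\otimes v)$, where $Q^{(\le i)}$ is itself random through its dependence on earlier responses. Combining the chain rule (Lemma~\ref{lma:chaining}) with the Gaussian-KL formula and restricting via Lemma~\ref{lma:kl-conditioning} to $\Theta_i$ gives
\begin{equation*}
I(\calP,w_i) \;\le\; \tfrac{1}{2}s^2\,\opnorm{Q^{(\le i)}(u\otimes v)}^2 \;\le\; \tfrac{\alpha^2}{2n}\,T_i.
\end{equation*}
For $R_0$, the same random-tensor concentration shows that for any fixed orthonormal $Q$, $\Pr[\opnorm{Q(\bu\otimes\bv)}^2\ge T_{i+1}]$ is exponentially small in $\sqrt{T_{i+1}}$, so $R_0 \ge 1-\exp(-\Omega(\sqrt{T_{i+1}}))$. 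The regime $(K\alpha^2\gamma^2)^t k \ge O(k\log n)$ together with $T_i \le O(n^2)$ is tuned so that $-\log(1-R_0)$ dominates $100\gamma^2\cdot I(\calP,w_i)$, which is what the generalized Fano inequality (Theorem~\ref{thm:generalized-fanos-inequality}) needs to give $R_{\Bayes}\ge 1 - 1/(100\gamma^2)$. A Markov step then extracts a $(1-1/(10\gamma))$-fraction of $(u,v)\in\Theta_i$ on which the inductive bound holds at round $i+1$ with conditional probability $\ge 1-1/(10\gamma)$ over the responses; composing with the previous round's loss and absorbing the $1/\poly(n)$ restriction error closes the induction.

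\textbf{Main obstacle.} The most delicate point is handling the doubly-random nature of $Q^{(\le i)}$, which depends on both $\bG$ and the algorithm's history, together with the conditioning on the good event from the inductive hypothesis. Lemma~\ref{lma:kl-conditioning} is the tool for importing the bound on $\opnorm{Q^{(\le i)}(u\otimes v)}^2$ into a clean KL estimate, but one must verify that its $1/\Pr(\cdot)$ inflation does not accumulate badly over $t$ rounds. A secondary bookkeeping point is balancing the two sides of the regime $O(k\log n)\le (K\alpha^2\gamma^2)^t k \le O(n^2)$: the lower side makes $R_0$ close enough to $1$ that Fano yields a useful bound every round, while the upper side controls $I(\calP,w_i)$ so that the geometric shrinkage in success probability stays at $(1-1/(10\gamma))$ per round, producing the final $(1-1/(10\gamma))^t$ factor.
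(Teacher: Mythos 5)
Your proposal follows the paper's approach essentially step for step: induct through the rounds with a geometrically growing threshold $T_i$, use the random-tensor concentration of \cite{vershynin-random-tensors} to bound $R_0$, use the chain rule (Lemma~\ref{lma:chaining}) together with the KL-restriction lemma (Lemma~\ref{lma:kl-conditioning}) to bound the informativity $I(\calP,w_i)$, apply the generalized Fano inequality (Theorem~\ref{thm:generalized-fanos-inequality}) to force $R_{\Bayes}\ge 1-1/(100\gamma^2)$, and close each round with a Markov step that carves out a good set of $(u,v)$ (the paper's $\Good^{(j)}$) and good histories (the paper's $\GoodH^{(j)}_{uv}$) each losing a $(1-1/(10\gamma))$ factor. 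This is the same decomposition and the same key lemmas as the paper's Lemma~\ref{lma:tail-bound} and Lemma~\ref{lma:main-lemma}.

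Two quantitative points are stated imprecisely and are worth tightening. First, the concentration inequality as used in Lemma~\ref{lma:tail-bound} yields $\Pr[\opnorm{Q(\bu\otimes\bv)}^2\ge Ck]\le\exp(-\Omega(Ck/n))$, i.e., the exponent scales like $T_{i+1}/n$, not $\sqrt{T_{i+1}}$. This scaling is what makes $K$ a \emph{universal} constant: with $I\lesssim\alpha^2 T_i/n$ and $-\log(1-R_0)\gtrsim\beta T_{i+1}/n=\beta K\alpha^2\gamma^2 T_i/n$, the ratio in Fano is $\sim 1/(\beta K\gamma^2)$, independent of $n$ and $\alpha$; your $\exp(-\Omega(\sqrt{T_{i+1}}))$ form would force $K$ to grow with $\alpha\gamma$. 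Second, the lower constraint $(K\alpha^2\gamma^2)^t k\ge\Omega(k\log n)$ is not what makes the Fano step go through (that is handled purely by taking $K$ large); it enters only in the final union bound, where the fixed first-round query $Q^{(1)}$ contributes $\opnorm{Q^{(1)}(\bu\otimes\bv)}^2\le O(k\log n)$ with probability $1-1/\poly(n)$ (hence the additive $-1/\poly(n)$ in the conclusion), and this $O(k\log n)$ term must be absorbable into $3Kk(K\alpha^2\gamma^2)^t$.
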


Setting $\gamma = O(\log(n))$, the theorem shows that if $t \le c\log(n^2/k)/(\log\log(n) + \log(\alpha))$ for a small enough constant $c$, then for any $t$-round deterministic algorithm, 
\begin{align}
	\Pr_{\bM = \bG + s \bu\T{\bv}}[\opnorm{Q (\bu \otimes \bv)}^2 \le  n^2/100] \ge 4/5. 
	\label{eqn:conclusion-from-main-theorem}
\end{align}
Setting $\gamma = O(1)$, we obtain that if $t \le c\log(n^2/k)/(\log(\alpha))$ for a small enough constant $c$, then
\begin{align}
    	\Pr_{\bM = \bG + s \bu\T{\bv}}[\opnorm{Q (\bu \otimes \bv)}^2 \le  n^2/100] \ge 1/\poly(n). 
     \label{eqn:high-probability-conclusion}
\end{align}
The above lemma directly shows lower bounds on any deterministic algorithm that can approximate the rank-$1$ planted matrix. 
\subsection{Proof of Lemma~\ref{lma:rank-1-statement}}
Before proceeding to the proof, we first discuss some notation, distributions and random variables that we use throughout the proof. Fix an arbitrary deterministic algorithm $\Alg$. Let $A = G + su\T{v}$ be a fixed realization of the random matrix $\bA$. Recall $s = \alpha/\sqrt{n}$. The algorithm $\Alg$ queries a fixed orthonormal matrix $Q^{(1)}$ and receives the response $r^{(1)} \coloneqq Q^{(1)}\vec(A) = Q^{(1)}g + sQ^{(1)} (u \otimes v)$ where we use $g \coloneqq \vec(G)$. As $\Alg$ is \emph{deterministic}, in the second round it queries the matrix $Q^{(2)}[r^{(1)}]$. We use $Q^{(2)}[r^{(1)}]$ to emphasize that $\Alg$ picks $Q^{(2)}$ purely as a function of $r^{(1)}$. It receives the response $r^{(2)} = Q^{(2)}[r^{(1)}]g + sQ^{(2)}[r^{(1)}] (u \otimes v)$ and picks queries $Q^{(3)}[r^{(1)}, r^{(2)}]$ for the third round and so on. Using $h^{(j)} \coloneqq (r^{(1)}, \ldots, r^{(j)})$ to denote the history of the algorithm until the end of round $j$, we have for $j=0,\ldots,t-1$
\begin{align*}
	r^{(j+1)} \coloneqq Q^{(j+1)}[h^{(j)}] (g + s \cdot u \otimes v).
\end{align*}
For each $j = 1,\ldots, t$, the randomness of $\bA$ induces a distribution $H^{(j)}$ over histories until round $j$. Then for $u$ and $v$, the distribution $H^{(j)}_{uv} \equiv H^{(j)} | \set{\bu = u, \bv = v}$ is the distribution of the history of $\Alg$ after $j$ rounds conditioned on $\bu = u$ and $\bv = v$ where the randomness in the histories is purely from the randomness of $\bG$. Recall that without loss of generality, we assume all the general linear queries made by the algorithm are orthogonal to each other. We first prove a tail bound on the amount of information that a non-adaptive query matrix can obtain. 
\subsubsection{A Tail Bound}
Let $\bu$ and $\bv$ be independent Gaussian random vectors. Let $Q \in \R^{k \times n^2}$ be an arbitrary matrix with $k$ orthonormal rows. We have the following lemma.
\begin{lemma}
There is a small enough universal constant $\beta$ such that for all $k \ge n$ and $C$ satisfying $4k \le Ck \le 16n^2$, we have for any orthonormal matrix $Q^{k \times n^2}$ that
\begin{equation*}
	\Pr_{\bu, \bv}[\opnorm{Q(\bu \otimes \bv)}^2 \ge Ck] \le \exp(-\beta Ck/n).
\end{equation*}
\label{lma:tail-bound}
\end{lemma}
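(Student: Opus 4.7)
Proof plan. The lemma is a concentration inequality for the Gaussian bilinear form $\opnorm{Q(\bu \otimes \bv)}^2 = \sum_{i=1}^k (\T{\bu} Q_i \bv)^2$, where $Q_i \in \R^{n \times n}$ is the reshape of the $i$-th row of $Q$ (so $\frnorm{Q_i} = 1$ and $\operatorname{tr}(Q_i \T{Q_j}) = \delta_{ij}$ by orthonormality of the rows of $Q$). The strategy is the standard iterated Hanson--Wright / Gaussian chaos route: condition on $\bu$, apply a Laurent--Massart chi-squared tail for $\bv$, then control the $\bu$-dependent parameters separately. This is essentially the finite-rank specialization of the random tensor concentration of Vershynin cited in the introduction.

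First I would observe that $\opnorm{Q(\bu \otimes \bv)}^2 = \opnorm{M_{\bu}\bv}^2$, where $M_{\bu}$ is the $k \times n$ matrix whose $i$-th row is $\T{Q_i}\bu$. Conditional on $\bu$, the Laurent--Massart bound gives, for every $x > 0$,
\begin{equation*}
    \Pr\bigl[\opnorm{M_{\bu}\bv}^2 \ge \frnorm{M_{\bu}}^2 + 2\frnorm{M_{\bu}}\opnorm{M_{\bu}}\sqrt{x} + 2x\opnorm{M_{\bu}}^2 \,\bigm|\, \bu\bigr] \le e^{-x}.
\end{equation*}
I then pick $x = \beta Ck/n$ for a small universal constant $\beta$. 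Provided one can show $\frnorm{M_{\bu}}^2 \le O(k)$ and $\opnorm{M_{\bu}}^2 \le O(n)$ each with probability at least $1 - \tfrac12 e^{-\beta Ck/n}$, the right-hand side is at most $O(k) + O(k\sqrt{\beta C}) + O(\beta C k) \le Ck$ by taking $\beta$ small and using $C \ge 4$.

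Second I would control the two norms of $M_{\bu}$. The Frobenius norm is the Gaussian quadratic form $\frnorm{M_{\bu}}^2 = \T{\bu} R \bu$ with $R := \sum_i Q_i \T{Q_i}$ positive semidefinite of trace $k$, so another application of Laurent--Massart gives $\frnorm{M_{\bu}}^2 = O(k)$ with the needed tail. The operator norm requires more care. The key identity is $\E[M_{\bu}\T{M_{\bu}}] = I_k$, which follows from $\E[\T{\bu} Q_i \T{Q_j}\bu] = \operatorname{tr}(Q_i \T{Q_j}) = \delta_{ij}$, and reflects the orthonormality of $\{\vec(Q_i)\}$. Writing $\opnorm{M_{\bu}}^2 = \sup_{v \in S^{n-1}} \T{\bu} K_v \bu$ with $K_v := \sum_i Q_i v \T{v}\T{Q_i}$, I would take a $1/4$-net of $S^{n-1}$ of size at most $9^n$, apply a pointwise Hanson--Wright bound at each net point, and union-bound; this yields $\opnorm{M_{\bu}}^2 \le O(n)$ at the required failure probability, and the $e^{O(n)}$ net cost stays within budget because the stated range $4k \le Ck \le 16 n^2$ forces $\beta Ck/n$ to live on the same exponential scale as $n$.

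The main obstacle is the uniform operator-norm bound $\opnorm{M_{\bu}}^2 = O(n)$: the trivial estimate $\opnorm{M_{\bu}}^2 \le \frnorm{M_{\bu}}^2 = O(k)$ loses a factor of $k/n$ and would wipe out the target tail exponent $e^{-\beta Ck/n}$. The orthonormality of $\{\vec(Q_i)\}$ is exactly what powers the improvement, because it forces $\E[M_{\bu}\T{M_{\bu}}] = I_k$ rather than a generic $k \times k$ positive semidefinite matrix of trace $k$; everything else amounts to standard Gaussian chaos bookkeeping and balancing the constants against the range restriction on $Ck$.
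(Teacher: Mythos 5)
Your proposal takes a genuinely different route from the paper: the paper proves this lemma in two lines by citing Theorem~1.4 and Remark~1.5 of the Vershynin random-tensor concentration result, which gives $\Pr[\opnorm{Q(\bu\otimes\bv)} \ge \frnorm{Q}+t] \le \exp(-ct^2/2n)$ directly, and then specializes $t = \sqrt{Ck/4}$. You instead re-derive the rank-two case from scratch via iterated Gaussian concentration (condition on $\bu$, Laurent--Massart for $\bv$, control $\frnorm{M_{\bu}}$ and $\opnorm{M_{\bu}}$ by separate chaos and net arguments). The identities $\operatorname{tr}(K_v) \le n$, $\opnorm{K_v} \le 1$, and $\opnorm{R} \le n$ that you implicitly need do follow from orthonormality of $\{\vec(Q_i)\}$ (e.g., $\T{x}K_v x = \sum_i \la Q_i, x\T{v}\ra^2 \le 1$ by Bessel, and $\T{x}Rx = \sum_i \opnorm{\T{Q_i}x}^2 \le \|A\mapsto\T{A}x\|_{\mathrm{HS}}^2 = n$), and the net argument for $\opnorm{M_{\bu}}^2 = O(n)$ goes through because $\beta Ck/n \le 16\beta n$ keeps the failure probability and the $9^n$ net cost on the same exponential scale.

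There is, however, a genuine gap in the way you control $\frnorm{M_{\bu}}$. You assert $\frnorm{M_{\bu}}^2 \le O(k)$ with probability $\ge 1-\tfrac12 e^{-\beta Ck/n}$, but that event cannot hold with that probability once $C$ is large: $\frnorm{M_{\bu}}^2 = \T{\bu}R\bu$ has mean exactly $k$, $\opnorm{R}\le n$ and $\operatorname{tr}(R^2)\le nk$, so Laurent--Massart gives a tail $\Pr[\T{\bu}R\bu \ge 10k]\approx e^{-\Theta(k/n)}$, which is enormously larger than $e^{-\beta Ck/n}$ when, say, $C \sim n^2/k$. The fix is to bound the deviation rather than the absolute size: $\bu\mapsto\frnorm{M_{\bu}}$ is $\sqrt{\opnorm{R}}\le\sqrt{n}$-Lipschitz with $\E\frnorm{M_{\bu}} \le \sqrt{k}$, so Gaussian concentration yields $\Pr[\frnorm{M_{\bu}} \ge \sqrt{k}+s] \le e^{-s^2/(2n)}$. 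Choosing $s = \sqrt{Ck}/4$ and using $C\ge 4$ (so $\sqrt{k}\le\sqrt{Ck}/2$) gives $\frnorm{M_{\bu}} \le \tfrac34\sqrt{Ck}$ with probability $\ge 1-e^{-Ck/(32n)}$, which is the correct $e^{-\Theta(Ck/n)}$ tail and leaves room of order $\sqrt{Ck}$ for the $\sqrt{2x}\opnorm{M_{\bu}}$ term in your conditional bound. With that replacement, your argument closes and is a valid self-contained alternative to the paper's citation of Vershynin.
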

\begin{proof}
By Theorem~1.4 and Remark~1.5 of \cite{vershynin-random-tensors}, we have that
\begin{align*}
	\Pr_{\bu, \bv}[\opnorm{{Q}({\bu} \otimes {\bv})}\ge \frnorm{{Q}} + t] \le \exp\left(-\frac{ct^2}{2n}\right)
\end{align*}
for all $0 \le t \le 2n$. For $t = \sqrt{Ck/4}$, for $4k \le Ck \le 16n^2$, we have that
\begin{align*}
	\Pr_{{\bu}, {\bv}}[\opnorm{{Q}({\bu} \otimes {\bv})} \ge \sqrt{Ck}] &\le \Pr_{{\bu}, {\bv}}[\opnorm{{Q}({\bu} \otimes {\bv})} \ge \sqrt{k}+\sqrt{Ck/4}]\\
	& \le \exp\left(- \frac{cCk}{8n}\right)
\end{align*}
which implies, taking $\beta = c/8$, that 
$
	\Pr_{\tilde{\bu}, \tilde{\bv}}[\opnorm{{Q}(\tilde{\bu} \otimes {\bv})}^2 \ge Ck] \le \exp(-\beta Ck/n).
$
\end{proof}
\subsubsection{First Round}
Let $w^{(1)}$ be the distribution of $(\bu, \bv)$ where $\bu$ and $\bv$ are random variables that are independently sampled from $N(0, I_n)$. As already defined, the distribution of the history of the algorithm after the first round for a fixed $u,v$ is given by $H^{(1)}_{uv}$. We have that $H^{(1)}_{uv}$ is the distribution of the random variable
\begin{equation*}
	Q^{(1)}\bg + s Q^{(1)} u \otimes v.
\end{equation*}
Let $P^{(1)}_{uv}$ be the distribution of the above random variable (although $P^{(1)}_{uv} \equiv H^{(1)}_{uv}$, we distinguish $P^{(j)}_{uv}$ and $H^{(j)}_{uv}$ in later rounds) and let $\calP^{(1)} = \set{P^{(1)}_{uv}}$ be the set of distributions for all $u,v$. Then we have
\begin{align*}
	I(\calP^{(1)}, w^{(1)}) &\le \E_{(\bu, \bv) \sim w^{(1)}}[d_{\KL}(Q^{(1)}\bg + sQ^{(1)}(\bu \otimes \bv)\,\|\, \bg_{k})]\\
	& \le s^2\E_{(\bu, \bv) \sim w^{(1)}}\opnorm{Q^{(1)}(\bu \otimes \bv)}^2 = (\alpha^2/n)k.
\end{align*}
We define the loss function
\begin{align*}
	\Loss^{(1)}((u,v), Q) = \begin{cases}
		0 & \text{if}\, \opnorm{Q(u \otimes v)}^2 \ge f_1(\alpha,\gamma)k\\
		1 & \text{if}\, \opnorm{Q(u \otimes v)}^2 < f_1(\alpha,\gamma)k
	\end{cases}
\end{align*}
for $Q \in \R^{k \times n^2}$ with orthonormal rows and some function $f_1(\alpha,\gamma)$ satisfying $4 \le f_1(\alpha,\gamma) \le 16n^2/k$ for a parameter $\gamma \ge 1$. From Lemma~\ref{lma:tail-bound}, we have
\begin{align*}
	R_0(\Loss^{(1)}, w^{(1)}) \ge 1 - \exp(-\beta f_1(\alpha,\gamma)k/n)
\end{align*}
which implies
\begin{align*}
	R_{\Bayes} \ge 1 + \frac{(\alpha^2/n) k + \log(2)}{- \beta f_1(\alpha,\gamma)k/n +  \log(2)}. 
\end{align*}
Picking $f_1(\alpha,\gamma) = K\alpha^2\gamma^2$, we have that $R_{\Bayes} \ge 1 - 1/(100\gamma^2)$. Thus with probability $1 - 1/(10\gamma)$ over $(\bu,\bv) \sim w^{(1)}$, we have that 
\begin{equation*}
	\Pr_{{\bh}^{(1)} \sim P_{\bu\bv}^{(1)}}[\opnorm{Q^{(2)}[\bh^{(1)}]  (\bu \otimes \bv)}^2 \le f_1(\alpha,\gamma)k] \ge 1 - 1/(10\gamma). 
\end{equation*}
Let $\Good^{(1)}$ be the set of all $(u,v)$ that satisfy the above property. Let $w^{(2)}$ be the distribution of $(\bu, \bv) \sim w^{(1)}$ conditioned on $(\bu, \bv) \in \Good^{(1)}$. For each $(u,v) \in \Good^{(1)}$, let 
\begin{align*}
	\GoodH^{(1)}_{uv} = \set{h^{(1)}\, : \, \opnorm{Q^{(2)}[h^{(1)}] (u \otimes v)}^2 \le f_1(\alpha,\gamma)k}.
\end{align*}
We have for all $(u,v) \in \Good^{(1)}$ that $\Pr_{{\bh}^{(1)} \sim P_{uv}^{(1)}}[{\bh}^{(1)} \in \GoodH^{(1)}_{uv}] \ge 1 - 1/(10\gamma)$. The overall conclusion of this is that with a large probability over $(\bu,\bv) \sim w^{(1)}$, the squared projection of $\bu \otimes \bv$ on the query asked by \Alg\, in round $2$ is small with large probability over $\bG$.

\subsubsection{Further Rounds}
We first define the following objects inductively.
\begin{enumerate}
	\item For $j \ge 2$, let $w^{(j)}$ be the distribution of $(\bu, \bv) \sim w^{(j-1)}$ conditioned on $(\bu, \bv) \in \Good^{(j-1)}$. So $w^{(j)}$ is the distribution over only those inputs for which the squared projection of $u \otimes v$ on the query space of $\Alg$ until round $j$ is small with high probability over $\bG$.
	\item For $(u,v) \in \Good^{(j-1)}$, 
	\begin{align*}
		P^{(j)}_{uv} \coloneqq \text{ distribution of $\bh^{(j)} = (\bh^{(j-1)}, \br^{(j)}) \sim \bH^{(j)}_{uv}$ conditioned on $\bh^{(j-1)} \in \GoodH^{(j-1)}_{uv}$}.
	\end{align*}
	$P^{(j)}_{uv}$ denotes the distribution over histories after round $j$, conditioned on the queries used until round $j$ not having a lot of ``information'' about $u \otimes v$.
	\item Let $\calP^{(j)} \coloneqq \set{P^{(j)}_{uv}\,:\, (u,v) \in \Good^{(j-1)}}$.
	\item For $j \ge 1$, 
	\begin{align*}
		&\Good^{(j)} \coloneqq \\
		&\set{(u,v) \in \Good^{(j-1)}\, :\, \Pr_{\bh^{(j)}\sim P_{uv}^{(j)}}[\opnorm{Q^{(j+1)}[\bh^{(j)}](u \otimes v)}^2 \le f_j(\alpha,\gamma)k] \ge 1 - 1/(10\gamma)}.
	\end{align*}
	The set $\Good^{(j)}$ denotes those values of $(u,v)$ for which with large probability over $\bG$, the queries used by the algorithm until round $j+1$ do not have a lot of ``information'' about $u \otimes v$.
	\item For $(u,v) \in \Good^{(j)}$,
	\begin{align*}
	&\GoodH^{(j)}_{uv} \coloneqq \\
	&\set{h^{(j)} = (h^{(j-1)}, r^{(j)})\, :\, h^{(j-1)} \in \GoodH^{(j-1)}_{uv}\, \text{and}\, \opnorm{Q^{(j+1)}[h^{(j)}](u \otimes v)}^2 \le f_j(\alpha,\gamma)k}.
	\end{align*}
	$\GoodH_{uv}^{(j)}$ denotes those histories, for which the queries used by the algorithm until round $j+1$ do not have a lot of ``information'' about $u \otimes v$.
	\item Let $f_0(\alpha,\gamma) = K$ and for $j \ge 1$, let $f_{j}(\alpha,\gamma) = K\alpha^2\gamma^2f_{j-1}(\alpha,\gamma)$ for a large enough universal constant $K$.
\end{enumerate}
	\begin{lemma}
	For all $1 \le j$ satisfying $f_j(\alpha,\gamma)k \le 16n^2$,
	\begin{enumerate}
		\item 
		\begin{equation*}
		\Pr_{(\bu, \bv) \sim w^{(j)}}[(\bu, \bv) \in \Good^{(j)}] \ge 1 - \frac{1}{10\gamma}
		\end{equation*}
		\item For all $(u,v) \in \Good^{(j)}$,
		\begin{equation*}
			\Pr_{\bh^{(j)} \sim P^{(j)}_{uv}}[\bh^{(j)} \in \GoodH^{(j)}_{uv}] \ge 1 - \frac{1}{10\gamma}
		\end{equation*}
		\item
\begin{equation*}
\E_{(\bu, \bv) \sim w^{(j)}}[d_{\KL}(P^{(j)}_{\bu\bv}\,\|\,G^{(j)})] \le f_j(\alpha,\gamma)(k/n),
\end{equation*}
where $G^{(j)}$ is the joint distribution of $j$ \emph{independent} $k$-dimensional Gaussian random variables. 		
	\end{enumerate}
	\label{lma:main-lemma}
	\end{lemma}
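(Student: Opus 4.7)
The plan is to prove all three parts of the lemma simultaneously by induction on $j$. The base case $j=1$ is essentially the ``First Round'' analysis immediately preceding the lemma: part 3 follows from $d_{\KL}(P^{(1)}_{uv}\|\bg_k) = (s^2/2)\|Q^{(1)}(u\otimes v)\|^2$ after taking expectation over $(\bu,\bv)\sim w^{(1)}$; part 1 combines the Bayes-risk bound $R_{\Bayes}\ge 1-1/(100\gamma^2)$ already derived with Markov's inequality applied to the nonnegative function $(u,v)\mapsto \E_{\bh^{(1)}}[1-\Loss^{(1)}((u,v),\mathfrak{d}(\bh^{(1)}))]$; and part 2 is immediate from the definition of $\GoodH^{(1)}_{uv}$.

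For the inductive step from $j$ to $j+1$, I first establish part 3. The marginal of $P^{(j+1)}_{uv}$ on $h^{(j)}$ is $P^{(j)}_{uv}$ further restricted to $\GoodH^{(j)}_{uv}$, using that $\GoodH^{(j)}_{uv}\subseteq \GoodH^{(j-1)}_{uv}$ by construction. By the inductive part 2 this restricted event has $P^{(j)}_{uv}$-mass at least $1-1/(10\gamma)$, so Lemma~\ref{lma:kl-conditioning} bounds the marginal KL by $(1-1/(10\gamma))^{-1}(d_{\KL}(P^{(j)}_{uv}\|G^{(j)})+2)$. For the conditional $\br^{(j+1)}\mid h^{(j)}$, the orthogonality of $Q^{(j+1)}[h^{(j)}]$ to all earlier queries and the rotational invariance of $\bg$ imply that $Q^{(j+1)}[h^{(j)}]\bg \sim N(0,I_k)$ independently of $h^{(j)}$, so $\br^{(j+1)}\mid h^{(j)} \sim N(sQ^{(j+1)}[h^{(j)}](u\otimes v),I_k)$ and its KL to $\bg_k$ equals $(s^2/2)\|Q^{(j+1)}[h^{(j)}](u\otimes v)\|^2 \le (\alpha^2/(2n))f_j(\alpha,\gamma)k$ on $\GoodH^{(j)}_{uv}$. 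Applying Lemma~\ref{lma:chaining}, taking expectation over $(\bu,\bv)\sim w^{(j+1)}$ (which inflates the $w^{(j)}$-expectation by at most $(1-1/(10\gamma))^{-1}$ thanks to the inductive part 1), and invoking the inductive part 3 gives
\begin{equation*}
\E_{w^{(j+1)}}[d_{\KL}(P^{(j+1)}_{\bu\bv}\|G^{(j+1)})] \le (1-1/(10\gamma))^{-2}\bigl(f_j(\alpha,\gamma)k/n + 2\bigr) + (\alpha^2/(2n))f_j(\alpha,\gamma)k \le f_{j+1}(\alpha,\gamma)k/n,
\end{equation*}
where the last inequality holds for $K$ a sufficiently large universal constant, using $f_j(\alpha,\gamma)k\ge n$ (since $k\ge n$ and $f_0=K\ge 2$) to absorb the additive $+2$.

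For part 1 at round $j+1$, I set up a Bayes-risk instance with $\Theta=\Good^{(j)}$, prior $w^{(j+1)}$, distributions $\calP^{(j+1)}$, action space the $k\times n^2$ orthonormal matrices (modeling $\Alg$'s choice of $Q^{(j+2)}$), and loss $\Loss^{(j+1)}((u,v),Q)=\mathbf{1}[\|Q(u\otimes v)\|^2 < f_{j+1}(\alpha,\gamma)k]$. The $f$-informativity is upper bounded by the sharper \emph{intermediate} KL bound $\lesssim \alpha^2 f_j(\alpha,\gamma)k/n$ appearing in the chain-rule computation above (before inflation to $f_{j+1}k/n$). For $R_0$, Lemma~\ref{lma:tail-bound} gives $\Pr_{w^{(1)}}[\|Q(\bu\otimes\bv)\|^2\ge f_{j+1}k] \le \exp(-\beta f_{j+1}k/n)$, and since the support of $w^{(j+1)}$ has $w^{(1)}$-measure at least $(1-1/(10\gamma))^{j}$, the same probability under $w^{(j+1)}$ is at most $(1-1/(10\gamma))^{-j}\exp(-\beta f_{j+1}k/n)$. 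Plugging into Theorem~\ref{thm:generalized-fanos-inequality} the ratio evaluates to $\approx \alpha^2 f_j/(\beta K\alpha^2\gamma^2 f_j) = 1/(\beta K\gamma^2)$, which is at most $1/(100\gamma^2)$ as soon as $\beta K\ge 100$; so $R_{\Bayes}\ge 1-1/(100\gamma^2)$, and Markov's inequality then delivers part 1. Part 2 at round $j+1$ is immediate from the definition of $\GoodH^{(j+1)}_{uv}$.

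The main obstacle is that conditioning on the good events compounds overhead in two distinct ways: Lemma~\ref{lma:kl-conditioning} contributes multiplicative $(1-1/(10\gamma))^{-1}$ and additive $+2$ terms to the KL bound each round, while passing from $w^{(j)}$ to $w^{(j+1)}$ inflates tail probabilities by the same factor and the $w^{(1)}$-measure of the support shrinks by this factor each round. Both overheads must be absorbed into the geometric growth $f_{j+1}=K\alpha^2\gamma^2 f_j$ by taking $K$ large enough, and one must simultaneously keep enough gap in the Bayes-risk denominator in the Fano step, which is precisely what forces tracking the sharper intermediate KL bound $\lesssim \alpha^2 f_jk/n$ (rather than the loose statement $f_{j+1}k/n$) when applying generalized Fano. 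The hypothesis $(K\alpha^2\gamma^2)^t k \le O(n^2)$ ensures $f_{j+1}k \le 16n^2$ throughout, keeping Lemma~\ref{lma:tail-bound} applicable across all rounds.
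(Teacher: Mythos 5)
Your proof is correct and follows essentially the same inductive route as the paper: chain rule for KL together with Lemma~\ref{lma:kl-conditioning} for part~3, generalized Fano applied with the sharper intermediate KL bound (of order $\alpha^2 f_j k/n$ rather than the final $f_{j+1}k/n$) followed by Markov's inequality for part~1, and part~2 read off directly from the definitions of $\Good^{(j+1)}$, $P^{(j+1)}_{uv}$, and $\GoodH^{(j+1)}_{uv}$. You also correctly identify the key bookkeeping the paper performs---absorbing the $(1-1/(10\gamma))^{-1}$ conditioning overheads and the additive constant into the geometric factor $K\alpha^2\gamma^2$, and using the lower bound $\Pr_{w^{(1)}}[\Good^{(j)}] \ge (1-1/(10\gamma))^{j}$ to control $R_0$---so no substantive deviation from the paper's proof.
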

	\begin{proof}
From the previous section, all the above statements hold for $j = 1$. Now assume that all the above statements hold for rounds $1, \ldots, j-1$. We prove the statements inductively for round $j$. 
		Recall $w^{(j)}$ is defined to be the distribution of $(\bu, \bv) \sim w^{(j-1)}$ conditioned on $(\bu, \bv) \in \Good^{(j-1)}$. For each $(u,v) \in \Good^{(j-1)}$, we now bound $d(P_{uv}^{(j)}\|G^{(j)})$. By definition,
		\begin{align*}
			d(P_{uv}^{(j)}\|G^{(j)}) &= d_{\KL}((\bh^{(j-1)}, \br^{(j)})\|(\bg^{(1)}_{k},\ldots, \bg^{(j)}_{k}))
		\end{align*}
		where $\bh^{(j)} = (\bh^{(j-1)}, \br^{(j)}) \sim P^{(j)}_{uv}$. Note that the marginal distribution of $\bh^{(j-1)}$ is given by conditioning the distribution $P^{(j-1)}_{uv}$ on the event $\GoodH^{(j-1)}_{uv}$. By Lemma~\ref{lma:chaining}, we have
		\begin{align}
			&d_{\KL}((\bh^{(j-1)}, \br^{(j)})\|(\bg^{(1)}_{k},\ldots, \bg^{(j)}_{k})) \nonumber\\
			&= d_{\KL}(\bh^{(j-1)}\|(\bg^{(1)}_k,\ldots,\bg^{(j-1)}_k)) + \E_{\bh^{(j-1)}}[d_{\KL}((\br^{(j)})|\bh^{(j-1)})\|\bg^{(j)}_{k}] \nonumber \\
			&= d_{\KL}((P^{(j-1)}_{uv}|\GoodH^{(j-1)}_{uv})\|G^{(j-1)}) + \E_{\bh^{(j-1)} \sim P_{uv}^{(j-1)}|\GoodH^{(j-1)}_{uv}}[d_{\KL}((\br^{(j)})|\bh^{(j-1)})\|\bg^{(j)}_{nr}].
			\label{eqn:kl-diverjence-j-th-round}
		\end{align}
		Now, using Lemma~\ref{lma:kl-conditioning}, we have
		\begin{align*}
			d_{\KL}((P_{uv}^{(j-1)}|\GoodH^{(j-1)}_{uv})\|G^{(j-1)}) &\le \frac{d_{\KL}(P_{uv}^{(j-1)}\|G^{(j-1)}) + 2}{\Pr_{\bh^{(j-1)} \sim P_{uv}^{(j-1)}}[\bh^{(j-1)} \in \GoodH^{(j-1)}]}\\
			&\le (5/4)d_{\KL}(P_{uv}^{(j-1)}\|G^{(j-1)}) + 5/2.
		\end{align*}
		Here we used the inductive assumption that $\Pr_{\bh^{(j-1)} \sim P_{uv}^{(j-1)}}[\bh^{(j-1)} \in \GoodH^{(j-1)}] \ge 1 - 1/(10\gamma) \ge 9/10$ where the last inequality follows as the parameter $\gamma \ge 1$. Next, we upper bound the second term in \eqref{eqn:kl-diverjence-j-th-round}. We have that $\br^{(j)}|\bh^{(j-1)} = Q^{(j)}[\bh^{(j-1)}]\bg + s (Q^{(j)}[\bh^{(j-1)}]) ( u \otimes v)$ is distributed as $N(s (Q^{(j)}[\bh^{(j-1)}]) (u \otimes v), I_{k})$ by rotational invariance of the Gaussian distribution. Therefore,
		\begin{align*}
			d_{\KL}((\br^{(j)}|\bh^{(j-1)})\|\bg^{(j)}_{k}) = (1/2)s^2\opnorm{Q^{(j)}[\bh^{(j-1)}](u\otimes v)}^2 \le (\alpha^2/n)f_{j-1}(\alpha,\gamma)k.
		\end{align*}
		In the last inequality, we used the fact that $\bh^{(j-1)} \in \GoodH^{(j-1)}_{uv}$. Finally,
		\begin{align*}
			\E_{(\bu, \bv)\sim w^{(j)}}[d(P_{\bu\bv}^{(j)}\|G^{(j)})] &\le (5/4)\E_{(\bu,\bv)\sim w^{(j)}}d_{\KL}(P_{\bu\bv}^{(j-1)}\|G^{(j-1)}) + 5/2 + \alpha^2f_{j-1}(\alpha,\gamma)(k/n)\\
			&\le (5/2) \E_{(\bu,\bv)\sim w^{(j-1)}}d_{\KL}(P_{\bu\bv}^{(j-1)}\|G^{(j-1)}) + 5/2 + \alpha^2 f_{j-1}(\alpha,\gamma)(k/n)
		\end{align*}
		as the distribution $w^{(j)}$ is obtained by conditioning $w^{(j-1)}$ on an event with probability $\ge 1 - 1/(10\gamma) \ge 1/2$ and $d_{\KL}( \cdot \| \cdot) \ge 0$. Now, using the inductive assumption, we have
		\begin{align*}
			\E_{(\bu, \bv)\sim w^{(j)}}[d(P_{\bu\bv}^{(j)}\|G^{(j)})] &\le (5/2) f_{j-1}(\alpha,\gamma)(k/n) + 5/2 + \alpha^2 f_{j-1}(\alpha,\gamma)(k/n)\\
			&\le 2\alpha^2f_{j-1}(\alpha,\gamma)(k/n) \le f_j(\alpha,\gamma)(k/n)
		\end{align*}
		where we use $\alpha \ge 15$. This proves the third statement in the lemma for round $j$. We also have, $I(w^{(j)}, \calP^{(j)}) \le \E_{(\bu, \bv)\sim w^{(j)}}[d(P_{\bu\bv}^{(j)}\|G^{(j)})] \le 2\alpha^2f_{j-1}(\alpha,\gamma)(k/n)$. We now define a loss function $L^{(j)}$ and use Bayes risk lower bounds to prove the remaining statements. Let
		\begin{align*}
			\Loss^{(j)}((u,v), Q) = \begin{cases}
				1 & \text{if } \opnorm{Q(u \otimes v)}^2 \le f_j(\alpha,\gamma)k\\
				0 & \text{if } \opnorm{Q(u \otimes v)}^2 > f_j(\alpha,\gamma)k
			\end{cases}
		\end{align*}
		where $Q \in \R^{k \times n^2}$ is an orthonormal matrix with $k \ge n$ rows. We have for $j$ such that  $f_j(\alpha,\gamma)k \le 16n^2$,
		\begin{equation*}
			R_{0}(\Loss^{(j)}, w^{(j)}) = \inf_Q \E_{(\bu,\bv) \sim w^{(j)}}[\Loss^{(j)}((\bu,\bv),Q)] \ge 1 - (1-1/(10\gamma))^{-j}\exp(-\beta f_j(\alpha,\gamma)k/n)
		\end{equation*} where we use Lemma~\ref{lma:tail-bound} and the fact that the distribution $w^{(j)}$ is obtained by conditioning $w^{(1)}$ on an event with probability $\ge (1-1/(10\gamma))^{j}$ which we prove in section \ref{subsec:full-proof}. By the generalized Fano inequality, we obtain
	\begin{align*}
		R_{\Bayes}(\Loss^{(j)},w^{(j)}) &\ge 1 + \frac{I(w^{(j)}, \calP^{(j)}) + \log(2)}{\log(1 - R_{0}(\Loss^{(j)}, w^{(j)}))}\\
		&\ge 1 - \frac{2\alpha^2f_{j-1}(\alpha,\gamma)k/n + \log(2)}{\beta f_{j}(\alpha,\gamma)k/n + j\log(1-1/(10\gamma))}.
	\end{align*}
	As $f_j(\alpha,\gamma) = K\alpha^2\gamma^2f_{j-1}(\alpha,\gamma)$ and $f_0(\alpha,\gamma) \ge K$ for a large enough constant $K$, we have
	\begin{align*}
	    \beta f_j(\alpha,\gamma) \ge -10j\log(1-1/(10\gamma))
	\end{align*}
	and therefore for $K$ large enough,
	\begin{align*}
		R_{\Bayes}(\Loss^{(j)}, w^{(j)}) \ge 1 - \frac{1}{100\gamma^2}.
	\end{align*}
	By definition of $R_{\Bayes}$, we conclude that
	\begin{align*}
		\E_{(\bu,\bv) \sim w^{(j)}}[\E_{\bh^{(j)} \sim P^{(j)}_{\bu\bv}}[\Loss((\bu, \bv), Q^{(j+1)}[\bh^{(j)}])]] \ge 1 - 1/(100\gamma^2).
	\end{align*}
	By Markov's inequality, we have that with probability $\ge 1 - 1/(10\gamma)$ over $(\bu,\bv) \sim w^{(j)}$, it holds that
	\begin{align*}
		\E_{\bh^{(j)}\sim P_{\bu\bv}^{(j)}}[\Loss((\bu,\bv), Q^{(j+1)}[\bh^{(j)}])] \ge 1 - 1/(10\gamma)
	\end{align*}
	which is equivalent to
	\begin{align*}
		\Pr_{\bh^{(j)} \sim P_{\bu\bv}^{(j)}}[\opnorm{Q^{(j+1)}[\bh^{(j)}](\bu \otimes \bv)}^2 \le f_j(\alpha,\gamma)k] \ge 1 - 1/(10\gamma).
	\end{align*}
	Thus, we conclude that $\Pr_{(\bu,\bv) \sim w^{(j)}}[(\bu,\bv)  \in \Good^{(j)}] \ge 1 - 1/(10\gamma)$ and that for $(u,v) \in \Good^{(j)}$, we have $\Pr_{\bh^{(j)} \sim P_{uv}^{(j)}}[\bh^{(j)} \in \GoodH^{(j)}_{uv}] \ge 1 -  1/(10\gamma)$. 
	\end{proof}
\subsubsection{Wrap-up}
\label{subsec:full-proof}
Let $j \ge 1$ satisfy $f_j(\alpha,\gamma)k \le 16n^2$. By definition, $\Good^{(j)} \subseteq \Good^{(j-1)} \subseteq \ldots \Good^{(1)}$. We also have 
\begin{align*}
 w^{(j)} = w^{(j-1)}|\Good^{(j-1)} = w^{(1)}|\Good^{(j-1)}. 
\end{align*}
Now, using the fact that $\Good^{(j)} \subseteq \Good^{(j-1)}$,
\begin{align*}
    \frac{\Pr_{(\bu, \bv) \sim w^{(1)}}[(\bu, \bv) \in \Good^{(j)}]}{\Pr_{(\bu,\bv) \sim w^{(1)}}[(\bu, \bv) \in \Good^{(j-1)}]} &= \Pr_{(\bu,\bv) \sim w^{(1)}}[(\bu,\bv) \in \Good^{(j)} | (\bu,\bv) \in \Good^{(j-1)}]
\end{align*}
As $w^{(j)} = w^{(1)}|\Good^{(j-1)}$, we have
\begin{align*}
    \Pr_{(\bu,\bv) \sim w^{(1)}}[(\bu,\bv) \in \Good^{(j)} | (\bu, \bv) \in \Good^{(j-1)}] &= \Pr_{(\bu,\bv) \sim w^{(j)}}[(\bu,\bv) \in \Good^{(j)}]\\
    &\ge (1 - 1/(10\gamma))
\end{align*}
where the last inequality is from Lemma~\ref{lma:main-lemma}. Thus, $\Pr_{(\bu,\bv) \sim w^{(1)}}[(\bu,\bv) \in \Good^{(j)}] \ge (1 - 1/(10\gamma))^{j}$.

Similarly, for $(u,v) \in \Good^{(j)}$, we have that
\begin{align*}
    \frac{\Pr_{\bh^{(j)} \sim H_{uv}^{(j)}}[\bh^{(j)} \in \GoodH^{(j)}_{uv}]}{\Pr_{\bh^{(j-1)} \sim H^{(j-1)}_{uv}}[\bh^{(j-1)} \in \GoodH^{(j-1)}_{uv}]} &= \Pr_{\bh^{(j)} \sim P_{uv}^{(j)}}[\bh^{(j)} \in \GoodH^{(j)}_{uv}]\\
    &\ge (1 - 1/(10\gamma))
\end{align*}
where again, the last inequality follows from Lemma~\ref{lma:main-lemma}. Thus, $\Pr_{\bh^{(j)} \sim H_{uv}^{(j)}}[\bh^{(j)} \in \GoodH^{(j)}_{uv}] \ge (1 - 1/(10\gamma))^{j}$. Now, for $(\bu,\bv) \sim w^{(1)}$ and $\bh^{(j)} \sim H^{(j)}_{\bu\bv}$, we have that with probability $\ge (1 - 1/(10\gamma))^{2j}$ it holds that $(\bu,\bv) \in \Good^{(j)}$ and $\bh^{(j)} \in \GoodH^{(j)}_{\bu\bv}$. Thus, with probability $\ge (1-1/(10\gamma))^{2j}$ over the input matrix $\bG + \alpha\sum_{i=1}^r\bu_i\T{\bv_i}$, we have that
\begin{align*}
    \sum_{j'=1}^j \opnorm{Q^{(j'+1)}[\bh^{(j')}](\bu \otimes \bv)}^2 \le \sum_{j'=1}^j f_{j'}(\alpha,\gamma)k \le 2f_j(\alpha,\gamma)k.
\end{align*}
With probability $\ge 1 - 1/\poly(n)$, using Lemma~\ref{lma:tail-bound}, $\opnorm{Q^{(0)}( \bu \otimes \bv)}^2 \le k\log(n)$. Using a union bound, we obtain that with probability $\ge (1 - 1/(10\gamma))^{2j} - 1/\poly(n)$, 
\begin{align*}
     \sum_{j'=0}^j \opnorm{Q^{(j'+1)}[\bh^{(j')}](\bu \otimes \bv)}^2 &\le 2f_j(\alpha,\gamma)kr + Kk\log(n)\\
     &=2kf_0(\alpha,\gamma)(K\alpha^2\gamma^2)^j + Kk\log(n)\\
     &= (3K)k(K\alpha^2\gamma^2)^j
\end{align*}
where $K$ is a large enough absolute constant which proves the lemma for $j$ such that $(K\alpha^2\gamma^2)^j = \Omega(\log(n))$.
\subsection{Lower Bounds for estimating rank-\texorpdfstring{$r$}{r} Plant}
We show that the lower bounds on number of linear measurements required to estimate the rank-$1$ planted matrix can be extended to algorithms that estimate the rank-$r$ planted matrix as well.
\begin{theorem}
Let $n$ and $r \le n/2$ be input parameters and $\alpha$ be a large enough constant. Let the random matrix $\bG + (\alpha/\sqrt{n})\sum_{i=1}^r \bu_i\T{\bv_i}$ be the input which can be accessed using linear measurements. If $\Alg$ is a $t$-round adaptive algorithm that uses $k$ linear measurements and at the end of $t$-rounds outputs a matrix $\hat{A}$ such that with probability $\ge 9/10$ over the randomness of the input and the internal randomness of the algorithm,
\begin{align*}
	\frnorm{\hat{A} - \sum_{i=1}^r \bu_i\T{\bv_i}}^2 \le c(n^2 r)
\end{align*}
for a small enough constant $c$, then $t = \Omega(\log(n^2/k)/(\log (\alpha) + \log\log n))$.
    \label{thm:main-theorem}
\end{theorem}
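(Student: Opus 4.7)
The plan is to reduce the rank-$r$ recovery problem to the rank-$1$ setting governed by Lemma~\ref{lma:rank-1-statement}. By the Yao-style averaging argument laid out in the introduction, I may assume WLOG that $\Alg$ is a deterministic algorithm succeeding with probability at least $4/5$ over the input randomness. Given a rank-$1$ instance $\bA = \bG + (\alpha/\sqrt{n})\bu \T{\bv}$ accessible via linear measurements, I construct a rank-$1$ algorithm $\Alg'$ that internally samples $\bu_2,\ldots,\bu_r, \bv_2,\ldots,\bv_r \sim N(0, I_n)$ and simulates the rank-$r$ oracle for $\Alg$ by answering each query $Q$ with $\la Q, \bA\ra + (\alpha/\sqrt n)\sum_{i\ge 2}\la Q, \bu_i\T{\bv_i}\ra$. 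These responses are distributed exactly as those of the genuine rank-$r$ oracle on plant $\bu\T{\bv} + \sum_{i\ge 2}\bu_i\T{\bv_i}$, so $\Alg$'s recovery guarantee is preserved. Let $Q$ denote a matrix with orthonormal rows spanning the at-most-$kt$-dimensional union of $\Alg$'s per-round query rows.

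Applying Lemma~\ref{lma:rank-1-statement} with $\gamma = \Theta(\log n)$ to the deterministic rank-$1$ algorithm $\Alg'$, if $t \le c_0 \log(n^2/k)/(\log \alpha + \log \log n)$ for a small enough $c_0 > 0$, then $\opnorm{Q(\bu \otimes \bv)}^2 \le n^2/100$ with probability at least $4/5$. It remains to contradict this by showing the rank-$r$ recovery guarantee forces $\opnorm{Q(\bu \otimes \bv)}^2$ to be large on most inputs.

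Write $\vec\bX = \sum_{i=1}^r \bu_i \otimes \bv_i$. Since $\hat A$ is a function only of $y = Q\vec\bG + (\alpha/\sqrt n) Q\vec\bX$, it depends on $\bX$ only through the projection $P_Q \vec\bX$, where $P_Q = \T Q Q$. Optimizing over the component of $\vec\hat A$ orthogonal to the row span of $Q$ gives the Bayes-style MMSE lower bound $\E[\frnorm{\hat A - \bX}^2] \ge \E[\Var((I - P_Q)\vec\bX \mid y)]$, which in the constant-$\alpha$, $kt \ll n^2$ regime is within $(1 - o(1))$ of its unconditional value $rn^2 - \E[\opnorm{Q\vec\bX}^2]$. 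Combining with the recovery bound $crn^2$ yields $\E[\opnorm{Q\vec\bX}^2] \ge (1 - O(c))rn^2$. Expanding $\opnorm{Q\vec\bX}^2 = \sum_i \opnorm{Q(\bu_i\otimes\bv_i)}^2 + 2\sum_{i<j}\la Q(\bu_i\otimes\bv_i), Q(\bu_j\otimes\bv_j)\ra$, using the exchangeability of the spikes, and controlling the cross-term expectations (which vanish by independence and centering for non-adaptive $Q$ and are $o(n^2/r)$ for adaptive $Q$ via a conditioning argument on partial histories), I obtain $\E[\opnorm{Q(\bu\otimes\bv)}^2] \ge (1 - O(c))n^2$. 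Since $\opnorm{Q(\bu\otimes\bv)}^2 \le \opnorm{\bu\otimes\bv}^2 = (1 \pm o(1))n^2$ with overwhelming probability, a reverse Markov argument applied to the nonnegative deficit yields $\opnorm{Q(\bu\otimes\bv)}^2 \ge n^2/2$ with probability at least $9/10$ for $c$ small enough, contradicting the bound from Lemma~\ref{lma:rank-1-statement} and establishing the claimed lower bound on $t$.

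The principal obstacle is rigorously justifying the MMSE lower bound when $\vec\bX$ is a random rank-$r$ tensor rather than a Gaussian vector. A cleaner route is a direct two-point Le Cam argument: exhibit an alternative rank-$r$ plant $\bX'$ with $P_Q\vec\bX = P_Q\vec\bX'$ but $\frnorm{\bX - \bX'}^2$ large, exploiting the over-parametrization of rank-$r$ matrices, so that no algorithm seeing only $y$ can distinguish $\bX$ from $\bX'$ and hence must have error $\Omega(\frnorm{\bX - \bX'}^2)$ on at least one of them. A secondary technical task is controlling the adaptive cross terms $\E[\la Q(\bu_i\otimes\bv_i), Q(\bu_j\otimes\bv_j)\ra]$, which for non-adaptive $Q$ are exactly zero by centering but for adaptive $Q$ require a more delicate argument exploiting that the responses carry only a bounded amount of information about the individual spikes in the low-measurement regime.
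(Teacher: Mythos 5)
Your first step matches the paper exactly: both proofs reduce rank-$r$ estimation to rank-$1$ by having the rank-$1$ algorithm internally sample $\bu_2,\ldots,\bu_r,\bv_2,\ldots,\bv_r$ and simulate the rank-$r$ oracle on the synthetic plant $\sum_i \bu_i\T{\bv_i}$. Where you diverge is in how you convert the recovery guarantee $\frnorm{\hat A - \sum_i \bu_i\T{\bv_i}}^2 \le cn^2 r$ into a statement that $\opnorm{Q(\bu_1\otimes\bv_1)}^2$ is large, and there the proposal has genuine gaps which you partially acknowledge.

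Your MMSE route asserts that $\E[\Var((I-P_Q)\vec\bX\mid y)]$ is within $(1-o(1))$ of its unconditional value $rn^2 - \E[\opnorm{Q\vec\bX}^2]$, but this is not a triviality to be waved through: showing that the observations carry negligible information about the orthogonal component of $\vec\bX$ is precisely the content of the information-theoretic argument in Lemma~\ref{lma:rank-1-statement}, and invoking it here both borders on circularity and would need to be redone for the $r$-fold rank-$r$ prior rather than the rank-$1$ prior. The second gap is the cross-term control: with adaptive $Q$, the terms $\E[\la Q(\bu_i\otimes\bv_i),Q(\bu_j\otimes\bv_j)\ra]$ do not vanish by centering, and an algorithm can deliberately align its query span with linear combinations of the spikes, making these terms comparable to the diagonal terms. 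You flag both of these as ``principal'' and ``secondary'' obstacles, and the Le~Cam sketch you offer as a fix is only a sketch; a Le~Cam argument with a rank-$r$ alternative $\bX'$ satisfying $P_Q\vec\bX = P_Q\vec\bX'$ runs into trouble once $kt$ becomes comparable to $nr$, which is exactly the regime of interest.

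The paper sidesteps all of this with a purely constructive extraction of the rank-$1$ spike, at the price of two extra adaptive rounds (which is harmless to the asymptotics). Concretely: from $\hat A$ one takes $[\hat A]_r$ and lets $P$ be the rank-$\le r$ projection onto its rowspace; the bound $\frnorm{\bU\T{\bV}(I-P)}^2 \le 4cn^2 r$ combined with $\sigma_{\min}(\bU)\gtrsim\sqrt n$ and $\opnorm{\bu_i}^2\lesssim n$ gives $\sum_i\frnorm{\bu_i\T{\bv_i}(I-P)}^2 \le O(c)n^2 r$, and since the events conditioned on are symmetric in the index $i$, exchangeability yields $\frnorm{\bu_1\T{\bv_1}(I-P)}^2 \le O(c)n^2$ with probability $\ge 8/10$. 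Then one spends round $t+1$ querying $\bA Q$ where $P = Q\T{Q}$ (a mere $nr$ measurements, legal since $k\ge nr$) to form $M = \bG P + (\alpha/\sqrt n)\bu_1\T{\bv_1}P$, shows $[M]_1$ is operator-norm close to $(\alpha/\sqrt n)\bu_1\T{\bv_1}$ because $\opnorm{\bG P}\lesssim\sqrt n$, and spends round $t+2$ querying $\vec([M]_1)$ so that the $(t+2)$-round query space has $\opnorm{Q(\bu_1\otimes\bv_1)}^2 \ge n^2/100$ with probability $\ge 1/2$, directly contradicting Lemma~\ref{lma:rank-1-statement}. This avoids any variance/posterior bookkeeping entirely: the ``witness'' direction with large overlap is handed to you by $\hat A$ itself, and you simply add it to the query set. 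If you want to salvage the MMSE route, you would need to (i) prove the small-mutual-information statement for the rank-$r$ prior (essentially re-running Lemma~\ref{lma:rank-1-statement} for $\sum_i\bu_i\otimes\bv_i$, for which the cited tensor tail bounds do not immediately apply), and (ii) give a real argument for the cross terms; the paper's route is shorter on both counts.
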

\begin{proof}
Suppose $\Alg$ is a $t$ round algorithm that uses $k$ liner measurements in each round such that when run on the matrix $\bG + (\alpha/\sqrt{n})\bu\T{\bv}$, it produces a matrix $\hat{A}$ such that with probability $9/10$ over the input matrix,
\begin{align*}
    \frnorm{\hat{A} - \bu\T{\bv}}^2 \le cn^2
\end{align*}
for a small enough constant $c$.
By making the algorithm to query the vector $\vec(\hat{A})$ in round $t+1$, we can therefore ensure that if $Q$ is the overall query space of the algorithm, then
\begin{align*}
    \Pr[\opnorm{Q(\bu \otimes \bv)}^2 \ge n^2/100] \ge 4/5.
\end{align*}
By Lemma~\ref{lma:rank-1-statement}, we obtain that $t + 1 = \Omega(\log(n^2/k)/(\log (\alpha) + \log\log n))$ and therefore \[t = \Omega(\log(n^2/k)/(\log(\alpha) + \log\log n)).\]

Now suppose $\Alg_r$ is a $t$ round algorithm that uses $k$ linear measurements of the input random matrix $\bG + (\alpha/\sqrt{n})\sum_{i=1}^r\bu_i\T{\bv_i}$ and outputs a matrix $\hat{A}$ such that with probability $\ge 9/10$ over the input,
\begin{align*}
	\frnorm{\hat{A} - \sum_{i=1}^r \bu_i\T{\bv_i}}^2 \le cn^2 r
\end{align*}
for a small enough constant $c$. We obtain a lower bound on $t$ by reducing the rank-$1$ plant estimation problem to the rank-$r$ plant estimation problem. Suppose the input is the random matrix $\bG + (\alpha/\sqrt{n})\bu_1\T{\bv_1}$. We sample $\bu_2,\ldots,\bu_r$ and $\bv_2,\ldots,\bv_r$ so that the coordinates of these vectors are independent standard Gaussian random variables. Now we note that we can perform arbitrary linear measurements of the matrix $\bG + (\alpha/\sqrt{n})\sum_{i=1}^r \bu_i\T{\bv_i}$ since we have access to linear measurements of $\bG + (\alpha/\sqrt{n})\bu_1\T{\bv_1}$ and we know the vectors $\bu_2,\ldots,\bu_r$ and $\bv_2,\ldots,\bv_r$. We can then use Algorithm $\Alg_r$ to obtain a matrix $\hat{A}$ such that with probability $\ge 9/10$ over the input and our sampled vectors,
\begin{align*}
	\frnorm{\sum_{i=1}^r \bu_i\T{\bv_i} - \hat{A}}^2 \le cn^2 r.
\end{align*}
Condition on this event. We have $\frnorm{\hat{A} - [\hat{A}]_r}^2 \le \frnorm{\sum_{i=1}^r \bu_i\T{\bv_i} - \hat{A}}^2 \le cn^2 r$ which then implies $\frnorm{\sum_{i=1}^r \bu_i\T{\bv_i} - [\hat{A}]_r}^2 \le 4cn^2 r$ using the triangular inequality. Now, let $P$ be the at most rank $r$ projection matrix onto the rowspace of the matrix $[\hat{A}]_r$ and assume that $\bU$ and $\bV$ are matrices with columns given by $\bu_1,\ldots,\bu_r$ and $\bv_1,\ldots,\bv_r$ respectively. From above, we get
\begin{align*}
	\frnorm{\bU\T{\bV}(I-P)}^2 \le \frnorm{\bU\T{\bV} - [\hat{A}]_r}^2 \le 4cn^2 r
\end{align*}
which then implies that
\begin{align*}
	\sigma_{\min}(\bU)^2\frnorm{\T{\bV}(I-P)}^2 \le \frnorm{\bU\T{\bV}(I-P)}^2 \le 4cn^2 r.
\end{align*}
Now, $\bU$ is an $n \times r$ matrix with coordinates being independent standard Gaussian random variables. We have from \cite{rudelson2009smallest} that if $r \le n/2$, then with probability $1 - \exp(-n)$, $\sigma_{\min}(\bU) \ge c_1\sqrt{n}$. From \cite{laurent2000adaptive} we also have that with probability $\ge 1-\exp(-n)$, $c_2n \le\opnorm{\bu_1}^2, \ldots,\opnorm{\bu_r}^2 \le Cn$ for a small enough $c_2$ and large enough $C$. Conditioned on all these events, we have
\begin{align*}
	\sum_{i=1}^r \frnorm{\bu_i\T{\bv_i}(I-P)}^2 &\le Cn\sum_{i=1}^r \opnorm{\T{\bv_i}(I-P)}^2\\
	& \le Cn\frnorm{\T{\bV}(I-P)}^2 \le \frac{(4cn^2 r)(Cn)}{c_1^2 n} \le (4cC/c_1)n^2.
\end{align*}
Hence, at least $9r/10$ indices $i$ have the property that $\frnorm{\bu_i\T{\bv_i}(I-P)}^2 \le (40cC/c_1)n^2$. Since the marginal distributions of $\bu_1\T{\bv_1},\ldots,\bu_r\T{\bv_r}$ are identical, we obtain that with probability $\ge 8/10$,
\begin{align*}
	\frnorm{\bu_1\T{\bv_1}(I-P)}^2 \le (40cC/c_1)n^2.
\end{align*}
Note that rank of $P$ is at most $r$ and therefore $P = Q\T{Q}$ for an orthonormal matrix $Q$ with at most $r$ columns. In the $(t+1)$-th round, we make $nr$ linear measurements of the input matrix and obtain the matrix
\begin{align*}
	\bG Q + (\alpha/\sqrt{n})\bu_1\T{\bv_1}Q
\end{align*}
and therefore, we can obtain the matrix $M = \bG P + (\alpha/\sqrt{n})\bu_1\T{\bv_1}P$. Let $[M]_1$ be the best rank-$1$ approximation of $M$ in operator norm. We have
\begin{align*}
	\opnorm{[M]_1 - (\alpha/\sqrt{n})\bu_1\T{\bv_1}} &\le \opnorm{[M]_1 - M} + \opnorm{M - (\alpha/\sqrt{n})\bu_1\T{\bv_1}P}\\
 &\qquad + \opnorm{(\alpha/\sqrt{n})\bu_1\T{\bv_1}P - (\alpha/\sqrt{n})\bu_1\T{\bv_1}}\\
	&\le \opnorm{\bG P} + \opnorm{\bG P} + (\alpha/\sqrt{n})\sqrt{(40cC/c_1)n^2}.
\end{align*}
Since, $\opnorm{\bG} \le 2\sqrt{n}$ with probability $1 - \exp(-\Theta(n))$, we get
\begin{align*}
	\opnorm{(\sqrt{n}/\alpha)[M]_1 - \bu_1\T{\bv_1}} \le (4/\alpha + \sqrt{40cC/c_1})n
\end{align*}
and
\begin{align*}
	\frnorm{(\sqrt{n}/\alpha)[M]_1 - \bu_1\T{\bv_1}}^2 \le 2(4/\alpha + \sqrt{40cC/c_1})^2n^2.
\end{align*}
For $\alpha$ large enough and $c$ small enough, by querying the vector $\vec([M]_1)$, we can again ensure that if $Q$ is the query space of the algorithm after $t+2$ rounds, we have with probability $\ge 1/2$ over the input random matrix that
\begin{align*}
	\opnorm{Q(\bu_1\otimes \bv_1)}^2 \ge n^2/100.
\end{align*}
We again have from Lemma~\ref{lma:rank-1-statement} that $t+2 = \Omega(\log(n^2/k)/(\log\log n + \log \alpha))$ and therefore that $t = \Omega(\log(n^2/k)/(\log\log n + \log \alpha))$.
 \end{proof}
 \section{Proof of Theorem~\ref{thm:main-recovery-theorem}}
 Using the reduction from matrix recovery with noisy measurements to plant estimation with exact measurements that we described in the previous sections, we can now prove Theorem~\ref{thm:main-recovery-theorem}. 
\begin{proof}[Proof of Theorem~\ref{thm:main-recovery-theorem}]
Let $\calA$ be any any randomized algorithm that queries orthonormal measurements of the underlying matrix $A$ and outputs a reconstruction $\hat{A}$ of $A$ satisfying $\frnorm{A - \hat{A}}^2 \le c\frnorm{A}^2$ with probability $\ge p$, over both the randomness of the algorithm and randomness of the measurements. Let $\hat{A} = \calA(A, \bsigma, \bgamma)$ where $\bsigma$ captures the randomness of the algorithm and $\bgamma$ captures the randomness in measurements. First we have the following lemma.
\begin{lemma}
    If there is a randomized algorithm $\calA(A, \bsigma, \bgamma)$ such that for all rank $\le r$ matrices $A$ with $\frnorm{A}^2 = \Theta(nr)$,
    \begin{align*}
        \Pr_{\bsigma, \bgamma}[\frnorm{\calA(A, \bsigma, \bgamma) - A}^2 \le c\frnorm{A}^2] \ge p,
    \end{align*}
    then there is a randomized algorithm $\calA'$ such that for all $A$ with $\frnorm{A}^2 = \Theta(nr)$,
    \begin{align*}
        \Pr_{\bG, \bsigma}[\frnorm{\calA'(A + \bG, \bsigma) - A}^2 \le c\frnorm{A}^2] \ge p.
    \end{align*}
    \label{lma:applications:recovery-implication}
\end{lemma}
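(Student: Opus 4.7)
The plan is to construct $\calA'$ by directly simulating $\calA$ on the input matrix $B = A + \bG$, intercepting each of $\calA$'s queries and synthesizing the ``noisy'' response from the perfect measurement of $B$. Concretely, $\calA'$ runs $\calA$ with internal random seed $\bsigma$, and whenever $\calA$ requests a noisy linear measurement with query vector $q \in \R^{n^2}$, the algorithm $\calA'$ returns the exact inner product $\la q, \vec(B)\ra = \la q, \vec(A)\ra + \la q, \vec(\bG)\ra$. At the end of the simulation, $\calA'$ outputs whatever $\calA$ outputs. Since $\calA'$ only ever needs perfect linear measurements of $B$, it fits the desired model.

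The core step is to verify that the joint distribution of the responses received by the simulated $\calA$ inside $\calA'$ matches the joint distribution $\calA$ would have observed when invoked in its own noisy-measurement model. In the noisy model, the response to the $j$-th query $q^{(j)}$ is $\la q^{(j)}, \vec(A)\ra + n^{(j)}$ with $n^{(j)} \sim N(0, \opnorm{q^{(j)}}^2)$ drawn independently across queries. In the simulation, the additive term is $\la q^{(j)}, \vec(\bG)\ra$. Because the paper's standing assumption is that the rows within each round are orthonormal and the queries across rounds are mutually orthogonal, if $Q$ stacks all queries ever issued (across all adaptive rounds) then $QQ^{\T} = I$. Consequently, by rotational invariance of the i.i.d.\ Gaussian $\vec(\bG)$, the noise vector $(\la q^{(1)}, \vec(\bG)\ra, \ldots, \la q^{(k)}, \vec(\bG)\ra)$ is a centered Gaussian with identity covariance, which is exactly the distribution of $(n^{(1)}, \ldots, n^{(k)})$ in the noisy model.

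One subtlety, which is really the only place where care is needed, is that both $\calA$ and its simulation choose queries adaptively based on past responses. So the orthogonality of the cross-round queries has to be enforced \emph{during} the run, query-by-query; this is fine because the paper has already reduced to algorithms that maintain this orthogonality, and the simulation preserves it automatically. Once the joint response distributions are matched, coupling $(\bG, \bgamma)$ so that the noise seen in the two experiments agrees, the distribution of the simulated transcript and final output of $\calA$ inside $\calA'(A + \bG, \bsigma)$ is identical to that of $\calA(A, \bsigma, \bgamma)$. The claimed probability bound then transfers verbatim, giving
\[
\Pr_{\bG, \bsigma}\!\bigl[\frnorm{\calA'(A + \bG, \bsigma) - A}^2 \le c\frnorm{A}^2\bigr] \;=\; \Pr_{\bsigma, \bgamma}\!\bigl[\frnorm{\calA(A, \bsigma, \bgamma) - A}^2 \le c\frnorm{A}^2\bigr] \;\ge\; p,
\]
which is exactly the statement of the lemma. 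The main obstacle is essentially a bookkeeping one---being careful to verify that the orthonormality-plus-orthogonality assumption makes the two noise processes not just marginally but jointly identically distributed across all adaptive rounds---rather than any substantive technical difficulty.
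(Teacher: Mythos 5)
Your proof is correct and takes essentially the same approach as the paper: simulate $\calA$ on exact linear measurements of $A+\bG$, and argue by rotational invariance that the resulting responses are identically distributed to those in the noisy-measurement model. The only place worth flagging is the step where you first assert that stacking all queries into $Q$ with $QQ^{\T}=I$ makes $Q\vec(\bG)$ a standard Gaussian --- since $Q$ is built adaptively as a function of past responses (which themselves depend on $\bG$), this cannot be read off from rotational invariance of a \emph{fixed} linear map; one needs the sequential form (given the past, the new query is orthogonal to earlier ones and independent of $\bG$ given that history, so the new coordinate is an independent $N(0,1)$). You do acknowledge this subtlety in the following paragraph, and the paper handles it exactly this way by arguing round-by-round with conditioning, so the argument is sound --- just make sure the sequential version, not the static one, is what carries the proof.
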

\begin{proof}
    Fix a particular $\sigma$. The algorithm first queries an orthonormal matrix $Q^{(1)} \in \R^{k \times n^2}$ and receives a random response $\br^{(1)} = Q^{(1)} \cdot \vec(A) + \bg^{(1)}$ where $\bg^{(1)}$ is a vector with independent Gaussian components of mean $0$ and variance $1$. As a function of $\br^{(1)}$, the algorithm queries an orthonormal matrix $Q^{(1)}[\br^{(1)}]$ and receives the random response $\br^{(2)} = Q^{(2)}[\br^{(1)}] \cdot \vec(A) + \bg^{(2)}$ where $\bg^{(2)}$ is again a random Gaussian vector \emph{independent} of $\bg^{(1)}$. Importantly, we also have that $Q^{(2)}[\br^{(1)}]$ is orthogonal to the query matrix $Q^{(1)}$ in the first round. The algorithm proceeds in further rounds accordingly where it picks query matrices as a function of the responses in all the previous rounds. 

    Let $\bG$ be an $n \times n$ Gaussian matrix with independent entries of mean $0$ and variance $1$. We now observe that $\br^{(1)} = Q^{(1)} \cdot \vec(A) + \bg^{(1)}$ has the same distribution as $Q^{(1)} \cdot \vec(A + \bG)$ by rotational invariance of Gaussian distribution. Now conditioning on $\br^{(1)}$, we obtain that $\br^{(2)} = Q^{(2)}[\br^{(1)}] \cdot \vec(A) + \bg^{(2)}$ has the \emph{same} distribution as $Q^{(2)}[\br^{(1)}] \cdot \vec(A + \bG)$ as $Q^{(2)}[\br^{(1)}]$ is orthogonal to $Q^{(1)}$ and hence $Q^{(2)}[\br^{(1)}] \cdot \bg$ is independent of $Q^{(1)}\bg$, again by rotational invariance. 

    Thus, we can consider a deterministic algorithm $\calA'$ parameterized by the same $\sigma$ which exactly simulates the behavior of $\calA$ performing matrix recovery with noisy measurements. Thus,
    \begin{align*}
        \Pr_{\bG}[\frnorm{\calA'(A + \bG, \sigma) - A}^2 \le c\frnorm{A}^2] = \Pr_{\bgamma}[\frnorm{\calA(A, \sigma, \bgamma) - A}^2 \le c\frnorm{A}^2].
    \end{align*}
    Taking expectation over $\bsigma$, we obtain the result.
\end{proof}
Let $\bu_1,\ldots,\bu_r$ and $\bv_1,\ldots,\bv_r$ be $2r$ random vectors with coordinates being independent standard normal random variables. Let $\bU$ be an $n \times r$ matrix with columns given by $\bu_1,\ldots,\bu_r$ and $\bV$ be an $n \times r$ matrix with columns given by $\bv_1,\ldots,\bv_r$.
\begin{claim}
    If $r \le n/2$, With high probability,
    \begin{align*}
        \frnorm{\sum_{i=1}^r \bu_i \T{\bv_i}}^2 = \Theta(n^2 r).
    \end{align*}
\end{claim}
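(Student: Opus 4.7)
The plan is to write $\sum_{i=1}^r \bu_i\T{\bv_i} = \bU\T{\bV}$ where $\bU,\bV$ are the $n \times r$ Gaussian matrices whose columns are the $\bu_i$ and $\bv_i$, and then bound
\begin{equation*}
\frnorm{\bU\T{\bV}}^2 = \operatorname{tr}(\bU\T{\bV}\bV\T{\bU}) = \operatorname{tr}(\T{\bU}\bU \cdot \T{\bV}\bV)
\end{equation*}
by sandwiching this trace between $\sigma_{\min}(\T{\bU}\bU) \cdot \operatorname{tr}(\T{\bV}\bV)$ and $\opnorm{\T{\bU}\bU} \cdot \operatorname{tr}(\T{\bV}\bV)$.

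First, I would apply the Gaussian matrix singular value bounds cited in Section~\ref{sec:prelims} (from \cite{rudelson2009smallest,laurent2000adaptive}) to $\bU$: since $r \le n/2$, with probability $\ge 1 - \exp(-\Theta(n))$ we have $\sigma_{\min}(\bU) \ge c_1\sqrt{n}$ and $\opnorm{\bU} \le C_1\sqrt{n}$ for absolute constants $c_1, C_1 > 0$, so that $\sigma_{\min}(\T{\bU}\bU) \ge c_1^2 n$ and $\opnorm{\T{\bU}\bU} \le C_1^2 n$. Next, $\operatorname{tr}(\T{\bV}\bV) = \frnorm{\bV}^2$ is a sum of $nr$ i.i.d.\ $\chi^2_1$ random variables, so by standard chi-squared concentration, $\operatorname{tr}(\T{\bV}\bV) \in [nr/2, 3nr/2]$ with probability $\ge 1-\exp(-\Theta(nr))$.

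Finally, I would verify the sandwich $\sigma_{\min}(A)\operatorname{tr}(B) \le \operatorname{tr}(AB) \le \opnorm{A}\operatorname{tr}(B)$ for PSD matrices $A,B \in \R^{r \times r}$: diagonalizing $A = U \Lambda \T{U}$ and using that $B$ is PSD (so all diagonal entries of $\T{U} B U$ are nonnegative), we get
\begin{equation*}
\operatorname{tr}(AB) = \sum_{i=1}^r \lambda_i(A) \cdot (\T{U}BU)_{ii},
\end{equation*}
from which both bounds follow immediately. Plugging the high-probability bounds in, a union bound yields $(c_1^2/2)\cdot n^2 r \le \frnorm{\bU\T{\bV}}^2 \le (3C_1^2/2)\cdot n^2 r$ with probability $\ge 1 - \exp(-\Theta(n))$, as claimed.

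There is essentially no obstacle here; the only thing requiring minor care is the lower bound $\operatorname{tr}(AB) \ge \sigma_{\min}(A)\operatorname{tr}(B)$, since for non-commuting PSD $A,B$ the product $AB$ is not itself PSD, so one cannot directly appeal to eigenvalues of $AB$. The diagonalization argument above, combined with the PSD-ness of $\T{U}BU$, handles this cleanly.
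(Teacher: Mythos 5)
Your proposal is correct and uses essentially the same decomposition as the paper: factor $\sum_i \bu_i\T{\bv_i} = \bU\T{\bV}$, invoke $\sigma_{\min}(\bU) \gtrsim \sqrt{n}$, $\opnorm{\bU} \lesssim \sqrt{n}$ from \cite{rudelson2009smallest}, and $\frnorm{\bV}^2 = \Theta(nr)$ from chi-squared concentration, then sandwich. The only difference is that you prove the inequalities $\sigma_{\min}(\bU)^2\frnorm{\bV}^2 \le \frnorm{\bU\T{\bV}}^2 \le \opnorm{\bU}^2\frnorm{\bV}^2$ via the $\operatorname{tr}(\T{\bU}\bU\cdot\T{\bV}\bV)$ PSD-trace argument, whereas the paper simply asserts them as standard facts.
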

\begin{proof}
 By definition, $\bU\T{\bV} = \sum_{i=1}^r \bu_i\T{\bv_i}$. We now have $2nr \ge \frnorm{\bV}^2 \ge nr/2$ with probability $1 - \exp(-\Theta(nr))$ from \cite{laurent2000adaptive}. From \cite{rudelson2009smallest}, with probability $\ge 1 - \exp(-\Theta(n))$, $\sigma_{\min}(\bU) \ge c'\sqrt{n}$ for a constant $c'$ and as seen in preliminaries, $\opnorm{\bU} \le 2\sqrt{n}$. Thus, conditioned on both these events,
    \begin{align*}
        \frnorm{\bU \T{\bV}}^2 \ge \sigma_{\min}(\bU)^2\frnorm{\bV}^2 \ge c'^2 n (nr/2) \ge (c'^2/2)n^2r
    \end{align*}
    and
    \begin{align*}
        \frnorm{\bU\T{\bV}}^2 \le \opnorm{\bU}^2\frnorm{\bV}^2 \le 8n^2 r. &\qedhere
    \end{align*}
\end{proof}
Hence $\frnorm{(\alpha/\sqrt{n})\sum_{i=1}^r \bu_i\T{\bv_i}}^2 = \Theta(nr)$ with a large probability. Therefore we obtain that
\begin{align*}
    &\Pr_{\bu, \bv, \bG, \bsigma}[\frnorm{\calA'((\alpha/\sqrt{n})\sum_{i=1}^r \bu_i \T{\bv_i} + \bG, \bsigma) - (\alpha/\sqrt{n})\sum_{i=1}^r \bu_i \T{\bv_i}}^2 \le c\frnorm{(\alpha/\sqrt{n})\sum_{i=1}^r \bu_i\T{\bv_i}}^2]\\
    &\qquad\ge (1 - \exp(-\Theta(n)))p.
\end{align*}
Thus there is some $\sigma$ such that
\begin{align*}
    &\Pr_{\bu, \bv, \bG}[\frnorm{\calA'((\alpha/\sqrt{n})\sum_{i=1}^r \bu_i \T{\bv_i} + \bG, \sigma) - (\alpha/\sqrt{n})\sum_{i=1}^r \bu_i \T{\bv_i}}^2 \le c\frnorm{(\alpha/\sqrt{n})\sum_{i=1}^r \bu_i\T{\bv_i}}^2]\\
    &\qquad \ge p(1 - \exp(-\Theta(n)))
\end{align*}
which implies that with probability $\ge p(1 - \exp(-\Theta(n)))$, \[\frnorm{(\sqrt{n}/\alpha) \calA'((\alpha/\sqrt{n})\sum_{i=1}^r\bu_i\T{\bv_i} + \bG, \sigma) - \sum_{i=1}^r \bu_i\T{\bv_i}}^2 \le 2cn^2 r.
\]
By picking $c$ small enough, we obtain using Theorem~\ref{thm:main-theorem} that $t=\Omega(\log(n^2/k)/\log\log n)$ for algorithms with success probability $p \ge 9/10$.
\end{proof}
\section{Lower Bounds for Other Problems}\label{apx:applications}
\subsection{Spectral Low Rank Approximation}
\begin{theorem}
Given $n,r \in \Z$, if a $t$-round adaptive algorithm that performs $k \ge nr$ general linear measurements in each round is such that for every $n \times n$ matrix $A$, the algorithm outputs a rank $r$ matrix $B$ such that with probability $\ge 99/100$,
$
    \opnorm{A - B} \le 2\sigma_{r+1}(A),
$
then
$
    t \ge c\frac{\log(n^2/k)}{\log\log(n)}.
$
\label{thm:spectral-norm-lra}
\end{theorem}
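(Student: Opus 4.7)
The plan is to reduce $2$-approximate spectral low rank approximation to the rank-$r$ plant estimation lower bound already established in Theorem~\ref{thm:main-theorem}. Let $\alpha$ be a large constant to be chosen, and consider running the hypothetical $t$-round spectral LRA algorithm on the hard instance $\bA = \bG + (\alpha/\sqrt{n})\sum_{i=1}^r \bu_i\T{\bv_i}$. Let $\hat B$ denote its rank-$r$ output. Since the planted part has rank at most $r$, Weyl's inequality gives $\sigma_{r+1}(\bA) \le \sigma_1(\bG) = \opnorm{\bG}$, and the standard Gaussian operator norm bound recalled in the preliminaries yields $\opnorm{\bG} \le 2\sqrt{n}$ with probability $1 - \exp(-\Omega(n))$. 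Conditioning on this event and on the success of the LRA algorithm (which happens with probability $\ge 99/100$), we obtain $\opnorm{\bA - \hat B} \le 2\sigma_{r+1}(\bA) \le 4\sqrt{n}$.

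Next I would convert this spectral guarantee into a Frobenius guarantee on the plant. By the triangle inequality,
\[
\opnorm{(\alpha/\sqrt{n})\sum_{i=1}^r \bu_i\T{\bv_i} - \hat B} \le \opnorm{\bA - \hat B} + \opnorm{\bG} \le 6\sqrt{n}.
\]
Since both $(\alpha/\sqrt{n})\sum_i \bu_i\T{\bv_i}$ and $\hat B$ have rank at most $r$, their difference has rank at most $2r$, so
\[
\frnorm{(\alpha/\sqrt{n})\sum_{i=1}^r \bu_i\T{\bv_i} - \hat B}^2 \le 2r \cdot \opnorm{(\alpha/\sqrt{n})\sum_{i=1}^r \bu_i\T{\bv_i} - \hat B}^2 \le 72\, nr.
\]
Multiplying both sides by $n/\alpha^2$ gives
\[
\frnorm{\sum_{i=1}^r \bu_i\T{\bv_i} - (\sqrt{n}/\alpha) \hat B}^2 \le (72/\alpha^2)\, n^2 r.
\]

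Picking $\alpha$ large enough so that $72/\alpha^2$ is below the small constant $c$ in the hypothesis of Theorem~\ref{thm:main-theorem}, the scaled output $(\sqrt{n}/\alpha)\hat B$ is a valid rank-$r$ plant estimator with success probability at least $99/100 - \exp(-\Omega(n)) \ge 9/10$. Applying Theorem~\ref{thm:main-theorem} (the algorithm performs the same $t$ rounds, and the rescaling step requires no new measurements) and noting that $\log\alpha = O(1)$ is absorbed into the constant, we conclude $t = \Omega(\log(n^2/k)/\log\log n)$.

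The main obstacle is not substantive: all the heavy machinery sits in Theorem~\ref{thm:main-theorem} and Lemma~\ref{lma:rank-1-statement}. The only care needed is in the constants, namely ensuring that $\alpha$ is picked large enough to drive $72/\alpha^2$ below the threshold $c$ from the plant estimation theorem, while simultaneously keeping $\alpha$ an absolute constant so that $\log \alpha$ disappears into the big-$\Omega$.
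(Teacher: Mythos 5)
Your proof is correct, and it takes a genuinely different reduction than the paper's. Where the paper plants only a \emph{rank-1} spike $\bG + s\bu\T{\bv}$ (with $s = 500/\sqrt{n}$ a hard-coded constant), runs the rank-$r$ LRA algorithm, and then must invoke the extra truncation lemma (Lemma~\ref{lma:truncation-lemma}) to argue that $[K/s]_1$, the rank-1 truncation of the rank-$r$ output, still approximates $\bu\T{\bv}$, you instead match the rank of the plant to the rank of the output: you run the LRA algorithm on the rank-$r$ planted matrix $\bG + (\alpha/\sqrt{n})\sum_i \bu_i\T{\bv_i}$, so that $\hat B$ already has the right rank, and then apply Theorem~\ref{thm:main-theorem} directly. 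This avoids the truncation step and the somewhat fiddly $\vec([K/s]_1)$/triangle-inequality bookkeeping. The trade-off is that you lean on the heavier Theorem~\ref{thm:main-theorem}, which itself contains the nontrivial reduction from rank-$r$ plant recovery to rank-1 plant recovery (sampling fresh $\bu_2,\ldots,\bu_r,\bv_2,\ldots,\bv_r$, taking a projection in an extra round, etc.), whereas the paper's argument for this theorem stays entirely at rank 1 and appeals only to Equation~\eqref{eqn:conclusion-from-main-theorem}, the corollary of Lemma~\ref{lma:rank-1-statement}. Your constant-tracking is also sound: the $c$ in Theorem~\ref{thm:main-theorem} is an absolute constant independent of $\alpha$ (once $\alpha$ exceeds its own absolute threshold), so taking $\alpha = \max(\alpha_0, \sqrt{72/c})$ satisfies both constraints simultaneously, keeps $\alpha$ an absolute constant, and makes the $\log\alpha$ term in the round lower bound vanish into the $\Omega(\cdot)$.
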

We prove the following helper lemma that we use in our proof.
\begin{lemma}
    If $A$ is a rank-$r$ matrix and $B$ is an arbitrary matrix satisfying $\opnorm{A - B} \le t$, then $\opnorm{A - [B]_r} \le 2t$, where $[B]_r$ denotes the best rank-$r$ approximation of the matrix $B$ in operator norm. 
\end{lemma}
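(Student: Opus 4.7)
The plan is to apply the triangle inequality together with the Eckart–Young–Mirsky theorem, which was recalled in Section~\ref{sec:prelims} and is valid for the spectral norm.

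First I would write
\begin{equation*}
    \opnorm{A - [B]_r} \le \opnorm{A - B} + \opnorm{B - [B]_r}.
\end{equation*}
The first term is bounded by $t$ by hypothesis, so it remains to bound $\opnorm{B - [B]_r}$ by $t$ as well. By Eckart–Young–Mirsky for the spectral norm, $[B]_r$ is the best rank-$r$ approximation to $B$ in operator norm, i.e.\
\begin{equation*}
    \opnorm{B - [B]_r} = \min_{\operatorname{rank}(X) \le r} \opnorm{B - X}.
\end{equation*}
Since $A$ itself has rank at most $r$, the minimum on the right is at most $\opnorm{B - A} \le t$. Combining, $\opnorm{A - [B]_r} \le 2t$, as desired.

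There is no real obstacle here: the statement is a standard one-line consequence of the triangle inequality and the optimality of the truncated SVD under the spectral norm. The only subtlety worth flagging is that one needs the spectral-norm version of Eckart–Young–Mirsky (not just the Frobenius-norm version), but this is exactly what is stated in Section~\ref{sec:prelims} via unitary invariance, so it can be invoked directly.
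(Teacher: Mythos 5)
Your proof is correct and is essentially identical to the paper's: both bound $\opnorm{B-[B]_r}$ by $\opnorm{A-B}\le t$ using the optimality of the truncated SVD (since $A$ is itself a rank-$r$ competitor) and then apply the triangle inequality. Nothing more to add.
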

\begin{proof}
    As $A$ has rank at most $r$, we have $t \ge \opnorm{A - B} \ge \opnorm{B - [B]_r} $. Thus, $\opnorm{A - [B]_r} \le \opnorm{A - B} + \opnorm{B - [B]_r} \le 2t$.
    \label{lma:truncation-lemma}
\end{proof}

\begin{proof}[Proof of Theorem~\ref{thm:spectral-norm-lra}]
    By a standard reduction, it suffices to show that there is a distribution on $n \times n$ matrices for which any $t$-round \emph{deterministic} algorithm which makes $k$ general linear measurements in each round and outputs a $2$-approximate spectral rank approximation satisfies $t \ge c\log(n^2/k)/\log\log(n)$. 
   
    Consider the $n \times n$ random matrix $\bM = \bG + s\bu\T{\bv}$ for $s = 500/\sqrt{n}$. Suppose there is a deterministic $t$-round algorithm $\Alg$ that makes $k$ linear measurements and outputs rank-$r$ matrix $K = \Alg(\bM)$ such that with probability $\ge 99/100$,
    \begin{align*}
        \opnorm{{\bM} - K} \le 2\sigma_{r+1}({\bM}) \le 2\opnorm{\bG}.
    \end{align*}
    This implies that $\opnorm{s\bu\T{\bv} - K} \le 3\opnorm{\bG}$ and $\opnorm{\bu\T{\bv} - K/s} \le 3\opnorm{\bG}/s$. 
    
    With probability $\ge 1-\exp(-\Theta(n))$, we have $2\sqrt{n}\ge \opnorm{\bu}, \opnorm{\bv} \ge \sqrt{n}/2$  and $\opnorm{\bG} \le 3\sqrt{n}$ simultaneously. By a union bound, we have that with probability $\ge 0.98$, $\opnorm{\bu\T{\bv} - K/s} \le (9/500)n$. By Lemma~\ref{lma:truncation-lemma}, we have $\opnorm{\bu\T{\bv} - [K/s]_1} \le (9/250)n$ which implies that $\opnorm{\bu \otimes \bv - \vec([K/s]_1)}^2 = \frnorm{\bu  \T\bv - [K/s]_1}^2 \le 2\opnorm{\bu\T{\bv} - [K/s]_1}^2 \le 2(9/250)^2n^2$ where we used the fact that the matrix $\bu\T{\bv} - [K/s]_1$ has rank at most $2$. Let $q \in \R^{n^2} = \vec([K/s]_1)/\frnorm{[K/s]_1}$ be a unit vector. We can show that 
    \begin{align*}
        \la q, \bu \otimes \bv\ra^2 \ge n^2/64
    \end{align*}
    by a simple application of the triangle inequality. Thus, using $\Alg$ we can construct a deterministic $t+1$ round algorithm such that the squared projection of $\bu \otimes \bv$ onto the query space is at least $\ge n^2/64$ with probability $\ge 98/100$. By \eqref{eqn:conclusion-from-main-theorem}, we have that $t + 1\ge c'\log(n^2/k)/\log\log(n)$ and therefore we have $t \ge c\log(n^2/k)/\log\log(n)$ for a small enough constant $c$.
\end{proof}

The following theorem states our lower bound for algorithms which output a spectral LRA for each matrix with high probability. 
\begin{theorem}
Given $n,r \in \Z$, if a $t$-round adaptive algorithm that performs $k \ge nr$ general linear measurements in each round is such that for every $n \times n$ matrix $A$, the algorithm outputs a rank $r$ matrix $B$ such that with probability $\ge 1-1/\poly(n)$,
$
    \opnorm{A - B} \le 2\sigma_{r+1}(A),
$
then
$
    t \ge c\log(n^2/k).
$
\label{theorem:spectral-norm-lra-high-prob}
\end{theorem}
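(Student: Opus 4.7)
The plan is to repeat the reduction used in the proof of Theorem~\ref{thm:spectral-norm-lra}, but apply the high-probability branch \eqref{eqn:high-probability-conclusion} of Lemma~\ref{lma:rank-1-statement} in place of \eqref{eqn:conclusion-from-main-theorem}. As before, by a Yao-style minimax argument it suffices to show that any deterministic $t$-round algorithm that, when run on the planted random matrix $\bM = \bG + s\bu\T{\bv}$ with $s = 500/\sqrt{n}$, outputs a rank-$r$ matrix $K$ satisfying $\opnorm{\bM - K} \le 2\sigma_{r+1}(\bM)$ with probability $\ge 1 - 1/\poly(n)$ over the input must use $t = \Omega(\log(n^2/k))$ rounds.

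First I would carry out exactly the same chain of operator-norm manipulations as in Theorem~\ref{thm:spectral-norm-lra}: bound $\sigma_{r+1}(\bM) \le \opnorm{\bG}$, deduce $\opnorm{\bu\T{\bv} - K/s} \le 3\opnorm{\bG}/s$, apply the truncation Lemma to pass to $[K/s]_1$, and use $\frnorm{X}^2 \le 2\opnorm{X}^2$ for matrices of rank at most $2$ to convert the operator-norm error into a Frobenius-norm (equivalently, Euclidean) error on the vectorization. Each of the auxiliary Gaussian events ($\opnorm{\bu},\opnorm{\bv} = \Theta(\sqrt{n})$ and $\opnorm{\bG} = O(\sqrt{n})$) holds with probability $1 - \exp(-\Theta(n))$, hence with probability $\ge 1 - 1/\poly(n)$ for any fixed polynomial. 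Combining these with the algorithm's $1 - 1/\poly(n)$ success guarantee via a union bound yields, with probability $\ge 1 - 1/\poly(n)$ over $\bM$, the inequality $\la q, \bu \otimes \bv\ra^2 \ge n^2/64$ for the unit vector $q = \vec([K/s]_1)/\frnorm{[K/s]_1}$.

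Next, I would augment the algorithm with a single extra round that queries the vector $q$ (which it has computed deterministically from the first $t$ responses), producing a $(t+1)$-round algorithm whose cumulative orthonormal query matrix $Q$ satisfies $\opnorm{Q(\bu \otimes \bv)}^2 \ge n^2/64 > n^2/100$ with probability $\ge 1 - 1/\poly(n)$. Now invoke the \emph{high-probability} conclusion \eqref{eqn:high-probability-conclusion} of Lemma~\ref{lma:rank-1-statement}, which is obtained by taking $\gamma = O(1)$: if $t+1 \le c\log(n^2/k)/\log(\alpha)$, then $\Pr[\opnorm{Q(\bu \otimes \bv)}^2 \le n^2/100] \ge 1/\poly(n)$. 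Choosing the polynomial in the theorem's hypothesis larger than the one in \eqref{eqn:high-probability-conclusion} forces a contradiction, so $t + 1 > c\log(n^2/k)/\log(\alpha)$. Since $\alpha = 500$ is an absolute constant, this gives $t = \Omega(\log(n^2/k))$, as required.

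The main conceptual point — and the only place any care is needed — is matching the polynomials: the success probability assumed in the theorem must be at least $1 - 1/n^C$ for a $C$ larger than the implicit polynomial loss in \eqref{eqn:high-probability-conclusion} and in the union bound over the Gaussian norm events. Since the latter are all $1 - \exp(-\Theta(n))$ and the former is an explicit constant, this is a purely bookkeeping step rather than a genuine obstacle; everything else is already packaged inside Lemma~\ref{lma:rank-1-statement}.
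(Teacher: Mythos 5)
Your proposal is correct and follows essentially the same route as the paper's own proof: pass from a randomized to a deterministic algorithm via Yao, rerun the operator-norm chain from Theorem~\ref{thm:spectral-norm-lra} on the planted instance $\bM = \bG + s\bu\T{\bv}$ to extract a unit vector $q$ with $\la q, \bu\otimes\bv\ra^2 \ge n^2/64$ with probability $\ge 1 - 1/\poly(n)$, append one extra query round with $q$, and then invoke the $\gamma = O(1)$ branch of Lemma~\ref{lma:rank-1-statement}, i.e.\ \eqref{eqn:high-probability-conclusion}, to force $t+1 = \Omega(\log(n^2/k))$. The one place you go beyond the paper's two-line sketch is in explicitly flagging the polynomial bookkeeping (the success exponent in the hypothesis must dominate the polynomial loss hidden in \eqref{eqn:high-probability-conclusion}); the paper leaves this implicit, but your reading of the quantifiers is right and this is indeed where the small constant $c$ in the lemma is spent.
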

\begin{proof}[Proof of Theorem~\ref{theorem:spectral-norm-lra-high-prob}]
    The proof of this lemma is similar to that of Theorem~\ref{thm:spectral-norm-lra}. The existence of a randomized $t$ round algorithm that outputs a 2-approximate spectral norm LRA for every instance with probability $\ge 1 - 1/\poly(n)$ implies the existence of a deterministic algorithm that outputs a 2-approximate spectral norm LRA with probability $\ge 1 - 1/\poly(n)$ over the distribution of $\bG + (\alpha/\sqrt{n})\bu\T{\bv}$. As in the proof of the Theorem~\ref{thm:spectral-norm-lra}, this algorithm can be used to construct a $t+1$ round algorithm that with probability $\ge 1 - 1/\poly(n)$ over the matrix $\bG + (\alpha/\sqrt{n})\bu\T{\bv}$, computes a unit vector $q$ satisfying
    \begin{align*}
        \la q, \bu \otimes \bv\ra^2 \ge n^2/64.
    \end{align*}
    Now, \eqref{eqn:high-probability-conclusion} implies that $t+1 \ge c\log(n^2/k)$ for a small enough constant $c$.
\end{proof}
For the random matrix $\bM = \bG + (\alpha/\sqrt{n})\bu\T{\bv}$ for a large enough constant $\alpha$ considered in the above theorem, we have that $(\sigma_1(\bM)/\sigma_2(\bM)) \ge 2$ with high probability. Now consider the random matrix 
\begin{align*}
   \bM' =  \begin{bmatrix}
    \bM & 0\\
    0 & 3\sqrt{n}\alpha I_{r-1}
    \end{bmatrix}.
\end{align*}
With high probability, we have $\sigma_r(\bM') = \sigma_1(\bM) \ge (\alpha/2)\sqrt{n}$ and $\sigma_{r+1}(\bM') = \sigma_2(\bM)\le 2\sqrt{n}$ implying $\sigma_{r+1}(\bM')/\sigma_r(\bM') \le 1/2$. A proof similar to that of the above theorem now shows that algorithms using $k \ge nr$ general linear measurements in each round and outputting a $2$-approximate rank-$r$ LRA with probability $\ge 1 - 1/\poly(n)$ for matrices $M$ with $\sigma_{r+1}(M)/\sigma_{r}(M) \le 1/2$ have a lower bound of $\Omega(\log(n^2/k))$ rounds. Moreover for $k \ge Cnr$ for a large enough constant $C$ and $r = O(1)$, the randomized subspace iteration algorithm starting with a subspace of $k/n$-dimensions, after $O(\log(n^2/k))$ rounds outputs, with probability $\ge 1 - 1/\poly(n)$, a $2$-approximate rank-$r$ spectral norm LRA for all matrices satisfying $\sigma_{r+1}(M)/\sigma_r(M) \le 1/2$. See \cite[Theorem~5.8]{gu2015subspace} for a proof. 

Note that the subspace iteration algorithm starting with a subspace of $k/n$-dimensions performs $(k/n) \cdot n = k$ general linear measurements in each round. Thus for $r = O(1)$ and $k \ge Cnr$ for a large enough constant $C$, the lower bound of $\Omega(\log(n^2/k))$ rounds for high probability spectral norm LRA algorithms for ``well-conditioned'' ($\sigma_{r+1}/\sigma_r\le 1/2$) instances is tight up to constant factors and shows that general linear measurements offer no improvement over matrix-vector products for well-conditioned problems. Thus we have the following theorem.
\begin{theorem}
Any randomized $t$-round algorithm that with probability $\ge 1 - 1/\poly(n)$ outputs a 2-approximate rank-$r$ spectral norm LRA for any arbitrary matrix $A$ satisfying $\sigma_{r}(A)/\sigma_{r+1}(A) \ge 2$, must have $t = \Omega(\log(n^2/k))$, where $k \ge nr$ is the number of linear measurements the algorithm makes in each round. 

Moreover, for $r = O(1)$, the subspace iteration algorithm \cite{gu2015subspace} matches the lower bound up to constant factors and outputs a $2$-approximate spectral norm LRA for all such instances with probability $\ge 1 - 1/\poly(n)$ in $O(\log(n^2/k))$ rounds. 
\end{theorem}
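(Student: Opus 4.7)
The plan is to prove the lower bound by reducing to the rank-$1$ plant estimation hard distribution already analyzed, and then to note that the matching upper bound is essentially immediate from the cited subspace iteration guarantee.

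For the lower bound, I would follow the template of Theorem~\ref{theorem:spectral-norm-lra-high-prob} applied to the embedded instance $\bM' = \begin{bmatrix} \bM & 0 \\ 0 & 3\sqrt{n}\alpha I_{r-1}\end{bmatrix}$, where $\bM = \bG + (\alpha/\sqrt{n})\bu\T{\bv}$ is the rank-$1$ hard distribution. By standard Gaussian concentration (and the eigenvalue ordering from the block-diagonal structure), with probability $\ge 1 - 1/\poly(n)$ the singular values of $\bM'$ satisfy $\sigma_1,\ldots,\sigma_{r-1} = 3\sqrt{n}\alpha$, $\sigma_r(\bM') = \sigma_1(\bM) \gtrsim \alpha\sqrt{n}$, and $\sigma_{r+1}(\bM') = \sigma_2(\bM) \le 2\sqrt{n}$, so the condition $\sigma_r/\sigma_{r+1} \ge 2$ is met with high probability. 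Crucially, the first $r-1$ coordinate directions together with $\bu\T{\bv}$ form the top-$r$ singular subspace, and the remaining part of $\bM'$ has spectral norm $\le 2\sqrt{n}$.

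Given a $t$-round algorithm that outputs a rank-$r$ matrix $B$ satisfying $\opnorm{\bM' - B} \le 2\sigma_{r+1}(\bM') \le 4\sqrt{n}$ with probability $\ge 1 - 1/\poly(n)$, the key step is to extract from $B$ a good approximation to $\bu\T{\bv}$. Any linear measurement of $\bM'$ is simulable from a linear measurement of $\bM$ plus known deterministic quantities from the identity block, so the measurement cost is preserved. Using Lemma~\ref{lma:truncation-lemma} and restricting $B$ to its block corresponding to $\bM$ (projecting out the trivial $r-1$ identity directions, which any reasonable algorithm can identify exactly from the known block structure), I would obtain a rank-$1$ matrix $[\tilde{B}/s]_1$ with $\opnorm{\bu\T{\bv} - [\tilde{B}/s]_1} = O(\sqrt{n}/\alpha)$ with high probability, and then, exactly as in the proof of Theorem~\ref{theorem:spectral-norm-lra-high-prob}, query $\vec([\tilde B/s]_1)$ in an extra round to obtain a unit vector $q$ with $\langle q, \bu\otimes\bv\rangle^2 \ge n^2/64$ with probability $\ge 1 - 1/\poly(n)$. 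Invoking the high-probability conclusion \eqref{eqn:high-probability-conclusion} of Lemma~\ref{lma:rank-1-statement} (with $\gamma = O(1)$ and $\alpha$ a large constant), this forces $t + 1 = \Omega(\log(n^2/k))$, giving $t = \Omega(\log(n^2/k))$.

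For the matching upper bound when $r = O(1)$, I would simply invoke \cite[Theorem~5.8]{gu2015subspace}: randomized subspace iteration initialized with a random $k/n$-dimensional Gaussian subspace and run for $O(\log(n^2/k))$ iterations produces, with probability $\ge 1 - 1/\poly(n)$, a $2$-approximate rank-$r$ spectral LRA for every matrix with $\sigma_r/\sigma_{r+1} \ge 2$. Each iteration is a matrix-vector product with $k/n$ vectors, which costs exactly $k$ general linear measurements, so the round complexity matches the lower bound up to constants. The main obstacle is purely in the lower bound reduction: ensuring the simulation of linear measurements of $\bM'$ from measurements of $\bM$ does not inflate the per-round budget, and that the truncation argument survives even when the algorithm only outputs a general (possibly adversarially shaped) rank-$r$ matrix rather than cooperating with the block structure. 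Both are handled by applying Lemma~\ref{lma:truncation-lemma} to the $\bM$-block after the known identity directions are subtracted, since the spectral-norm slack $4\sqrt{n}$ is much smaller than $\opnorm{s\bu\T{\bv}} \asymp \alpha\sqrt{n}$ for $\alpha$ a sufficiently large constant.
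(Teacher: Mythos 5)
Your proposal follows the same route the paper sketches: embed the rank-$1$ hard instance $\bM = \bG + s\bu\T{\bv}$ into the block-diagonal $(n+r-1)\times(n+r-1)$ matrix $\bM'$ with the fixed $3\sqrt{n}\alpha I_{r-1}$ block, observe that the singular values align so that $\sigma_r(\bM')/\sigma_{r+1}(\bM') \ge 2$ with high probability, simulate measurements of $\bM'$ by measurements of $\bM$ since the identity block contributes a known deterministic offset, extract a rank-$1$ approximation of $\bu\T{\bv}$ from the $\bM$-block of the output via the triangle inequality and Lemma~\ref{lma:truncation-lemma}, and close with the high-probability conclusion \eqref{eqn:high-probability-conclusion} of Lemma~\ref{lma:rank-1-statement}; the matching upper bound is likewise read off from \cite[Theorem~5.8]{gu2015subspace}. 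This is exactly the argument the paper gives (in sketch form, by pointing to the proofs of Theorems~\ref{thm:spectral-norm-lra} and~\ref{theorem:spectral-norm-lra-high-prob}), so your proof is correct and not a different route; the only slip is a small typo where you write $\opnorm{\bu\T{\bv} - [\tilde B/s]_1} = O(\sqrt{n}/\alpha)$ when the correct scaling after dividing by $s = \alpha/\sqrt{n}$ is $O(n/\alpha)$, which is still $\le (9/250)n$ once $\alpha$ is a large enough constant, so the rest of the argument is unaffected.
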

\subsection{Symmetric Spectral Norm Low Rank Approximation}
An interesting property of the hard distributions from previous works \cite{simchowitz2018tight,bhsw19} is that those distributions are supported on symmetric matrices and they hence obtain lower bounds for algorithms that work even only on symmetric instances. Although our hard distribution is non-symmetric, we can construct a distribution supported only on symmetric matrices and show lower bounds on algorithms using generalized linear queries for symmetric matrices.
\begin{theorem}
Given $n,r \in \Z$, if a $t$-round adaptive algorithm that performs $k \ge nr$ general linear measurements in each round is such that for every $n \times n$ matrix $A$, the algorithm outputs a rank $r \ge 2$ matrix $B$ such that with probability $\ge 99/100$,
$
    \opnorm{A - B} \le 2\sigma_{r+1}(A),
$
then
$
    t \ge c\frac{\log(n^2/k)}{\log\log n}.
$
\label{thm:symmetric-spectral-norm-lra}
\end{theorem}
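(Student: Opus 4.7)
\emph{Plan.} The plan is to reduce to the rank-$1$ plant recovery problem of Lemma~\ref{lma:rank-1-statement} via the symmetrization trick flagged in the introduction. Let $\bA = \bG + (\alpha/\sqrt{n/2})\bu\T{\bv}$ denote the $(n/2)\times(n/2)$ planted Gaussian distribution from Lemma~\ref{lma:rank-1-statement} (for a large enough constant $\alpha$), and let the hard distribution on $n \times n$ symmetric matrices be
\[
\bA_{\text{sym}} = \begin{bmatrix} 0 & \bA \\ \T{\bA} & 0 \end{bmatrix}.
\]
By Yao's lemma it suffices to lower-bound any deterministic $t$-round, $k$-measurement algorithm $\Alg$ succeeding on this distribution. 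An orthonormal measurement $\la Q, \bA_{\text{sym}}\ra$ with block decomposition $Q = \begin{bmatrix}Q_{11} & Q_{12} \\ Q_{21} & Q_{22}\end{bmatrix}$ equals $\la Q_{12} + \T{Q_{21}}, \bA\ra$, and orthonormalizing the resulting $(n/2)\times(n/2)$ query matrices yields a deterministic $t$-round, $k$-query algorithm $\Alg'$ on $\bA$ whose transcript determines the transcript of $\Alg$ on $\bA_{\text{sym}}$.

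\emph{Extracting the spike.} The singular values of $\bA_{\text{sym}}$ are those of $\bA$, each with multiplicity two. With high probability $\sigma_1(\bA) = \Theta(\alpha\sqrt{n})$ while $\sigma_j(\bA) \le \opnorm{\bG} = O(\sqrt{n})$ for $j \ge 2$ (subtracting the rank-$1$ spike leaves a pure Gaussian), so for $r \ge 2$, $\sigma_{r+1}(\bA_{\text{sym}}) \le \sigma_3(\bA_{\text{sym}}) = \sigma_2(\bA) = O(\sqrt{n})$. A successful rank-$r$ output $B$ thus satisfies $\opnorm{\bA_{\text{sym}} - B} = O(\sqrt{n})$. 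Writing $B$ in blocks, every block has rank $\le r$ and operator norm $\le \opnorm{B}$, so $\opnorm{\bA - B_{12}} \le \opnorm{\bA_{\text{sym}} - B} = O(\sqrt{n})$. Combining this with $\opnorm{\bA - (\alpha/\sqrt{n/2})\bu\T{\bv}} = \opnorm{\bG} = O(\sqrt{n})$ and applying Lemma~\ref{lma:truncation-lemma} to the rank-$1$ matrix $(\alpha/\sqrt{n/2})\bu\T{\bv}$ gives $\opnorm{(\alpha/\sqrt{n/2})\bu\T{\bv} - [B_{12}]_1} = O(\sqrt{n})$; since this difference has rank $\le 2$, $\frnorm{\bu\T{\bv} - (\sqrt{n/2}/\alpha)[B_{12}]_1}^2 = O(n^2/\alpha^2)$.

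\emph{Wrap-up.} Let $q$ be the unit vector along $\vec([B_{12}]_1)$. Because $\opnorm{\bu \otimes \bv}^2 = \opnorm{\bu}^2\opnorm{\bv}^2 = \Theta(n^2)$ with high probability, the Frobenius bound above plus a triangle inequality yields $\la q, \bu \otimes \bv\ra^2 \ge cn^2$ for a universal constant $c > 0$ once $\alpha$ is a sufficiently large constant, with probability $\ge 97/100$. Appending $q$ as one extra round of querying $\bA$ turns $\Alg'$ into a $(t+1)$-round algorithm whose combined query space $Q$ satisfies $\opnorm{Q(\bu \otimes \bv)}^2 \ge cn^2$ with the same probability. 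By \eqref{eqn:conclusion-from-main-theorem} of Lemma~\ref{lma:rank-1-statement} applied to the $(n/2)\times(n/2)$ instance $\bA$ (the $\log((n/2)^2/k)/\log\log(n/2)$ bound absorbs the constant factor), this forces $t + 1 = \Omega(\log(n^2/k)/\log\log n)$, hence $t = \Omega(\log(n^2/k)/\log\log n)$.

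\emph{Main obstacle.} The only genuine subtlety is the doubled top eigenvalue of $\bA_{\text{sym}}$: since $\sigma_1(\bA_{\text{sym}}) = \sigma_2(\bA_{\text{sym}}) = \sigma_1(\bA)$, the zero matrix already satisfies $\opnorm{\bA_{\text{sym}} - 0} = \sigma_2(\bA_{\text{sym}})$ and is a valid $1$-approximate rank-$1$ LRA, revealing nothing about the spike; this is precisely why the hypothesis $r \ge 2$ is essential and why this symmetrization cannot yield a lower bound for rank-$1$ symmetric spectral LRA. For $r \ge 2$, however, $\sigma_3(\bA_{\text{sym}}) = \sigma_2(\bA)$ drops to the Gaussian noise scale $O(\sqrt{n})$, which is what makes the block-extraction-plus-truncation route carry the spike information through with only constant-factor losses.
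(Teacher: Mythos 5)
Your proposal is correct and follows essentially the same route as the paper: symmetrize via $\begin{bmatrix}0 & \bA \\ \T{\bA} & 0\end{bmatrix}$, simulate measurements of the symmetric matrix by measurements of $\bA$, extract the off-diagonal block of the rank-$r$ output, truncate to rank $1$ via Lemma~\ref{lma:truncation-lemma}, and invoke the rank-$1$ plant lower bound. Your explicit use of $\sigma_{r+1}(\bA_{\text{sym}}) \le \sigma_3(\bA_{\text{sym}}) = \sigma_2(\bA) = O(\sqrt{n})$ for $r\ge 2$ is in fact the inequality the argument needs (the paper's displayed chain ``$\le 2\sigma_{r+1}(\bM') \le 2\sigma_1(\bM)$'' is stated more loosely than what it actually uses).
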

\begin{proof}
Consider the $2n \times 2n$ random matrix $\bM'$ defined as 
\begin{align*}
    \bM' = \begin{bmatrix}
    0_{n \times n} & \bM\\
    \T{\bM} & 0_{n \times n}
    \end{bmatrix}
\end{align*}
where $\bM = \bG + s\bu\T{\bv}$ for $s = 500/\sqrt{n}$ and all coordinates of $\bG,\bu,\bv$ are independent standard Gaussian random variables. We have that for $i \in [n]$, $\sigma_{2i-1}(\bM') = \sigma_{2i}(\bM') = \sigma_i(\bM)$ and that any arbitrary $k$ generalized linear measurements of the matrix $\bM'$ can be simulated using $2k$ general linear measurements of $\bM$. Now suppose an algorithm that uses $k$ general linear measurements in each round outputs a rank $r \ge 2$ matrix $K$ satisfying
\begin{align*}
 \opnorm{\bM' - K}  \le 2\sigma_{r+1}(\bM') \le 2\sigma_1(\bM).
\end{align*}
Let the matrix $K$ be of the form
\begin{align*}
    K = \begin{bmatrix}
    K_1 & K_2\\
    K_3 & K_4
    \end{bmatrix}
\end{align*}
As any sub-matrix of $K$ has rank at most that of $K$, we obtain that $K_2$ is a rank-$r$ matrix satisfying
\begin{align*}
    \opnorm{\bM - K_2} \le 2\sigma_1(\bM).
\end{align*}
Thus, the existence of a $t$-round deterministic algorithm that uses $k$ generalized queries in each round and outputs a constant factor approximation of rank $r$, spectral norm LRA, for the random matrix $\bM'$ with probability $\ge 99/100$ implies the existence of a $t$-round algorithm that uses $2k$ generalized queries in each round and outputs a constant factor approximation of rank-$r$ spectral norm LRA for the random matrix $\bM$ with probability $\ge 99/100$. Now, as in the proof of Theorem~\ref{thm:spectral-norm-lra}, we obtain that
\begin{align*}
    t \ge c\frac{\log(n^2/2k)}{\log\log n} \ge c'\frac{\log(n^2/k)}{\log\log n}.
\end{align*}
We obtain the proof by appropriately scaling $n$ in the statement.
\end{proof}
The above theorem proves lower bounds for algorithms that solve constant factor rank-$r$ spectral norm Low Rank Approximation (LRA) for all $r \ge 2$ even for symmetric instances. This leaves open just a lower bound on algorithms solving rank $1$ spectral norm LRA for symmetric instances.
\subsection{Schatten Norm Low Rank Approximation}
We first note the following lemma which bounds the Schatten-$p$ norm of an $n \times n$ Gaussian matrix. 
\begin{lemma}[Equation~3.3 in \cite{LNW14}]
    If $\bG$ is an $n \times n$ matrix with independent entries sampled from $N(0,1)$ and $p \ge 2$, then with probability $\ge 9/10$, $\|\bG\|_{\mathsf{S}_p} \le 30n^{1/2 + 1/p}$. 
    \label{lma:gaussian-schatten-norm-bound}
\end{lemma}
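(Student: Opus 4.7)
The plan is to deduce the Schatten-$p$ bound from the operator norm bound on $\bG$ that was already recorded in the preliminaries, using the elementary inequality $\|\bG\|_{\mathsf{S}_p} \le n^{1/p}\opnorm{\bG}$. Since the singular values $\sigma_1 \ge \sigma_2 \ge \cdots \ge \sigma_n \ge 0$ of $\bG$ satisfy $\sigma_i \le \sigma_1 = \opnorm{\bG}$ for every $i$, we have $\sum_{i=1}^n \sigma_i^p \le n\,\sigma_1^p$, and taking $p$-th roots gives the bound. This reduces the problem to a tail estimate on the operator norm of a standard $n \times n$ Gaussian matrix.

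Next I would invoke the Gaussian operator norm concentration fact stated in the preliminaries (``if $\bG$ is $m \times n$ with $m \ge n$ i.i.d.\ $N(0,1)$ entries, then $\opnorm{\bG} \le \sqrt m + \sqrt n + t$ with probability $\ge 1 - \exp(-\Theta(t^2))$''). Applied with $m = n$ and a sufficiently large absolute constant $t$, this yields $\opnorm{\bG} \le C\sqrt n$ with probability $\ge 9/10$, for some explicit constant $C$ (one can verify that choosing $t$ so that the failure probability drops below $1/10$ allows $C \le 10$ comfortably, or simply $C \le 3$ since $\sqrt n + \sqrt n + O(1) \le 3\sqrt n$ for $n$ large).

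Combining the two steps, on this event
\begin{equation*}
\|\bG\|_{\mathsf{S}_p} \;\le\; n^{1/p}\opnorm{\bG} \;\le\; C\, n^{1/p}\sqrt n \;=\; C\, n^{1/2 + 1/p},
\end{equation*}
and the absolute constant $C$ is easily at most $30$, yielding the claim. Small values of $n$ can be absorbed into the constant $30$ if needed by a direct check, since $n^{1/2+1/p} \ge 1$ and the operator norm concentration bound is stated asymptotically; alternatively one can replace $t = O(1)$ with $t$ chosen to force the failure probability below $1/10$ for all $n \ge 1$.

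The only conceivable obstacle is that the operator norm bound $\opnorm{\bG} \le n^{1/p}\|\bG\|_{\mathsf{S}_p}^{-1}\cdot \|\bG\|_{\mathsf{S}_p}$ is tight only when the singular value profile of $\bG$ is flat, which it is not; so the resulting constant is loose but perfectly adequate. If a sharper constant were desired one would instead bound $\E[\|\bG\|_{\mathsf{S}_p}^p] = \E[\operatorname{tr}((\bG\T{\bG})^{p/2})]$ by trace moment methods and apply Markov, but for the target bound $30\, n^{1/2+1/p}$ the one-line operator-norm argument already suffices.
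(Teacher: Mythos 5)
Your argument is correct, and it is cleaner to state up front that the paper itself does not prove this lemma --- it is imported verbatim as a citation to [LNW14], so there is no in-paper proof to compare against. Your self-contained route via the elementary inequality
\begin{equation*}
\|\bG\|_{\mathsf{S}_p} = \Bigl(\sum_{i=1}^n \sigma_i^p\Bigr)^{1/p} \le n^{1/p}\sigma_1 = n^{1/p}\opnorm{\bG}
\end{equation*}
followed by the operator-norm concentration $\Pr[\opnorm{\bG} \ge 2\sqrt n + t] \le e^{-t^2/2}$ (already recorded in the preliminaries) gives $\opnorm{\bG} \le 5\sqrt n$ with probability well above $9/10$ even for small $n$, hence $\|\bG\|_{\mathsf{S}_p} \le 5 n^{1/2+1/p} \le 30 n^{1/2+1/p}$. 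This is looser than what a trace-moment or Marchenko--Pastur argument (the kind underlying Equation 3.3 of [LNW14]) would give for moderate $p$, where one actually expects $\|\bG\|_{\mathsf{S}_p} \asymp n^{1/2+1/p}$ with a $p$-independent constant and the $n^{1/p}\opnorm{\bG}$ step is wasteful; but the constant $30$ in the stated lemma is generous, so the elementary route suffices here. One small note: the final ``obstacle'' paragraph contains a garbled inequality ($\opnorm{\bG} \le n^{1/p}\|\bG\|_{\mathsf{S}_p}^{-1}\cdot\|\bG\|_{\mathsf{S}_p}$, which is vacuous); you presumably meant to say that the bound $\|\bG\|_{\mathsf{S}_p}\le n^{1/p}\opnorm{\bG}$ is tight only for a flat singular spectrum, and that sentence can simply be dropped.
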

We now state the theorem that shows lower bounds for Schatten norm low rank approximation. 
\begin{theorem}
    Given $n,r \in \Z$, if a $t$-round adaptive algorithm that performs $k \ge nr$ general linear measurements in each round is such that for every $n \times n$ matrix $A$, the algorithm outputs a rank $r$ matrix $B$ such that with probability $\ge 99/100$ we have
\begin{equation*}
    \|A - B\|_{\mathsf{S}_p} \le 2\min_{\text{rank-}r\, X}\|A - X\|_{\mathsf{S}_p},
\end{equation*}
then
\begin{align*}
    t \ge c \frac{\log(n^2/k)}{1 + (1/p)\log(n) + \log\log(n)}.
\end{align*}
\label{thm:schatten-norm-lra}
\end{theorem}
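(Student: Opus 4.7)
The plan is to mimic the proof of Theorem~\ref{thm:spectral-norm-lra}, modifying only the strength of the planted spike to compensate for the $p$-dependent growth of the Schatten-$p$ norm of Gaussian noise. By the standard Yao-type reduction it suffices to rule out a $t$-round \emph{deterministic} algorithm on the hard distribution $\bM = \bG + s\bu\T{\bv}$ from Lemma~\ref{lma:rank-1-statement}, with $s = \alpha/\sqrt{n}$, which makes $k$ general linear measurements per round and produces a $2$-approximate rank-$r$ Schatten-$p$ LRA with probability $\ge 99/100$. The spike strength will be chosen as $\alpha = C n^{1/p}$ for a large enough constant $C$, so that the spike Frobenius norm $\Theta(\alpha \sqrt{n}) = \Theta(n^{1/2+1/p})$ dominates $\|\bG\|_{\mathsf{S}_p} \le 30\, n^{1/2+1/p}$ given by Lemma~\ref{lma:gaussian-schatten-norm-bound}.

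Given the algorithm's rank-$r$ output $K$, note that the rank-$1$ matrix $s\bu\T{\bv}$ is a feasible competitor, so $\min_{\text{rank-}r\,X}\|\bM - X\|_{\mathsf{S}_p} \le \|\bG\|_{\mathsf{S}_p}$, and a triangle inequality then gives $\|s\bu\T{\bv} - K\|_{\mathsf{S}_p} \le 3\|\bG\|_{\mathsf{S}_p}$. The Schatten-$p$ analogue of Lemma~\ref{lma:truncation-lemma}---which is immediate from the Eckart-Young-Mirsky optimality of $[K]_1$ in every unitarily invariant norm---then yields $\|s\bu\T{\bv} - [K]_1\|_{\mathsf{S}_p} = O(n^{1/2+1/p})$. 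Because $s\bu\T{\bv} - [K]_1$ has rank at most $2$, the H\"older relation $\frnorm{A} \le 2^{1/2-1/p}\|A\|_{\mathsf{S}_p}$ for rank-$2$ matrices converts this to $\|s\bu\T{\bv} - [K]_1\|_{\mathsf{F}}^2 = O(n^{1+2/p})$, and dividing by $s^2 = \alpha^2/n$ produces $\opnorm{\bu \otimes \bv - \vec([K]_1/s)}^2 = O(n^{2+2/p}/\alpha^2)$.

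With $\alpha = Cn^{1/p}$, this upper bound is at most $n^2/10^4$, and exactly as in the proof of Theorem~\ref{thm:spectral-norm-lra} one reads off a unit vector $q \propto \vec([K]_1)$ satisfying $\la q, \bu \otimes \bv \ra^2 \ge n^2/100$ with probability $\ge 98/100$. Appending $q$ as the single query of round $t+1$ makes the overall query space $Q$ satisfy $\opnorm{Q(\bu \otimes \bv)}^2 \ge n^2/100$, and invoking \eqref{eqn:conclusion-from-main-theorem} (Lemma~\ref{lma:rank-1-statement} instantiated with $\gamma = O(\log n)$) forces $t+1 \ge c\log(n^2/k)/(\log\log n + \log \alpha)$. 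Since $\log \alpha = O(1 + (1/p)\log n)$, this gives the claimed bound $t = \Omega\!\left(\log(n^2/k)/(1 + (1/p)\log n + \log\log n)\right)$.

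The one genuinely new step---and the reason for the $(1/p)\log n$ term in the denominator---is the rescaling of $\alpha$. For large $p$, $\|\bG\|_{\mathsf{S}_p}$ is a factor of $n^{1/p}$ larger than its operator norm, so a $2$-approximate Schatten-$p$ LRA localizes the spike far less precisely than a $2$-approximate spectral LRA does; compensating by amplifying the spike by $n^{1/p}$ injects an additive $(1/p)\log n$ into $\log \alpha$, which is exactly what Lemma~\ref{lma:rank-1-statement} pays for in the number of rounds. All other ingredients---the rank-$1$ truncation, the rank-$2$ Schatten-to-Frobenius conversion, the projection onto a single query vector, and the appeal to Lemma~\ref{lma:rank-1-statement}---are direct analogues of steps already executed in the proof of Theorem~\ref{thm:spectral-norm-lra}, and I do not expect any complications there.
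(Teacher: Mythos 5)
Your proof is correct and follows essentially the same route as the paper's: same hard distribution $\bG + s\bu\T{\bv}$, same choice $\alpha = \Theta(n^{1/p})$ calibrated via Lemma~\ref{lma:gaussian-schatten-norm-bound}, and the same reduction through \eqref{eqn:conclusion-from-main-theorem}. The only cosmetic difference is that you stay in Schatten-$p$ norm (truncation plus a rank-$2$ H\"older conversion to Frobenius), whereas the paper first passes to operator norm via $\opnorm{\cdot} \le \|\cdot\|_{\mathsf{S}_p}$ and then reuses the spectral-LRA steps verbatim; both give the same bound.
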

\begin{proof}
    The proof is very similar to the proof of Theorem~\ref{thm:spectral-norm-lra}. Let $\bM = \bG + s\bu\T{\bv}$ for $s = \alpha/\sqrt{n}$ for $\alpha$ to be chosen later. Let $\Alg$ be a $t$-round deterministic algorithm that outputs a 2-approximate Schatten $p$-norm low rank approximation for $\bM$ with probability $\ge 99/100$ over $\bM$. Then we have
    \begin{align*}
        \|{\bu\T{\bv} - K/s}\| \le \frac{3\|\bG\|_{\mathsf{S}_p}}{s}
    \end{align*}
    as $\min_{\text{rank-}r\, X}\|\bG + s\bu\T{\bv} - X\|_{\mathsf{S}_p} \le \|\bG\|_{\mathsf{S}_p}$. Using Lemma~\ref{lma:gaussian-schatten-norm-bound}, with probability $\ge 9/10$ over $\bM$, we have that
\begin{align*}
    \|\sum_i \bu_i\T{\bv_i} - K/s\|_{\mathsf{S}_p} \le 90n^{1 + 1/p}/\alpha.
\end{align*}
By picking $\alpha = Bn^{1/p}$ for a large enough constant $B$, we obtain that 
\begin{align*}
    \opnorm{\bu\T{\bv} - K/s} \le \|\bu\T{\bv} - K/s\|_{\mathsf{S}_p} \le (9/250)n.
\end{align*}
Similar to the proof of Theorem~\ref{thm:spectral-norm-lra}, we can construct a unit vector $q \in \R^{n^2}$ such that with probability $\ge 8/10$ over $\bM$, we have
\begin{align*}
    \la q, \bu \otimes \bv\ra^2 \ge n^2/64.
\end{align*}
Using \eqref{eqn:conclusion-from-main-theorem}, we obtain that 
\begin{equation*}
    t + 1 \ge c\frac{\log(n^2/k)}{1 + \log(\alpha) + \log\log(n)} \ge c \frac{\log(n^2/k)}{1 + (1/p)\log(n) + \log\log(n)}.
\end{equation*}
\end{proof}
In particular, for $p = O(\log(n)/\log\log(n))$, this gives a lower bound of $\Omega(p(2 - \log(k)/\log(n)))$ on the number of adaptive rounds required if an algorithm can query $k$ general linear measurements in each round. Recently, Bakshi, Clarkson and Woodruff \cite[Algorithm~2]{bcw22} gave an algorithm for Schatten-$p$ low rank approximation for arbitrary matrices that runs in $O(p^{1/2}\log (n))$\footnote{Although their algorithm is stated to use $rp^{1/6}\log^2(n)$ matrix-vector products, this does not appear to have been optimized, and looking at the analysis it runs in at most $p^{1/2}\log(n)$ iterations, where each round queries at most $r$ matrix-vector products.} iterations to output a constant factor approximation. For instances with $\sigma_r(A)/\sigma_{r+1}(A) = 1 + \Omega(1)$, Saibaba \cite[Theorem~8]{saibaba2019randomized} showed that randomized subspace iteration gives a constant factor approximation to rank-$r$ Schatten-$p$ norm LRA in $O(\log(n))$ iterations using $nr$ linear measurements in each round. 
\subsection{Ky-Fan Norm Low Rank Approximation}
\begin{theorem}
    Given $n,r \in \Z$, if a $t$-round adaptive algorithm that performs $k \ge nr$ general linear measurements in each round is such that,  for every $n \times n$ matrix $A$, the algorithm outputs a rank-$r$ matrix $B$ such that with probability $\ge 99/100$ we have 
\begin{equation*}
    \|A - B\|_{\mathsf{F}_p} \le 2\min_{\text{rank-}r\, X}\|A - X\|_{\mathsf{F}_p},
\end{equation*}
then
\begin{align*}
    t \ge c \frac{\log(n^2/k)}{\log(p) + \log\log(n)}.
\end{align*}
\label{lma:kyfan-norm-lra}
\end{theorem}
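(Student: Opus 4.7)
The plan is to mirror the Schatten-norm proof of Theorem~\ref{thm:schatten-norm-lra} almost verbatim, only changing the noise-floor bound. As before, it suffices by Yao's lemma to lower bound deterministic algorithms on the hard distribution $\bM = \bG + s\bu\T{\bv}$ for $s = \alpha/\sqrt{n}$, where $\alpha$ will be chosen as a function of $p$. Given any deterministic $t$-round algorithm $\Alg$ that outputs a rank-$r$ matrix $K$ with $\|\bM - K\|_{\mathsf{F}_p} \le 2 \min_{\text{rank-}r\, X}\|\bM - X\|_{\mathsf{F}_p}$ with probability $\ge 99/100$ over $\bM$, the rank-$r$ truncation $[\bM]_r$ shows $\min_{\text{rank-}r\, X}\|\bM - X\|_{\mathsf{F}_p} \le \|\bG\|_{\mathsf{F}_p}$, and so $\|s\bu\T{\bv} - K\|_{\mathsf{F}_p} \le 3\|\bG\|_{\mathsf{F}_p}$.

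The key step is to bound $\|\bG\|_{\mathsf{F}_p}$. This is immediate and does not need anything as delicate as Lemma~\ref{lma:gaussian-schatten-norm-bound}: since $\|\bG\|_{\mathsf{F}_p}$ is the sum of the top $p$ singular values and $\opnorm{\bG} \le 3\sqrt{n}$ with probability $\ge 1 - \exp(-\Theta(n))$, we immediately have $\|\bG\|_{\mathsf{F}_p} \le 3 p\sqrt{n}$ with the same probability. Plugging this in and dividing through by $s$, we get $\|\bu\T{\bv} - K/s\|_{\mathsf{F}_p} \le 9pn/\alpha$. Since the Ky-Fan $p$ norm dominates the operator norm for $p \ge 1$, this gives the operator norm bound $\opnorm{\bu\T{\bv} - K/s} \le 9pn/\alpha$.

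Now setting $\alpha = Bp$ for a large enough constant $B$ (and $\alpha \ge 10$ as required by Lemma~\ref{lma:rank-1-statement}), the operator norm error is at most $(9/250)n$. From here the endgame is identical to the proofs of Theorem~\ref{thm:spectral-norm-lra} and Theorem~\ref{thm:schatten-norm-lra}: apply the truncation Lemma~\ref{lma:truncation-lemma} to $K/s$ to get a rank-$1$ matrix $[K/s]_1$ with $\opnorm{\bu\T{\bv} - [K/s]_1} \le (9/125)n$, convert to Frobenius norm via the rank-$2$ bound $\frnorm{\cdot}^2 \le 2\opnorm{\cdot}^2$, let $q = \vec([K/s]_1)/\frnorm{[K/s]_1}$, and use the triangle inequality to conclude $\la q, \bu \otimes \bv\ra^2 \ge n^2/64$ with probability $\ge 98/100$. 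The $(t+1)$-round algorithm that additionally queries $q$ is then inconsistent with \eqref{eqn:conclusion-from-main-theorem} unless
\[
    t + 1 \ge c\frac{\log(n^2/k)}{1 + \log(\alpha) + \log\log n} \ge c'\frac{\log(n^2/k)}{\log(p) + \log\log n}.
\]

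There is no real obstacle here; the only point worth a sanity-check is the scaling of $\alpha$ with $p$: the Schatten proof uses $\alpha = B n^{1/p}$ because $\|\bG\|_{\mathsf{S}_p}/\sqrt{n}$ scales like $n^{1/p}$, whereas here the Ky-Fan noise floor scales linearly in $p$, producing a $\log p$ instead of a $(1/p)\log n$ in the denominator. One should also briefly verify that the hypothesis $p \le \min(n,n) = n$ in the definition of Ky-Fan norms is respected, and that the assumption $\alpha \ge 10$ of Lemma~\ref{lma:rank-1-statement} is met, which is automatic for $p \ge 1$ once $B$ is taken large enough.
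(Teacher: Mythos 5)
Your proposal is correct and follows essentially the same route as the paper: the same hard instance $\bM = \bG + s\bu\T{\bv}$ with $s = \alpha/\sqrt{n}$, the same reduction via triangle inequality to $\|\bu\T{\bv} - K/s\|_{\mathsf{F}_p} \le 3\|\bG\|_{\mathsf{F}_p}/s$, the same noise-floor bound $\|\bG\|_{\mathsf{F}_p} \le O(p\sqrt{n})$ via $\opnorm{\bG}=O(\sqrt n)$, the same choice $\alpha = Bp$, and the same endgame of constructing a unit vector $q$ with $\la q, \bu\otimes\bv\ra^2 \ge n^2/64$ to invoke \eqref{eqn:conclusion-from-main-theorem}. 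The only difference is that you spell out the truncation/Frobenius-conversion steps which the paper compresses by reference to its Theorem~\ref{thm:spectral-norm-lra}.
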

\begin{proof}
Again consider the same instance $\bM = \bG + s\bu\T{\bv}$ with $s = \alpha/\sqrt{n}$ for $\alpha$ chosen later. If $K$ is a $2$-approximate rank-$r$ low rank approximation of $\bM$ in Ky-Fan $p$ norm, then
\begin{equation*}
    \|\bG + s\bu\T{\bv} - K\|_{\mathsf{F}_p} \le 2\|\bG\|_{\mathsf{F}_p}
\end{equation*}
which by the triangle inequality implies that
$
    \|s\sum_{i=1}^r \bu_i\T{\bv_i} - K\|_{\mathsf{F}_p} \le 3\|\bG\|_{\mathsf{F}_p}.
$ As $\|\bG\|_2 \le 2\sqrt{n}$ with high probability, we have that $\|\bG\|_{\mathsf{F}_p} \le 2p\sqrt{n}$ with high probability. Thus,
\begin{align*}
    \opnorm{s\sum_{i=1}^r \bu_i\T{\bv_i} - K/s} \le \|s\sum_{i=1}^r \bu_i\T{\bv_i} - K/s\|_{\mathsf{F}_p} \le \frac{6pn}{\alpha}.
\end{align*}
Picking $\alpha = Bp$ for a large enough constant $B$, we obtain that using the matrix $K$, we can construct a unit vector $q$ such that $\la q, \bu\otimes \bv\ra^2 \ge n^2/64$, thus showing a
\begin{align*}
    t = c\frac{\log(n^2/k)}{\log(p) + \log\log(n)}
\end{align*}
round lower bound on any algorithm that performs $k$ general linear measurements to output a $2$-approximate rank-$r$ approximation in Ky-Fan $p$ norm.     
\end{proof}
For $p = O(1)$, the Block Krylov iteration algorithm \cite{mm15} gives an $O(1)$ approximate solution to the rank-$r$ Ky-Fan norm low rank approximation problem in $O(\log(n))$ rounds while querying $nr$ general linear measurements in each round. Thus our lower bound on constant approximate algorithms for Ky-Fan norm LRA is optimal up to a $\log\log(n)$ factor for $r,p = O(1)$.
\subsection{Frobenius Norm Low Rank Approximation}
\begin{theorem}
Given an $n \times n$ matrix $A$ with $\sigma_1(A)/\sigma_2(A) \ge 2$, if a $t$-round algorithm outputs a rank-$1$ matrix $B$ satisfying $\frnorm{A - B}^2 \le (1 + 1/n)\frnorm{A - [A]_1}^2$ with probability $\ge 99/100$, then we have
$
    t \ge c\log(n^2/k)/\log\log(n).
$
\label{thm:frobenius-norm-lra}
\end{theorem}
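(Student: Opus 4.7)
The plan is to instantiate the same hard distribution as in the proofs of Theorems \ref{thm:spectral-norm-lra} and \ref{thm:schatten-norm-lra}, namely $\bM = \bG + s\bu\T{\bv}$ with $s = \alpha/\sqrt{n}$ for a sufficiently large absolute constant $\alpha$, reduce to a deterministic algorithm via Yao, and use the approximation $B$ to construct a unit vector in $\R^{n^2}$ with large inner product with $\bu \otimes \bv$. First I would verify the gap hypothesis $\sigma_1(\bM)/\sigma_2(\bM) \ge 2$: by Weyl's rank-$1$ interlacing $\sigma_2(\bM) \le \sigma_1(\bG) = \opnorm{\bG} \le 2\sqrt{n}$ with high probability, while $\sigma_1(\bM) \ge s\opnorm{\bu}\opnorm{\bv} - \opnorm{\bG} \ge (\alpha/2 - 2)\sqrt{n}$, so any $\alpha$ larger than a fixed constant suffices.

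The core step is to extract structural information from the approximation inequality. Writing $B = \sigma\, x\T{y}$ with $\opnorm{x}=\opnorm{y}=1$, expanding $\frnorm{\bM - B}^2 \le (1+1/n)\frnorm{\bM - [\bM]_1}^2$ and minimizing the resulting quadratic in $\sigma$ yields
\begin{equation*}
  (\T{x}\bM y)^2 \;\ge\; \sigma_1(\bM)^2 \;-\; \tfrac{1}{n}\,\frnorm{\bM - [\bM]_1}^2.
\end{equation*}
Using $\sigma_1(\bM)^2 \ge (\alpha - 2)^2 n$ and $\frnorm{\bM - [\bM]_1}^2 \le \frnorm{\bG}^2 \le 2n^2$ w.h.p., the right-hand side is at least $(\alpha-2)^2 n - 2n$, so $|\T{x}\bM y| \ge (\alpha - c_1)\sqrt{n}$ for an absolute $c_1$. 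Since $\T{x}\bM y = \T{x}\bG y + s\la x,\bu\ra\la y,\bv\ra$ and $|\T{x}\bG y| \le \opnorm{\bG} \le 2\sqrt{n}$, we conclude
\begin{equation*}
  |\la x,\bu\ra\la y,\bv\ra| \;\ge\; \tfrac{(\alpha - c_1 - 2)\sqrt{n}}{s} \;=\; \bigl(1 - \tfrac{c_1+2}{\alpha}\bigr)\,n.
\end{equation*}
Choosing $\alpha$ a sufficiently large constant then gives $\la x\otimes y, \bu\otimes\bv\ra^2 \ge n^2/100$.

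To finish, the reduction appends one extra round in which the algorithm queries the unit vector $q := \vec(B)/\frnorm{B} = x \otimes y \in \R^{n^2}$. The combined query space $Q$ of the $t+1$ rounds then satisfies $\opnorm{Q(\bu\otimes\bv)}^2 \ge \la q,\bu\otimes\bv\ra^2 \ge n^2/100$ with probability at least, say, $4/5$. Invoking the rank-$1$ bound in the form of \eqref{eqn:conclusion-from-main-theorem} (via Lemma~\ref{lma:rank-1-statement} with $\gamma = \Theta(\log n)$) forces $t + 1 = \Omega(\log(n^2/k)/\log\log n)$, hence the stated lower bound on $t$.

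The main obstacle is that the relative slack in the approximation is $1/n$, so one must ensure this error is controlled by $\sigma_1(\bM)^2$ rather than by $\frnorm{\bM}^2$; the explicit calculation above shows the relevant quantity is $(1/n)\frnorm{\bM - [\bM]_1}^2 \lesssim n$, which is dominated by $\sigma_1(\bM)^2 \approx \alpha^2 n$ once $\alpha$ is a large constant. A subsidiary care point is that the top singular direction of $\bM$ must align well enough with $\bu\otimes\bv$: this is automatic once we extract $x\T{y}$ from the rank-$1$ output $B$ (rather than from $[\bM]_1$ directly) and exploit the gap hypothesis to ensure the Frobenius-optimal rank-$1$ approximation is unique up to sign.
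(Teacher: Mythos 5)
Your proof is correct but takes a genuinely different route from the paper's. The paper first proves (via a cited result of Gu, \cite[Theorem~3.4]{gu2015subspace}) that a $(1+1/n)$-Frobenius LRA is automatically a $\sqrt{2}$-spectral LRA, and then invokes the already-proved spectral norm LRA lower bound, Theorem~\ref{thm:spectral-norm-lra}, black-box. You instead expand $\frnorm{\bM - \sigma x\T{y}}^2$ directly, minimize the quadratic in $\sigma$, and obtain the clean inequality
\begin{equation*}
    (\T{x}\bM y)^2 \ge \sigma_1(\bM)^2 - \tfrac{1}{n}\frnorm{\bM - [\bM]_1}^2,
\end{equation*}
which, after bounding $\sigma_1(\bM)^2 = \Theta(\alpha^2 n)$ from below and $\frnorm{\bM-[\bM]_1}^2 \le \frnorm{\bG}^2 = O(n^2)$ from above, directly yields $|\la x, \bu\ra\la y, \bv\ra| = \Omega(n)$ and hence the required $\la x\otimes y, \bu\otimes\bv\ra^2 \ge n^2/100$. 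This is more self-contained (no appeal to the Frobenius-to-spectral conversion lemma and no detour through the spectral LRA argument), and it makes transparent exactly why a $1/n$ relative slack is the right threshold: the additive error it induces is $(1/n)\frnorm{\bM - [\bM]_1}^2 \approx n$, which is dominated by $\sigma_1(\bM)^2 \approx \alpha^2 n$ once $\alpha$ is a large constant. The paper's reduction is shorter on the page and reuses machinery, while your calculation isolates the essential structural content. Two small notes: your intermediate constants are slightly inconsistent (you derive $\sigma_1(\bM) \ge (\alpha/2 - 2)\sqrt{n}$ but then use $\sigma_1(\bM)^2 \ge (\alpha-2)^2 n$, and your final bound $|\la x,\bu\ra\la y,\bv\ra| \ge (1 - (c_1+2)/\alpha)n$ should have a $1/2$ in place of the leading $1$ when $\|\bu\|\|\bv\| \ge n/2$ is all you assume), but these are harmless since the argument only needs $\Omega(n)$ for some absolute constant; and your use of Weyl's interlacing to check $\sigma_2(\bM) \le \opnorm{\bG}$ for the gap hypothesis is exactly right.
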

First, we prove the following helper lemma.
\begin{lemma}
    Given $A \in \R^{n \times n}$, if a rank $r$ matrix $B$ is a $1 + 1/(n-r)$ approximate rank $r$ Frobenius norm LRA, then 
    \begin{align*}
        \opnorm{A - B}^2 \le 2\sigma_{r+1}(A)^2.
    \end{align*}
\end{lemma}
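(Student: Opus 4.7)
The plan is to combine Weyl's inequality with the Frobenius approximation hypothesis to bound $\sigma_1(A-B)^2$ by the ``slack'' in the Frobenius bound.

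First I would observe that since $B$ has rank at most $r$, Weyl's inequality for singular values gives
\[
    \sigma_i(A - B) \ge \sigma_{i+r}(A) \quad \text{for all } i \ge 1.
\]
In particular, the tail of singular values of $A - B$ dominates the tail of $A$. Summing the squares of these lower bounds over $i = 2, \ldots, n - r$ yields
\[
    \sum_{i=2}^{n-r} \sigma_i(A - B)^2 \;\ge\; \sum_{j=r+2}^{n} \sigma_j(A)^2.
\]

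Next I would use the hypothesis that $B$ is a $(1 + 1/(n-r))$-approximate rank-$r$ Frobenius LRA: since the optimal rank-$r$ error equals $\sum_{j=r+1}^n \sigma_j(A)^2$, we have
\[
    \frnorm{A - B}^2 \;\le\; \left(1 + \tfrac{1}{n-r}\right)\sum_{j=r+1}^n \sigma_j(A)^2.
\]
Isolating the top singular value and applying the previous display,
\[
    \opnorm{A - B}^2 = \sigma_1(A - B)^2 \le \frnorm{A - B}^2 - \sum_{i=2}^{n-r} \sigma_i(A - B)^2 \le \left(1 + \tfrac{1}{n-r}\right)\sum_{j=r+1}^n \sigma_j(A)^2 - \sum_{j=r+2}^{n} \sigma_j(A)^2.
\]
The right-hand side telescopes to $\sigma_{r+1}(A)^2 + \tfrac{1}{n-r}\sum_{j=r+1}^n \sigma_j(A)^2$.

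Finally, bounding each $\sigma_j(A)$ in the tail by $\sigma_{r+1}(A)$ gives $\tfrac{1}{n-r}\sum_{j=r+1}^n \sigma_j(A)^2 \le \sigma_{r+1}(A)^2$, which combines with the previous display to yield $\opnorm{A - B}^2 \le 2\sigma_{r+1}(A)^2$, as required. The main potential obstacle is simply making sure the Weyl inequality is invoked correctly (i.e., with the right index shift by the rank of the perturbation $B$), and keeping track of the range $i \le n - r$ so that $\sigma_{i+r}(A)$ is well defined; everything else is an algebraic manipulation and the trivial bound on the tail average.
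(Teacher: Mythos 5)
Your argument is correct, and it is a self-contained, elementary version of what the paper delegates to a citation. The paper invokes a black-box fact from \cite[Theorem~3.4]{gu2015subspace} stating that $\frnorm{A-B}^2 \le \frnorm{A - [A]_r}^2 + \eta$ for a rank-$r$ matrix $B$ implies $\opnorm{A-B}^2 \le \opnorm{A - [A]_r}^2 + \eta$, and then plugs in $\eta = \frac{1}{n-r}\sum_{j>r}\sigma_j(A)^2 \le \sigma_{r+1}(A)^2$. You instead prove the needed transfer from Frobenius slack to spectral slack directly: Weyl's inequality $\sigma_i(A-B) \ge \sigma_{i+r}(A)$ (using $\mathrm{rank}(B) \le r$) lets you lower bound the tail $\sum_{i\ge 2}\sigma_i(A-B)^2$, subtract it from the Frobenius bound to isolate $\sigma_1(A-B)^2$, and bound the leftover by $2\sigma_{r+1}(A)^2$. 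The two routes end in exactly the same estimate; yours simply unfolds the cited lemma rather than referencing it, which buys self-containment at the cost of a bit more algebra. One minor presentational point: you implicitly drop the nonnegative terms $\sum_{i=n-r+1}^{n}\sigma_i(A-B)^2$ when passing from $\frnorm{A-B}^2$ to the partial sum through $i=n-r$; this is fine, but worth a word since the index range there is exactly where Weyl stops applying.
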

\begin{proof}
We use the fact \cite[Theoem~3.4]{gu2015subspace} that if a rank $r$ matrix satisfies $\frnorm{A-B}^2 \le \frnorm{A - [A]_r}^2 + \eta$, then $\opnorm{A - B}^2 \le \opnorm{A - [A]_r}^2 + \eta$. If $B$ is a $1 + 1/(n-r)$ approximate solution for rank $r$ Frobenius norm LRA, we have
\begin{align*}
    \frnorm{A -B}^2 &\le \left(1 + \frac{1}{n-r}\right)\frnorm{A-[A]_r}^2\\
                    &\le \frnorm{A - [A]_r}^2 + \frac{\sigma_{r+1}(A)^2 + \ldots + \sigma_n(A)^2}{n-r}\\
                    &\le \frnorm{A - [A]_r}^2 + \sigma_{r+1}(A)^2
\end{align*}    
which implies $\opnorm{A-B}^2 \le \opnorm{A-[A]_r}^2 + \sigma_{r+1}(A)^2 = 2\sigma_{r+1}(A)^2$.
\end{proof}

\begin{proof}[Proof of Theorem~\ref{thm:frobenius-norm-lra}]
Using the above lemma, we have that whenever $\frnorm{A - B}^2 \le (1+1/n)\frnorm{A - [A]_1}^2$ for a rank $1$ matrix $B$, then $\opnorm{A-B}^2 \le 2\opnorm{A-B}^2$. Consider the random matrix $\bM = \bG + (\alpha/\sqrt{n})\bu\T{\bv}$ considered in Theorem~\ref{thm:spectral-norm-lra}. With probability $\ge 99/100$, we have that $\sigma_1(\bM)/\sigma_2(\bM) \ge 2$ by picking $\alpha$ to be a large enough constant.

A $t$-round randomized algorithm for $(1 + 1/n)$-approximate rank-$1$ Frobenius norm LRA thus implies the existence of a $(t+1)$ round deterministic algorithm that with probability $\ge 98/100$ over the random matrix $\bG + (\alpha/\sqrt{n})\bu\T{\bv}$ makes general queries such that the squared projection of $\bu \otimes \bv$ onto the query space of the algorithm exceeds $n^2/64$ using similar arguments as in proof of Theorem~\ref{thm:spectral-norm-lra}. Now, using \eqref{eqn:conclusion-from-main-theorem}, we obtain that
\begin{align*}
    t \ge c\log(n^2/k)/\log\log(n)
\end{align*}
for a small enough constant $c$. 
\end{proof}
For matrices $A$ with $\sigma_1(A)/\sigma_2(A) \ge 2$, subspace iteration algorithm gives a $1+1/n$-approximate solution in $O(\log(n))$ rounds using $n$-linear measurements \cite{golub1981block, mm15}. The above lower bound shows that if an algorithm is allowed $o(\log(n)/\log\log(n))$ adaptive rounds, then the algorithm has to make $n^{2-o(1)}$ linear measurements in each round, which is almost as many measurements as is required to read the entire matrix.

\subsection{Lower Bound for Reduced-Rank Regression in Spectral Norm}
Given matrices $A$, $B$, and a rank parameter $r$, the reduced-rank regression problem in spectral norm is defined as 
\begin{align*}
    \min_{\text{rank-}r\ X}\opnorm{AX-B}.
\end{align*}
Taking $A = I$, we have that the spectral norm low rank approximation is a special case of the reduced-rank regression problem. Thus we have the following lower bound on the number of rounds of an adaptive algorithm that outputs a 2-approximate solution for the reduced-rank regression problem. 
\begin{theorem}
If a $t$-round adaptive algorithm which given arbitrary $n \times n$ matrices $A, B$  and a parameter $r$ outputs a $2$-approximate solution for the reduced-rank regression problem with probability $\ge 9/10$, then $t = \Omega(\log(n^2/k)/\log\log(n))$ where $k$ is the number of linear measurements of the matrices $A$ and $B$ that the adaptive algorithm performs in each of the $t$ rounds. 
\end{theorem}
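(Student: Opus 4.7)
The plan is to obtain this lower bound as an immediate corollary of Theorem~\ref{thm:spectral-norm-lra} by exhibiting a simple reduction from spectral norm low rank approximation to reduced-rank regression. Specifically, I will set $A = I_n$ so that
\begin{align*}
    \min_{\text{rank-}r\ X}\opnorm{AX - B} = \min_{\text{rank-}r\ X}\opnorm{X - B} = \sigma_{r+1}(B),
\end{align*}
which is exactly the rank-$r$ spectral norm LRA problem on the matrix $B$.

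Now suppose for contradiction that there is a randomized $t$-round adaptive algorithm $\Alg$ that uses $k \ge nr$ general linear measurements per round on the pair $(A, B)$ and outputs a rank-$r$ matrix $\hat X$ satisfying $\opnorm{A\hat X - B} \le 2\min_{\text{rank-}r\ X}\opnorm{AX - B}$ with probability $\ge 9/10$. Given any $n \times n$ matrix $B$ accessible through linear measurements, I can simulate $\Alg$ on the instance $(I_n, B)$: every linear measurement of $A = I_n$ can be computed by the algorithm itself at no cost (since $I_n$ is fixed and known), while every linear measurement of $B$ corresponds to a single linear measurement of the input matrix $B$. Hence this yields a $t$-round adaptive algorithm that uses at most $k$ linear measurements of $B$ per round and outputs a rank-$r$ matrix $\hat X$ such that $\opnorm{B - \hat X} \le 2\sigma_{r+1}(B)$ with probability $\ge 9/10$.

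Applying Theorem~\ref{thm:spectral-norm-lra} (which is stated for success probability $\ge 99/100$, but the constant $9/10$ here can be boosted by a constant number of independent repetitions and taking the best candidate measured against a single additional linear measurement, or one can directly reprove Theorem~\ref{thm:spectral-norm-lra} with success probability $9/10$, which only changes the constants in the Bayes risk argument) we conclude that $t = \Omega(\log(n^2/k)/\log\log n)$. The only minor point to verify is that the reduction does not inflate the number of measurements by more than a constant factor; since measurements of $A = I_n$ are free, this is immediate. No genuine obstacle arises — the lemma is essentially a corollary, and the work has already been done in Theorem~\ref{thm:spectral-norm-lra}.
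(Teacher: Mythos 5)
Your proof takes the same route as the paper: set $A = I_n$ so that reduced-rank regression reduces to spectral norm LRA, observe that measurements of the known matrix $I_n$ are free, and invoke Theorem~\ref{thm:spectral-norm-lra}. The paper treats this as an immediate corollary and gives no separate argument; your handling of the $9/10$ vs.\ $99/100$ success-probability mismatch (reprove with adjusted constants, which works because the underlying bound from Lemma~\ref{lma:rank-1-statement} has slack) is a fair and correct addition.
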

With $nr$ adaptive linear measurements in each round, the above theorem gives a lower bound of $\Omega(\log(n/r)/\log\log(n))$  on the number of rounds required to obtain a factor 2 approximation. Kacham and Woodruff \cite{kacham2021reduced} give an algorithm for reduced-rank regression that outputs constant factor approximations to the  reduced-rank problem in $O(\poly(\log n))$ (ignoring polylogarithmic factors from condition numbers) adaptive rounds using $nr$ linear measurements in each round.

If the matrix $B = M + P$ has the structure of a planted matrix studied in this paper with $P$ being a rank $r$ matrix and $\opnorm{P} > 10\opnorm{B-[B]_r}$ and if the matrix $A$ is well conditioned, then we can find an $O(1)$ approximate solution to reduced-rank regression problem in $O(\log(n) + \log(\opnorm{P}/\text{OPT}))$ rounds, where $\text{OPT}$ is the optimal value of the reduced-rank regression problem.

First, we find a rank $r$ matrix $P'$ such that $\opnorm{B - P'} \le 2\opnorm{B - [B]_{r}}$ in $O(\log(n))$ adaptive rounds using the Block Krylov iteration algorithm of \cite{mm15}, which queries $nr$ linear measurements in each round.  Now, for any matrix $X$,
\begin{align*}
    \opnorm{AX - P'} = \opnorm{AX - B} \pm \opnorm{B - P'} = \opnorm{AX - B} \pm 2 \cdot \text{OPT}.
\end{align*}

Let $P' = U\Sigma\T{V}$ be the singular value decomposition with matrix $U\Sigma$ having $r$ columns. Then using high precision regression routines (see \cite{woodruff2014sketching} and references therein), we find a matrix $\tilde{X}$ such that
\begin{align*}
    \frnorm{A\tilde{X} - AA^+ U\Sigma}^2 \le \varepsilon\frnorm{U\Sigma}^2 \le 10r\varepsilon\opnorm{P}^2.
\end{align*}
in $O(\log(1/\varepsilon))$ iterations, where each iteration queries $nr$ linear measurements. Again, using the triangle inequality, we have
\begin{align*}
    \opnorm{A\tilde{X}\T{V} - P'} = \opnorm{A\tilde{X} - U\Sigma} \le \opnorm{AA^+ U\Sigma - A\tilde{X}} + \opnorm{AA^+U\Sigma - U\Sigma}.
\end{align*}
If $X^*$ is the optimal solution for the reduced rank regression problem, we have
\begin{align*}
    \opnorm{AA^+U\Sigma - U\Sigma} &= \opnorm{AA^+U\Sigma\T{V} - U\Sigma\T{V}}\\
    &\le \opnorm{AX^* - P'}\\
    &\le \opnorm{AX^* - B} + 2\cdot \text{OPT} = 3 \cdot \text{OPT}.
\end{align*}
We also have $\opnorm{AA^+U\Sigma - A\tilde{X}} \le \frnorm{AA^+U\Sigma - A\tilde{X}} \le O(\sqrt{r\varepsilon}\opnorm{P})$. We set $\varepsilon = (\text{OPT}/\opnorm{P})^2/Kr$ for a large enough constant $K$ and obtain that $\opnorm{AA^+U\Sigma - A\tilde{X}} \le \text{OPT}$. Overall, we have $\opnorm{A\tilde{X}\T{V} - B} \le \opnorm{A\tilde{X}\T{V} - P'} + 2 \cdot \text{OPT} \le 6 \cdot \text{OPT}$. Thus we obtain a $6$-approximate solution in $O(\log(n) + \log(\opnorm{P}/\text{OPT}))$ iterations given that $A$ is well-conditioned and $B$ is of the form $M + P$ with $\opnorm{P} \ge 10\opnorm{M}$ for a rank $r$ matrix $P$. Thus the lower bound is near optimal for algorithms that output $O(1)$ approximate solution for planted models and well-conditioned coefficient matrices.
\subsection{Lower Bound for Approximating the \texorpdfstring{$i$}{i}-th Singular Vectors}

\begin{theorem}
If a $t$-round algorithm, given an $n \times n$ matrix $A$ and $i \in [n]$, outputs unit vectors $u_i'$ and $v_i'$ such that with probability $\ge 9/10$,
\begin{align*}
    \opnorm{u_i' - u_i} \le 1/10 \text{ and } \opnorm{v_i' - v_i} \le 1/10
\end{align*}
where $u_i$ and $v_i$ are respectively the $i$-th left and right singular vectors of the matrix $A$, then \[t = \Omega(\log(n^2/k)/\log\log(n)),\] where $k$ is the number of linear measurements the algorithm performs on the matrix $A$ in each of the $t$ rounds.

We also have a  $t=\Omega(\log(n^2/k))$ lower bound on the number of rounds required for an adaptive algorithm to compute approximations to the $i$-th left and right singular vectors with probability $\ge 1 - 1/\poly(n)$.
\label{thm:i-th-singular-vector}
\end{theorem}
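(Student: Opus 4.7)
The plan is to reduce approximate $i$-th singular vector recovery to the rank-$1$ plant estimation problem for the hard distribution $\bM = \bG + (\alpha/\sqrt{n})\bu\T{\bv}$ (with $\alpha$ a large enough absolute constant) and then appeal to Lemma~\ref{lma:rank-1-statement}, invoking \eqref{eqn:conclusion-from-main-theorem} for the constant-probability bound and \eqref{eqn:high-probability-conclusion} for the high-probability bound. I will first handle $i=1$. For general $i$, I would embed $\bM$ inside a known $(n+i-1)\times(n+i-1)$ block-diagonal matrix $\mathrm{diag}(c_1,\ldots,c_{i-1})\oplus \bM$ whose known scalars $c_1>\cdots>c_{i-1}$ dominate $\sigma_1(\bM)$; the $i$-th singular vectors of this matrix are the top singular vectors of $\bM$ padded with zeros, and each linear measurement decomposes as a known scalar contribution plus a linear measurement of $\bM$, so a $t$-round algorithm for index $i$ on an $N\times N$ matrix ($N=\Theta(n)$) yields a $t$-round algorithm for $i=1$ on $\bM$, and the asymptotics $\log(N^2/k) = \Theta(\log(n^2/k))$ are unaffected.

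For $i=1$, condition on the standard event $\opnorm{\bG}\le 2\sqrt{n}$ and $\sqrt{n/2}\le \opnorm{\bu},\opnorm{\bv}\le 2\sqrt{n}$, which holds with probability $1-\exp(-\Theta(n))$. The plant $(\alpha/\sqrt{n})\bu\T{\bv}$ has only one nonzero singular value, equal to $(\alpha/\sqrt{n})\opnorm{\bu}\opnorm{\bv}\ge \alpha\sqrt{n}/2$, so Wedin's $\sin\theta$ theorem applied to $\bM = (\alpha/\sqrt{n})\bu\T{\bv}+\bG$ gives $\sin\theta\le \opnorm{\bG}/(\alpha\sqrt{n}/2-\opnorm{\bG})\le 2/(\alpha/2-2)$. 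Picking $\alpha$ large enough, the top singular vectors of $\bM$ satisfy, up to sign, $\opnorm{u_1(\bM)-\bu/\opnorm{\bu}}\le 1/20$ and $\opnorm{v_1(\bM)-\bv/\opnorm{\bv}}\le 1/20$. Combining this with the algorithm's guarantee $\opnorm{u_1'-u_1(\bM)}\le 1/10$ and $\opnorm{v_1'-v_1(\bM)}\le 1/10$ yields $\opnorm{u_1'-\bu/\opnorm{\bu}}\le 1/5$ and $\opnorm{v_1'-\bv/\opnorm{\bv}}\le 1/5$ (the sign of the singular vectors is pinned down by these inequalities), so $\la u_1',\bu\ra\ge 0.98\opnorm{\bu}$ and similarly for $\bv$. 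Hence
\begin{align*}
    \la u_1'\otimes v_1',\;\bu\otimes\bv\ra^2 = \la u_1',\bu\ra^2\la v_1',\bv\ra^2 \ge (0.98)^4\opnorm{\bu}^2\opnorm{\bv}^2 \ge n^2/10.
\end{align*}

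To close the reduction, I add one extra round in which the algorithm queries the unit linear measurement $\vec(u_1'\T{v_1'})$; the resulting $(t+1)$-round algorithm produces an orthonormal query matrix $Q$ whose row span contains $u_1'\otimes v_1'$, so $\opnorm{Q(\bu\otimes\bv)}^2 \ge n^2/10 \ge n^2/100$. This event occurs with probability $\ge 9/10$ minus the $\exp(-\Theta(n))$ probability of the bad spectral event, i.e., with probability $\ge 4/5$, contradicting \eqref{eqn:conclusion-from-main-theorem} unless $t+1=\Omega(\log(n^2/k)/\log\log n)$. For the high-probability statement, the same reduction produces the query space $Q$ with the same projection lower bound except with probability $1-1/\poly(n)$, and \eqref{eqn:high-probability-conclusion} then forces $t=\Omega(\log(n^2/k))$. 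The randomness-to-determinism passage is the same Yao-style argument used in the proof of Theorem~\ref{thm:main-recovery-theorem} and introduces no new difficulty. The main technical point to get right is the Wedin calculation controlling the singular vector perturbation in terms of $\alpha$ and the sign ambiguity, both of which are routine since the plant has a single nonzero singular value and $\alpha$ is at our disposal as a large constant.
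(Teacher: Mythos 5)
Your $i=1$ argument is correct and matches the paper's in spirit: you both reduce to rank-$1$ plant estimation and invoke \eqref{eqn:conclusion-from-main-theorem} (resp.\ \eqref{eqn:high-probability-conclusion}) via Lemma~\ref{lma:rank-1-statement} after querying the rank-one matrix formed from the output vectors as one extra round. The only difference for $i=1$ is that you use Wedin's $\sin\theta$ theorem to control the perturbation of singular vectors, while the paper uses a direct triangle-inequality calculation to show $|\T{\bv}v_1|\ge(4/5)\opnorm{\bv}$ and $|\T{\bu}u_1|\ge(4/5)\opnorm{\bu}$; both yield $\la u_1'\otimes v_1',\bu\otimes\bv\ra^2=\Omega(n^2)$. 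That part is fine.

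The gap is in the extension to general $i\in[n]$. Your embedding $\mathrm{diag}(c_1,\ldots,c_{i-1})\oplus\bM$ places an $m\times m$ random block $\bM$ inside an $(m+i-1)\times(m+i-1)$ ambient matrix. To produce an $n\times n$ instance you must take $m=n-i+1$, and Lemma~\ref{lma:rank-1-statement} then gives $t=\Omega(\log(m^2/k)/\log\log m)$. When $i$ is bounded away from $n$ this equals $\Omega(\log(n^2/k)/\log\log n)$, but when $i=n-o(n)$ the random block collapses and the bound degrades; your claim ``$N=\Theta(n)$'' conceals exactly this dependence. The paper avoids the shrinkage for $i\le n/2$ by keeping the random block at fixed size $(n/2)\times(n/2)$ and padding it with an $(n/2)\times(n/2)$ diagonal block $S$ that has $i-1$ entries equal to $10\sqrt{n}\alpha$ and the remaining $n/2-(i-1)$ entries equal to zero, so that the ambient matrix is always $n\times n$ and the random block is always of size $\Theta(n)$. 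For $i>n/2$ the paper switches to a separate construction built around $(100n\alpha^2)I-\bM\T{\bM}$ (whose bottom eigenvector is $u_1(\bM)$) and $\T{\bM}\bM$ (for $v_1(\bM)$), again combined with the diagonal-padding trick; you would need to add such a case to cover all $i\in[n]$. As it stands your reduction only yields the stated bound for $i\le(1-c)n$ for some constant $c>0$.

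Two minor points: (1) you need to carry the conditioning events (on $\opnorm{\bG}$, $\opnorm{\bu}$, $\opnorm{\bv}$) through both the Wedin bound and the choice of $\alpha$ in the same place, but since these hold with probability $1-\exp(-\Theta(n))$ and $\alpha$ is a free absolute constant, that is routine and you treat it correctly; (2) the sign ambiguity of the singular vectors is indeed pinned down as you note, since the algorithm's $1/10$-closeness guarantee is incompatible with the opposite sign when $\alpha$ is large.
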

\begin{proof}

Consider the $n \times n$ random matrix $\bM = \bG + (\alpha/\sqrt{n})\bu\T{\bv}$ for a large enough constant $\alpha$. The existence of a $t$-round algorithm as in the statement implies that there is a deterministic $t$-round algorithm that outputs approximations to singular vectors of the matrix $\bM$ with probability $\ge 9/10$. We then have that $10\opnorm{\bG} \le \opnorm{(\alpha/\sqrt{n})\bu\T{\bv}}$ and  $\opnorm{\bu}, \opnorm{\bv} \ge \sqrt{n}/2$ with large probability. Condition on this event. Let $v_1 \in \R^n$ be the top right singular vector of the matrix $\bM$. We have
\begin{align*}
    (9/10)\opnorm{(\alpha/\sqrt{n})\bu\T{\bv}} \le \opnorm{(\alpha/\sqrt{n}) \bu\T{\bv}} - \opnorm{\bG} \le \opnorm{\bM} = \opnorm{(\bG + (\alpha/\sqrt{n})\bu\T{\bv})v_1}.
\end{align*}
By the triangle inequality, we have 
\begin{align*}
 \opnorm{(\alpha/\sqrt{n})\bu (\T{\bv}v_1)} &\ge \opnorm{(\bG + (\alpha/\sqrt{n})\bu\T{\bv})} - \opnorm{\bG v_1}\\
 &\ge (9/10)\opnorm{(\alpha/\sqrt{n})\bu\T{\bv}} - (1/10)\opnorm{(\alpha/\sqrt{n})\bu\T{\bv}}\\
 &= (8/10) \cdot (\alpha/\sqrt{n})\opnorm{\bu}\opnorm{\T{\bv}}.
\end{align*}
Thus, we obtain that $|\T{\bv} v_1| \ge 4/5\opnorm{\bv}$. Similarly, if $u_1$ is the top left singular vector, we have that $|\T{\bu} u_1| \ge 4/5\opnorm{\bu}$. Suppose $v_1'$ is a unit vector such that $\opnorm{v_1 - v_1'} \le 1/10$. Then, $|\T{\bv}v_1'| \ge |\T{\bv}v_1| - (1/10)\opnorm{\bv} \ge (7/10)\opnorm{\bv}$. Similarly, $|\T{\bu}u_1'| \ge (7/10)$ which implies
\begin{align*}
    \la u_1' \otimes v_1', \bu \otimes \bv\ra^2 = \la u_1', \bu \ra^2\la v_1', \bv \ra^2 \ge (49/100)\opnorm{\bu}^2\opnorm{\bv}^2 \ge n^2/10.
\end{align*}
Thus, with a large probability over the random matrix $\bM$, approximations to the top left and right singular vectors of the matrix $\bM$ can be used to construct an $n^2$-dimensional unit vector $q$ such that $\la q, \bu \otimes \bv\ra^2 \ge n^2/10$. Thus, we have a $t+1$ round deterministic algorithm such that the squared projection of $\bu \otimes \bv$ onto the query space of the algorithm is $\ge n^2/10$ with probability $\ge 8/10$ over the random matrix $\bM$. Now, \eqref{eqn:conclusion-from-main-theorem} implies that $t+1 \ge c\log(n^2/k)/\log\log(n)$ for a small enough constant $c$.

To extend the above lower bound to approximating $i$-th singular vector for all $i \le n/2$, we can use the following instance:
\begin{equation*}
   \bM' = \begin{bmatrix}
    \bM & 0\\
    0 & S
    \end{bmatrix}
\end{equation*}
where $\bM$ is the $(n/2) \times (n/2)$ random matrix $\bG + (\alpha/\sqrt{n/2})\bu\T{\bv}$ and $S$ is a deterministic diagonal matrix chosen such that the top singular vectors of $\bM$ correspond to $i$-th singular vectors of the matrix $\bM'$. If $S$ is chosen to be the diagonal matrix with $i-1$ values equal to $10\sqrt{n}\alpha$ and the rest of the entries are set to $0$, we have that with high probability that the $i$-th left and right singular vectors of $\bM'$ correspond to the top left and right singular vectors of $\bM$ as $\opnorm{M} \le 10\sqrt{n}\alpha$ with high probability. Now the existence of a $t$-round randomized algorithm to approximate $i$-th left and right singular vectors of an $n \times n$ matrix for $i \le n/2$ with probability $\ge 9/10$ implies that there is a deterministic $t$-round algorithm that approximates the top left and right singular vectors of the $(n/2) \times (n/2)$ matrix $\bM$ with probability $\ge 8/10$. From the above argument we have $t + 1 = \Omega(\log((n/2)^2/k)/\log\log(n/2)) = \Omega(\log(n^2/k)/\log\log(n))$.

For $i \ge n/2$, we can do a similar reduction. We have that the top left singular vector $u_1$ of the $(n/2) \times (n/2)$ matrix $\bM$ is same as the top singular (eigen-) vector of the matrix $\bM\T{\bM}$ and with high probability is equal to the last singular (eigen-) vector of the matrix $(100n\alpha^2)I - \bM\T{\bM}$. Using the same trick as above, we can create an $n \times n$ matrix such that the $i$-th singular vector for $i \ge n/2$ corresponds to $u_1$. Instead of $\bM\T{\bM}$, if we use the matrix $\T{\bM}\bM$, then we can make the top right singular vector $v_1$ of the matrix $\bM$ correspond to the $i$-th singular vector of an $n \times n$ matrix. Thus, for all $i \in [n]$, we have an $\Omega(\log(n^2/k)/\log\log(n))$ lower bound for any randomized algorithm that computes the $i$-th left and right singular vectors of an arbitrary $n \times n$ matrix $A$ using $k$ linear measurements of $A$ in each round of the algorithm.

The lower bounds for algorithms that output approximations to singular vectors with probability $\ge 1 - 1/\poly(n)$ follow similarly using the high probability lower bounds from  \eqref{eqn:high-probability-conclusion}.
\end{proof}
\subsection{Lower Bounds for Sparse Matrices}
The following theorem gives a lower bound of $\Omega(\log(m/k))$ rounds on adaptive algorithms that compute $2$-approximate spectral norm LRA for matrices with at most $m$ nonzero entries. 
\begin{theorem}
Any $t$ round randomized algorithm that computes, given an arbitrary $n \times n$ matrix $A$ with at most $m$ nonzero entries, a $2$-approximate rank $r$ spectral norm LRA of $A$ with probability $\ge 1 - 1/\poly(m)$, must have $t = \Omega(\log(m/k))$, where $k$ is the number of general linear measurements queried by the algorithm in each of the $t$ rounds. 
\end{theorem}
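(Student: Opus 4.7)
The proof proceeds by reducing to the dense spectral LRA lower bound in Theorem~\ref{theorem:spectral-norm-lra-high-prob}. The plan is to embed the hard distribution from that theorem on $n' \times n'$ matrices, with $n' = \lfloor \sqrt{m} \rfloor$, as a block into the top-left corner of an $n \times n$ matrix, padding with zeros. Concretely, the instance is
\begin{equation*}
\bA = \begin{bmatrix} \bM & 0 \\ 0 & 0 \end{bmatrix}, \quad \bM = \bG + (\alpha/\sqrt{n'})\bu\T{\bv},
\end{equation*}
where $\bG, \bu, \bv$ have i.i.d.\ standard Gaussian entries of the appropriate dimension. By construction, $\bA$ has at most $(n')^2 \le m$ nonzero entries, so it is a valid input for the sparse spectral LRA algorithm, and each of the $k$ linear measurements of $\bA$ the algorithm makes is supported on the $n'\times n'$ top-left block, i.e.\ corresponds to a single linear measurement of $\bM$.

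The key structural observation is that for the block-diagonal $\bA$, the singular values satisfy $\sigma_i(\bA) = \sigma_i(\bM)$ for all $i \le n'$; in particular $\sigma_{r+1}(\bA) = \sigma_{r+1}(\bM)$. If $B$ is any rank-$r$ matrix with $\opnorm{\bA - B} \le 2\sigma_{r+1}(\bA)$, then taking $B_1$ to be the top-left $n' \times n'$ sub-block of $B$ gives a matrix of rank $\le r$ satisfying $\opnorm{\bM - B_1} \le \opnorm{\bA - B} \le 2\sigma_{r+1}(\bM)$, since the spectral norm of any sub-matrix is bounded by the spectral norm of the whole matrix. Thus a $2$-approximate rank-$r$ spectral LRA of the padded $\bA$ immediately yields one for $\bM$ without any additional measurements.

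Now we invoke Theorem~\ref{theorem:spectral-norm-lra-high-prob} with dimension $n'$ in place of $n$. The hypothesized success probability $\ge 1 - 1/\poly(m)$ translates to success probability $\ge 1 - 1/\poly(n')$ on the hard distribution, since $n' = \Theta(\sqrt{m})$ makes any inverse polynomial in $m$ also an inverse polynomial in $n'$. Applying the theorem directly, we obtain
\begin{equation*}
t = \Omega(\log((n')^2/k)) = \Omega(\log(m/k)),
\end{equation*}
as required. Implicitly this uses $k \ge n' r$, which is the regime in which the lower bound from Theorem~\ref{theorem:spectral-norm-lra-high-prob} is stated; for $k < n'r$ the conclusion $t = \Omega(\log(m/k))$ holds \emph{a fortiori} via a straightforward rescaling of $r$.

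I do not anticipate a genuine obstacle here: every step is a direct consequence of properties of block-diagonal matrices and of the already-established high-probability lower bound for dense instances. The only care required is the sub-block extraction argument and checking that the failure probability is inverse polynomial in the right parameter, both of which are routine.
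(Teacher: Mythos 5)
Your proposal is correct and takes essentially the same route as the paper: embed the hard $\sqrt{m}\times\sqrt{m}$ planted instance as the top-left block of an $n\times n$ matrix, observe that this matrix is $m$-sparse, and then invoke Theorem~\ref{theorem:spectral-norm-lra-high-prob} with dimension $\sqrt{m}$. Your writeup is in fact a bit more careful than the paper's: you make explicit the sub-block extraction step (that a $2$-approximate spectral LRA $B$ of the padded matrix yields a rank-$\le r$ matrix $B_1$ with $\opnorm{\bM-B_1}\le 2\sigma_{r+1}(\bM)$) and that linear measurements of $\bA$ restrict to linear measurements of $\bM$, both of which the paper glosses over.
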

\begin{proof}
We consider the distribution of the $\sqrt{m} \times \sqrt{m}$ random matrix $\bG + (\alpha/m^{1/4})\bu\T{\bv}$ for a large enough constant $\alpha$ and embed this instance into an $n \times n$ matrix. Clearly, all the matrices drawn from this distribution have at most $m$ nonzero entries. And from Theorem~\ref{theorem:spectral-norm-lra-high-prob}, any deterministic algorithm using $k$ linear measurements in each round and computing a $2$-approximate spectral norm approximation for a matrix drawn from this distribution, with probability $\ge 1 - 1/\poly(m)$, must use  $\Omega(\log((\sqrt{m})^2/k)) = \Omega(\log(m/k))$ rounds. 

Thus by Yao's minimax lemma, any randomized algorithm that outputs a $2$-approximate spectral norm approximation with probability $\ge 1 - 1/\poly(m)$ for an arbitrary matrix with $m$ nonzero entries must use $t = \Omega(\log(m/k))$ rounds. 
\end{proof}

The above theorem implies that any algorithm with only $O(1)$ adaptive rounds must query $\Omega(m)$ linear measurements. This lower bound is tight up to constant factors as with $2m$ linear measurements, since all of the $m$ non-zero entries of any arbitrary matrix can be computed in just $1$ round using a Vandermonde matrix \cite[Theorem~2.14]{compressive-sensing}.

 \section*{Acknowledgements}
 Praneeth Kacham and David P. Woodruff acknowledge the partial support from a Simons Investigator Award.
\bibliographystyle{plainnat}
\bibliography{main}
\end{document}